\newcommand{\confversioncmd}[1]{}
  \renewcommand{\confversioncmd}[1]{#1}
\newcommand{\fullversioncmd}[1]{}
  \renewcommand{\fullversioncmd}[1]{#1}
\newcommand{\commentmarker}[1]{\colorbox{red}{\textcolor{white}{#1}}}
\newcounter{CommentCounter}
\newcommand{\commentUp}[3][]{
  \stepcounter{CommentCounter}
  {\scriptsize\commentmarker{\theCommentCounter}}
  \marginpar{
    \vspace{-#2}
    \tiny\raggedright
    \commentmarker{\theCommentCounter}
    \ifstrempty{#1}{}{\textcolor{red}{#1: }}
    {\scriptsize #3}
  }
}
\renewcommand{\comment}[2][]{\commentUp[#1]{0pt}{#2}}
\renewcommand{\comment}[2][]{}
\newcommand{\name}{\textsc{Tandem}\xspace}
\newcommand{\para}[1]{\vspace{1mm}\noindent\textbf{#1}} 
\newcommand{\parait}[1]{\noindent\textit{#1}} 
\setlist[description]{leftmargin=\parindent}
\setlist[enumerate,1]{leftmargin=17pt}
\theoremstyle{theorem}
\newtheorem{lemma}{Lemma}
\newtheorem{theorem}{Theorem}
\theoremstyle{definition}
\newtheorem{definition}{Definition}
\theoremstyle{tandemdef}
\newtheorem{property}{Property}
\newtheorem{game}{Game}[section]
\newtheorem{protocol}{Protocol}
\newtheoremstyle{tandemdef}
{}
{}
{}
{}
{\scshape}
{.}
{ }
{}
\tikzstyle{mathbox} = [inner sep=0pt, anchor=base,
\newcommand{\code}[1]{\textsf{#1}\xspace}
\newcommand{\randin}{\in_{R}}
\newcommand{\generator}{g}
\newcommand{\G}{\mathbb{G}}
\newcommand{\grouporder}{p}
\newcommand{\Zp}{\mathbb{Z}_{\grouporder}}
\newcommand{\biggenerator}{\overline{\generator}}
\newcommand{\biggeneratorh}{\overline{h}}
\newcommand{\bigG}{\overline{\G}}
\newcommand{\biggrouporder}{\overline{\grouporder}}
\newcommand{\bigZp}{\mathbb{Z}_{\biggrouporder}}
\newcommand{\schnorrpk}{h}
\newcommand{\tikzmagicarrow}[3]{
\begin{tikzpicture}[]
  \draw[#1](0,0) -- node[above=-0.5ex]{\ensuremath{#2}} (#3,0);
  \node[draw=none] (bottom) at (0,-0.5ex) {};
  \node[draw=none] (top) at (0,1ex) {};
  \node[draw=none] (lowerleft) at (bottom-|current bounding box.west) {};
  \node[draw=none] (topright) at (top-|current bounding box.east) {};
  \pgfresetboundingbox
  \draw[draw=none,use as bounding box] (lowerleft) rectangle (topright);
\end{tikzpicture}
}
\newcommand{\tikzlongarrow}[2]{\tikzmagicarrow{#1}{#2}{1.5}}
\newcommand{\tikzshortarrow}[2]{\tikzmagicarrow{#1}{#2}{0.65}}
\newcommand{\diagramsend}[1]{\tikzlongarrow{->}{#1}}
\newcommand{\diagramrecv}[1]{\tikzlongarrow{<-}{#1}}
\newcommand{\diagramshortsend}[1]{\tikzshortarrow{->}{#1}}
\newcommand{\diagramshortrecv}[1]{\tikzshortarrow{<-}{#1}}
\newcommand{\PaillierN}{N}
\newcommand{\secret}{\mathit{x}}
\newcommand{\xp}{\mathit{x_U}}
\newcommand{\xs}{\mathit{x_S}}
\newcommand{\xpfresh}{\mathit{\tilde{x}_U}}
\newcommand{\xsfresh}{\mathit{\tilde{x}_S}}
\newcommand{\chal}{\mathit{c}}
\newcommand{\phresp}{\mathit{r_U}}
\newcommand{\PK}{\textsf{PK}}
\newcommand{\Enc}{\mathsf{Enc}}
\newcommand{\?}{\stackrel{?}{=}}
\newcommand{\User}{U}
\newcommand{\Ucommit}{\tilde{U}}
\newcommand{\serrand}{t_S}
\newcommand{\phrand}{t_U}
\newcommand{\srand}{\hat{s}'}
\newcommand{\phcommit}{u_U}
\newcommand{\sercommit}{u_S}
\newcommand{\serResp}{r_S}
\newcommand{\phResp}{r_U}
\newcommand{\sresp}{s_{s'}}
\newcommand{\xusresp}{r}
\newcommand{\epoch}{\epsilon}
\newcommand{\pkeygen}{\code{HE.Keygen}^+}
\newcommand{\penc}[1][\ppk]{\mathbf{E}^{+}_{#1}}
\newcommand{\pdec}[1][\psk]{\mathbf{D}^{+}_{#1}}
\newcommand{\prand}{\kappa}
\newcommand{\ppk}[1][]{pk_{#1}}
\newcommand{\psk}[1][]{sk_{#1}}
\newcommand{\prandspace}{\mathcal{R}}
\newcommand{\ctxt}{c}
\newcommand{\pailliermodulus}{N}
\newcommand{\tokensignature}{\sigma}
\newcommand{\ksstoken}{\tau}
\newcommand{\ksstokencontentWithRange}[2][i]{(\bssignature,\allowbreak
  \kssfreshcommitment,\allowbreak
  \kssfreshcommitmentrand,\allowbreak \epoch, \ctxt,
  \xsencdelta, \prand, \allowbreak (\ctxt_{#1},\allowbreak \prand_{#1},\allowbreak \kssrandomizer_{#1})_{#2})}
\newcommand{\ksstokencontent}[1][i]{\ksstokencontentWithRange[#1]{#1=1,\ldots,\tokensecpar}}
\newcommand{\lengthdelta}{\ell_{\kssrandomizer}}
\newcommand{\lengthdeltaval}{\lceil \log \grouporder \rceil + \secpar + \log(\tokensecpar + 1) + 2}
\newcommand{\tokensecpar}{k}
\newcommand{\xsencdelta}{\delta}
\newcommand{\kssrandomizer}{\mu}
\newcommand{\ksscommitment}{\bscommitment}
\newcommand{\ksscommitmentrand}{\bscommitmentrand}
\newcommand{\kssfreshcommitment}{\bsfreshcommitment}
\newcommand{\kssfreshcommitmentrand}{\bsfreshcommitmentrand}
\newcommand{\deltacommitment}{\Delta}
\newcommand{\deltacommitmentrand}{\xi}
\newcommand{\tokendisclose}{\mathcal{D}}
\newcommand{\tokenhidden}{\mathcal{H}}
\newcommand{\tokenid}{id}
\newcommand{\enctokenid}{\overline{\tokenid}}
\newcommand{\commit}{\code{Commit}}
\newcommand{\extcommit}[2]{\code{ExtCommit}(#1, #2)}
\newcommand{\sksym}{x}
\newcommand{\sku}[1]{\sksym_{\ifstrempty{#1}{}{#1,}U}}
\newcommand{\sks}[1]{\sksym_{\ifstrempty{#1}{}{#1,}S}}
\newcommand{\xsenc}[1][]{\overline{\sks{#1}}}
\newcommand{\lastuser}{n}
\newcommand{\hash}{H}
\newcommand{\secpar}{\ell}
\newcommand{\TS}{TS\xspace}
\newcommand{\TSs}{TSs\xspace}
\newcommand{\discloseSetCommitment}{\Delta}
\newcommand{\discloseSetCommitmentRand}{\theta}
\newcommand{\kssrandomizerReveal}{\gamma}
\newcommand{\ksscommitmentrandReveal}{\rho}
\newcommand{\prandReveal}{\nu}
\newcommand{\kssrandomizerDiff}{\gamma}
\newcommand{\Adv}{\mathcal{A}}
\newcommand{\U}{U}
\newcommand{\query}[1]{\code{#1}\xspace}
\newcommand{\qRunTCP}{\query{RunTCP}}
\newcommand{\AdvB}{\ensuremath{\mathcal{B}}\xspace}
\newcommand{\challengebit}{b}
\newcommand{\tandemsetup}{\code{Setup}}
\newcommand{\registeruser}{\code{Register\-User}}
\newcommand{\obtainkstoken}{\code{Obtain\-KeyShare\-Token}}
\newcommand{\genshares}{\code{GenShares}}
\newcommand{\blockshare}{\code{BlockShare}}
\newcommand{\qTCP}{\code{TCP}}
\newcommand{\tcp}{\code{TCP}}
\newcommand{\tcprp}{\code{P}}
\newcommand{\tcpu}{\code{TCP.U}}
\newcommand{\tcpts}{\code{TCP.TS}}
\newcommand{\auxsym}{\textsf{in}}
\newcommand{\auxrp}{\auxsym_{SP}}
\newcommand{\auxu}{\auxsym_{U}}
\newcommand{\TCP}{\tcp}
\newcommand{\NatNumUpTo}[1]{[#1]}
\newcommand{\randspace}{\mathcal{R}}
\newcommand{\skI}{sk_I}
\newcommand{\pkI}{pk_I}
\newcommand{\treg}{r}
\newcommand{\tobtain}[1]{o_{#1}}
\newcommand{\tspend}[1]{s_{#1}}
\newcommand{\tblock}{b}
\newcommand{\blindsigname}{BSA}
\newcommand{\bssetup}{\code{\blindsigname.KeyGen}}
\newcommand{\bssign}{\code{\blindsigname.BlindSign}}
\newcommand{\bsverify}{\code{\blindsigname.Verify}}
\newcommand{\bscommitment}{C}
\newcommand{\bscommitmentrand}{r}
\newcommand{\bsfreshcommitment}{\tilde{C}}
\newcommand{\bsfreshcommitmentrand}{\tilde{r}}
\newcommand{\bssignature}{\sigma}
\newcommand{\bspk}{pk_{\sigma}}
\newcommand{\bssk}{sk_{\sigma}}
\newcommand{\abcpku}{pk_U}
\newcommand{\abcsku}{sk_U}
\newcommand{\cred}{\textsf{cred}}
\newcommand{\revocationlist}{\code{rlist}}
\newcommand{\nrattributes}{k} 
\begin{document}

\author*[1]{Wouter Lueks}
\author[2]{Brinda Hampiholi}
\author[3]{Greg Alp\'ar}
\author[4]{Carmela Troncoso}

\affil[1]{SPRING Lab, EPFL; E-mail: wouter.lueks@epfl.ch}
\affil[2]{Philips Research, all work done while a PhD student at Radboud University; E-mail: brinda.hampiholi@philips.com}
\affil[3]{Open University of the Netherlands, and Radboud University; E-mail: greg.alpar@ou.nl}
\affil[4]{SPRING Lab, EPFL; E-mail: carmela.troncoso@epfl.ch}
  
\title{\huge \name: Securing Keys by Using a Central Server While Preserving Privacy}
\runningtitle{\name: Securing Keys by Using a Central Server While Preserving Privacy}

\begin{abstract}
{
Users' devices, e.g., smartphones or laptops, are typically incapable of securely storing and processing cryptographic keys.
We present \name, a novel set of protocols for securing cryptographic keys
with support from a central server.
\name uses \emph{one-time-use key-share tokens} to preserve users' privacy
with respect to a malicious central server. Additionally, \name enables users to block their keys if they lose their device, and it enables the server to limit how often an adversary can use an unblocked key. We prove \name's security and privacy properties, apply \name to attribute-based credentials, and implement a \name proof of concept to show that it causes little overhead.
}
\end{abstract}
\keywords{privacy, threshold cryptography, anonymity}

\journalname{Proceedings on Privacy Enhancing Technologies}
\DOI{Editor to enter DOI}
\startpage{1}
\received{..}
\revised{..}
\accepted{..}

\journalyear{..}
\journalvolume{..}
\journalissue{..}

\maketitle

\section{Introduction}

Privacy-preserving technologies are increasingly deployed in real-life
scenarios, such as identity management, e.g.,
Kryptic\footnote{\url{https://kryptik.com/}} and IRMA~\cite{IRMAhotpets2017}, and
digital currencies~\cite{MiersG0R13} and products, e.g., the Vega Protocol\footnote{\url{https://vegaprotocol.io/}}. 
These technologies rely on cryptographic protocols and thus their security and privacy properties hinge on the security of the underlying cryptographic keys.

Privacy-preserving cryptographic protocols are typically executed on users' devices
such as phones, tablets, and laptops. These devices, however, are notoriously
hard to secure and the cryptographic keys are often stored or processed insecurely.
A common approach to increase the security of the users' keys is to rely on 
threshold cryptography to not store the entire key in one device. 
Instead, these protocols secret-share the user's key between two or more parties, 
and enable the key's use without ever reconstructing it.
This prevents passive and active attackers from learning the key.
Typical solutions, such as Shatter~\cite{atwater2016shatter}, assume the user has multiple trustworthy devices that can hold shares. 
While effective, this approach has availability and usability issues. To address
these issues, some approaches instead
rely on a highly available central server to store the shares that are not held in the user's device.

Na\"ively using threshold cryptography with a central server, however, can harm
the users' privacy. As the central server is involved in every key use, 
it learns the users' key-usage patterns. This information can enable deanonymization of anonymous transactions by correlating 
key uses with public activities; e.g., by correlating key usages with updates to a blockchain ledger~\cite{JawaheriSBE18,GoldfederKRN18}.

We present \name, a set of protocols that augment threshold-cryptographic
schemes in order to enable
the use of a central server as share holder for increased security, while \emph{preserving} the privacy of key-usage patterns. 
In \name, users obtain \emph{one-time-use key-share tokens} from the central server which contain 
randomized versions of the central server's key share. 
To use their key, users send a key-share token to the server via an anonymous communication channel.
The embedded randomized key share enables the server to run the threshold-cryptographic protocol \emph{without} learning the user's identity.

The construction of key-share tokens decouples the obtaining and using of
tokens.
The one-time property provides two security features: the blocking of keys and
the rate limiting of unblocked keys. These hold even against active
attackers that can compromise the user's device, and thus obtain the user's key share and
key-share tokens, and can observe usage of the key. 

\name can be used to secure the keys of any cryptographic scheme (e.g., encryption, signature, or payments) 
for which a linearly randomizable threshold-cryptographic version exists; this version  
does not require information that identifies the user besides the key; and this
version is secure when operating on linearly randomized keys. 
Not all threshold-cryptographic schemes satisfy these properties. 
For example, \name cannot be applied to existing threshold DSA schemes because either they are 
multiplicative~\cite{MacKenzieR04} or they require identifying information~\cite{GennaroGN16,GennaroG18}.

To demonstrate the potential of \name, we apply it to a threshold version of BBS+ attribute-based credentials (ABCs). ABCs~\cite{BBS+AuSM06,brands2000rethinking,Idemix1-CamenischH02,Idemix2-CamenischL02} protect users' anonymity during authentication.
Introducing a central server that could learn which services users access defeats the very purpose of ABCs. 
\name enables users to securely use their credential keys \emph{without} the central server learning who is using the key, 
preserving the user's privacy even if the central server and the service provider collude. 
\name can be used to extend many privacy-enhancing technologies such as other attribute-based credential 
schemes~\cite{brands2000rethinking,Idemix1-CamenischH02,Idemix2-CamenischL02}, group signatures~\cite{BichselCNSW10,CamenischL04}, 
and electronic cash schemes~\cite{CamenischHL05,MiersG0R13}.

\name can also be beneficial to non-privacy-preserving cryptographic primitives.
For instance, it can be used with threshold variants of Schnorr signatures~\cite{GennaroJKR07} or ElGamal-based encryption schemes~\cite{ElGamal84,ShoupG02} 
to increase key security without revealing usage patterns to the third party storing the key shares.

We evaluate the practicality of the \name protocols on a prototype C implementation. When using the key, \name introduces a 50\,--\,100\,ms overhead on the central server with respect to traditional threshold-cryptographic solutions. For the user, it only adds 5--25\,ms overhead. This is negligible with respect to the delay introduced by the use of
anonymous communications, which is needed in privacy-preserving use cases.

In summary, we make the following contributions:

\sloppypar\para{\checkmark} We formalize the security and privacy properties required when using a threshold-cryptographic protocol with a central server.

\sloppypar\para{\checkmark} We introduce \name. It enables the use of threshold-cryptographic protocols with a central server without revealing key-usage patterns. \name also enables blocking and rate limiting of key usage against active attackers.

\para{\checkmark} We provide a threshold version of BBS+ attribute-based credentials~\cite{BBS+AuSM06}, and show how \name augments its security while retaining its privacy properties.

\para{\checkmark} We prove the security and privacy of \name, and validate its practicality on a prototype implementation. The time-critical computations take less than 100\,ms, imposing reasonable overhead on server and users.

 \section{Related Work}
\label{sec:related-work}

Existing solutions to protect cryptographic keys fall into two categories: \textit{single-party} and \textit{decentralized}. Securing single-party software-based solutions is hard~\cite{Hern15,VeenFLGMVBRG16,WIREDinsecurelaptops,Lipp2018meltdown}. Secure hardware~\cite{ekberg2014untapped,marforio2013secure,sandhu2005peer} increases security but it is expensive, is not always available or not accessible to developers~\cite{McGillionDNA15, TrustyTEE}, is harmful to usability~\cite{SDasFC18}, or is not flexible enough to run advanced protocols.

Threshold cryptography~\cite{Desmedt87,boyd1989digital}
strengthens cryptographic protocols by distributing the user's secret key among several parties. 
There exist a number of threshold encryption and signature schemes~\cite{desmedt1992shared,rabin1998simplified, gennaro2000robust,GennaroJKR07, Shoup00, almansa2006simplified, peeters2008practical, hazay2012efficient}, with versions that can run in users' personal devices~\cite{atwater2016shatter}.
Other works have tackled more complicated protocols, e.g., to distribute the user's secret key in attribute-based credentials~\cite{brands2000rethinking}, or to make threshold-cryptographic versions of zero-knowledge proofs~\cite{keller2012efficient}. 

Many works propose systems in which the user's secret key is shared between a user's device and a central server~\cite{mackenzie2001networked, camenischvirtual, boneh2001method, boneh2004fine, libert2003efficient, buldas2017server}. These schemes enable users to block their keys, but do not provide user privacy towards the central server as it is essential that users authenticate themselves to the server to enable the server to find the correct key. Adding anonymous communication or retrieval mechanisms does not resolve this privacy problem.
Camenisch et al.~\cite{camenischvirtual} ensure privacy to some extent in signature schemes by blinding the message being signed during the threshold protocol with the server. Yet, the server learns when and how often the user uses her signing key. Hence, users are vulnerable to timing attacks~\cite{JawaheriSBE18}. Brands' scheme~\cite{brands2000rethinking} protects against timing attacks as long as the shareholder cannot store a timed log of operations (e.g., when it is a smart card), but not when keys are shared with an online server.  

\name is designed to increase the privacy of these threshold-cryptographic solutions. We compare the privacy properties obtained when using \name with those in previous proposals in Table~\ref{tab:related-work}. We consider three privacy aspects: user's anonymity when running the threshold protocol (i.e., need to authenticate); data hiding (e.g., signed message) in the protocol from the server; and usage pattern hiding to avoid timing attacks. Generic schemes focus on the security of the key and thus provide no privacy. The special-purpose designs only protect data involved in the protocol.

\newcolumntype{C}[1]{>{\centering\let\newline\\\arraybackslash\hspace{0pt}}m{#1}}
\newcommand{\tableno}{$\times$}
\newcommand{\tableyes}{\checkmark}
\begin{table} 
  \centering
  \begin{threeparttable}
  \caption{Comparison of generic, special-purpose and Tandem-augmented threshold-cryptographic
    protocols (TCPS).}\label{tab:related-work}
  \begin{tabular}{lccc}
 & Generic      & S. Purpose      & with \\
     & e.g., \cite{GennaroJKR07,ShoupG02} & \cite{camenischvirtual,brands2000rethinking}      & \name \\
    \midrule
    Anonymous key usage     & \tableno & \tableno   & \tableyes \\
    Hide protocol data      & \tableno & \tableyes  & (*)  \\
    Hide key-usage patterns & \tableno & \tableno   & \tableyes \\
  \end{tabular}
  \begin{tablenotes}
   \item[*] Achieved by \name if the underlying TCP does.
  \end{tablenotes}
  \end{threeparttable}
\end{table}

Password-hardening services~\cite{EverspaughCSJR15,CamenischLN15} use decentralization to increase security of authentication servers against brute-force attacks. Similarly to \name, these schemes introduce a hardening server to rate limits or block requests from the main authentication server.  However, in the password scenario the hardening server is only accessed by the authentication server. Therefore, there are no privacy concerns and indeed these techniques do not provide any privacy protection. 

Single-password authentication schemes~\cite{AcarBK13, IslerK17, RossJMBM05} hide the user's password from potentially malicious authentication servers. Similar to \name, \cite{AcarBK13, IslerK17} use a central server to help the user to authenticate to the authentication server. These schemes, however, 
reconstruct the user's authentication key in the user's device, and thus cannot protect against active attackers. \section{Problem Statement}
\label{sec:system}

We consider a scenario where \textit{users} are required to perform cryptographic operations 
on insecure devices (i.e., without secure hardware) to interact with a \textit{service provider} (SP). 
To keep their keys safe, users use a \emph{central server} to run threshold-cryptographic protocols (TCPs).
The TCP protects the \emph{security} of users' keys against \emph{active adversaries} that can compromise a users device, 
and that can observe the key being used in protocols. As long as the central server remains honest, 
security is guaranteed: key usage can be blocked and rate-limited.

Users wish to retain the \emph{privacy} they had with respect to the SP before
using the central server with TCPs, even if the central server is malicious and
colludes with the SP. 
We call an execution of the protocol between the user and the central server a \emph{transaction}. 

\para{Security and privacy properties.} We formalize the desired security and privacy properties of the system.

\begin{property}[Key security]
\label{property:protect-key}
The central server should \emph{prevent unauthorized use of the user's key} even if the
user's device is compromised. The user must be able to \emph{block a compromised key} at the
server so that the attacker can no longer use it.
\end{property}
Any solution that recomputes the full user's key on the user's device, e.g., by
deriving it from a user-entered passphrase or by interacting with a central
server~\cite{AcarBK13,IslerK17}, does not satisfy this property: an attacker who compromises the user's device
can observe the full key. Thereafter, the attacker can use the key indefinitely, making blocking impossible.

\begin{property}[Key-use privacy]
\label{property:privacy}
The user must have \emph{privacy of key use in transactions}.
The central server must not be able to distinguish between two users performing transactions even 
if it colludes with the service provider (SP). 
(Unless the SP could distinguish the users, in which case collusion leads to a trivial and unavoidable privacy breach).
\end{property}

\begin{property}[Key rate-limiting]
 \label{property:rate-limit}
 Users must be able to \emph{limit the rate of usage of their keys} in a given interval of time called an \emph{epoch}.
\end{property}

\para{Threat model.}
We assume the central server is available, and follows the protocols to enforce 
key security and rate-limits on the users' keys (\cref{property:protect-key} and
\cref{property:rate-limit}) in the face of active attackers.
Servers that violate this assumption cannot use the user's key as long as the user's device
remains honest. 

For privacy, the central server may be malicious, i.e., interested in breaching
the privacy of users by trying to learn which keys and services they use
(\cref{property:privacy}). It can collude with the SP.

\para{Why na\"ive centralization does not work.} 
Consider a user that secret-shares her key with the central server. When she needs to run a threshold-cryptographic protocol, she authenticates against the central server
and jointly executes the TCP with it.
This scheme offers key security (\cref{property:protect-key}): The server alone cannot use the user's key and if an attacker compromises the user's device the user can authenticate to the central server and request blocking. It also provides key rate-limiting (\cref{property:rate-limit}): the central server can easily enforce a limit on the number of times the key is used. However, since the user is identified while using the key, the scheme does not achieve key-use privacy (\cref{property:privacy}).

The lack of key-use privacy has implications when the interactions between the user and the SP are anonymous (e.g., showing an anonymous credential). An SP colluding with the central server can exploit time correlations between the times when the authenticated user interacts with the central server and when the anonymous user interacts with the SP to de-anonymize the user. The anonymity set of the user is reduced to the authenticated users interacting with the central server around the transaction time. This attack has been used in the early days of Tor to identify users and hidden services~\cite{AbbottLLP07,OverlierS06}. As this attack relies solely on time correlation between accesses, it cannot be prevented by making the messages seen by the central server and the SP during the TCP cryptographically unlinkable~\cite{brands2000rethinking}.

Straightforward approaches to prevent time-correlation attacks such as introducing delays and introducing dummy requests are difficult to use in practice. Either operations need to be delayed for a long time, ruling out real-time applications such as showing an anonymous credential or performing a payment; or impose high overhead on the users and the central server, and are difficult to generate \cite{BalsaTD12,ChowG09}.

\section{\name at a Glance}

We introduce \name, a set of protocols that wrap threshold-cryptographic schemes to provide privacy when the user's key is shared with a central server (\emph{the \name server, \TS}). 
\name ensures key security and key rate-limiting, formalized in Game~\ref{game:security}, and key-use privacy, formalized in Game~\ref{game:privacy}, if the \TS colludes with service providers (SPs), and Definition~\ref{def:honest-sp-privacy}, if it does not. 

For simplicity, we assume that there is only one \name server (\TS). Secret-sharing the key with multiple \name servers would increase security and/or robustness, without detriment to privacy. We explain how to do so in Appendix~\ref{sec:appendix-multiple-ts}. We also assume that users can use an anonymous communication channel~\cite{PiotrowskaHEMD17, TorDingledineMS04} to communicate with the \TS and SPs to protect their privacy at the network layer.

 \begin{figure}
 \includegraphics[trim={3mm 1mm 0 0},clip,scale=0.98]{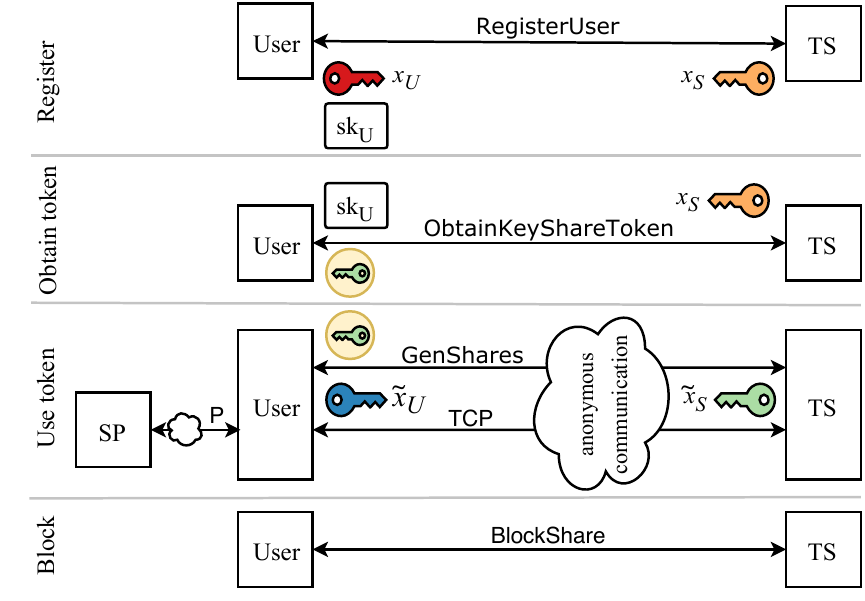}

 \newcommand{\keysharetokenimg}{\raisebox{-0.8mm}{\includegraphics[trim={2mm 0.6mm 0 0},clip,scale=0.7]{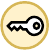}}}
 \caption[nothing]{\label{fig:tandem-overview} During registration, the user and \TS derive key-shares $\xp$ and $\xs$ of a secret $\secret = \xp + \xs$. The user stores its authentication key $\abcsku$. Users authenticate before obtaining key-share tokens (\keysharetokenimg), containing a randomized server key share $\xsfresh$. To use her key $\secret$ anonymously, the user connects to the \TS via an anonymous channel and sends a key-share token. The \TS recovers the key share $\xsfresh$ and uses it to run the \TCP with the user share $\xpfresh$ (the user and SP run protocol \tcprp). The user can block her key-share tokens at any time without any key. Inputs are shown above the arrows, outputs below.} 
\end{figure}

Suppose that Alice uses Schnorr's identification protocol to authenticate
herself to her bank. Let $\grouporder$ be the group order. 
To prevent others from accessing her account, Alice protects her secret key $\secret$ using the threshold-cryptographic version of Schnorr's protocol in \cref{fig:threshold-schnorr}. The bank knows Alice's public key $\schnorrpk$. Alice creates shares $\xs$ and $\xp$ such that $\secret = \xs + \xp$. She gives $\xs$ to a \name server, keeping $\xp$ on her device. 
We sketch how \name can wrap the threshold-cryptographic Schnorr protocol so that Alice retains the advantage of using a central server, without revealing her actions to this party.
Fig.~\ref{fig:tandem-overview} illustrates the process.

\name uses an additive homormorphic encryption scheme $(\penc, \pdec)$~\cite{JoyeL13}, and a blind signature scheme~\cite{BaldimtsiL13,BBS+AuSM06}. The \TS generates key-pairs $(\ppk, \psk)$ and $(\bspk, \bssk)$ for these respective schemes and publishes $\ppk$ and $\bspk.$

\begin{figure}[tb]
  \centering
    \begin{tabular}{lclcl}
      Server & & User & & SP \\
      {\unboldmath$\xsfresh \in \Zp$} & & {\unboldmath$\xpfresh \in \Zp$} & & {\unboldmath$\schnorrpk = \generator^{\secret}$} \\
      \midrule
 
      $\serrand \randin \Zp$ & & $\phrand \randin \Zp$ & &\\ 
      $\sercommit=\generator^{\serrand}$ & $\diagramshortsend{\sercommit}$ & $\phcommit = \generator^{\phrand} $ \\
      & & $u = \sercommit \phcommit$ & $\diagramshortsend{u}$ \\
      & $\diagramshortrecv{\chal}$ & &$\diagramshortrecv{\chal}$ & $\chal \in_R \Zp$\\
      $\serResp=\serrand+\chal\xsfresh$ & $\diagramshortsend{\serResp}$ & $\phresp = \phrand +\chal\xpfresh$ & & \\
      & & $\xusresp = \phResp+\serResp$ & $\diagramshortsend{\xusresp}$ & $u \? \generator^{\xusresp}h^{-\chal}$\\
      \end{tabular}
      \caption{\label{fig:threshold-schnorr} A threshold version of Schnorr's proof of identity. The user and the server respectively hold the (fresh) shares $\xpfresh$ and $\xsfresh$ of the private key $\secret = \xpfresh + \xsfresh$ corresponding to the public key $\schnorrpk = \generator^{\secret}$. They jointly compute a proof of knowledge of $\secret$ such that $\schnorrpk = \generator^{\secret}$. We write \textsf{TCP} for the threshold-cryptographic protocol between the user and the server, and \textsf{P} for the protocol between the user and the SP.}
\end{figure}

\para{Registration.}
Alice registers with the \TS using the \registeruser protocol. 
During registration, Alice and the \TS jointly compute long-term shares $\xp$ and $\xs$ of a long-term key $\secret$. Alice generates a public-private key pair $(\abcpku, \abcsku)$, and sends $\abcpku$ to the \TS. 
The \TS sends an homomorphic encryption $\xsenc = \penc(\xs)$ to Alice.
Because $\xsenc$ is encrypted against the \TS's key, Alice does not learn anything about the share $\sks{}$.

\para{Obtain token.}
Key-share tokens enable Alice to later anonymously use her key (see below). To obtain a token, Alice runs the \obtainkstoken protocol with the \TS. First, she uses $\abcsku$ to authenticate to the \TS. Then, Alice and the \TS construct a one-time-use key-share token containing a randomized version of the \TS's key share $\xs$.
To do so, Alice picks a large $\xsencdelta$ and computes $\ctxt = \xsenc \cdot \penc(\xsencdelta) = \penc(\xs + \xsencdelta)$. Then, she sends to the \TS a commitment $\bscommitment$ commiting to both $\ctxt$ and her key $\abcsku$. She proves that the committed ciphertext $\ctxt$ was constructed by additively randomizing $\xsenc$ and that $\bscommitment$ contains the correct private key. If the proof is correct, the \TS blindly signs the commitment and Alice receives a signature $\bssignature$ on a rerandomized commitment $\bsfreshcommitment$. Alice stores these in her one-time-use key-share token $\ksstoken = (\bssignature, \bsfreshcommitment, \ctxt, \xsencdelta)$.
The \TS can limit the number of issued tokens, enforcing a rate limit on Alice's key.

Key-share tokens may seem similar to passwords: both unlock functionality. However, unlike passwords, key-share tokens can be verified and used \emph{without} knowing the user's identity. Key-share tokens contain a randomized key share $\xsfresh = \pdec(\ctxt)$ essential for the TCP. Hence, \name cannot be replaced by a password-hardening service~\cite{EverspaughCSJR15,CamenischLN15}.
The randomized key shares contained in the tokens also distinguish them from traditional eCash tokens~\cite{ChaumFN88,CamenischHL05,MiersG0R13}.

\para{Using keys.}
After obtaining a token $\ksstoken = (\bssignature, \bsfreshcommitment, \ctxt, \xsencdelta)$, Alice can anonymously run threshold-cryptographic protocols with the \TS using the \genshares protocol. Alice contacts the \TS via an anonymous channel, and sends the signature $\bssignature$, the commitment $\bsfreshcommitment$, and the randomized encryption $\ctxt$ of the \TS' key share. She proves that $\ctxt$ is committed to in $\bsfreshcommitment$ and that the private key in $\bsfreshcommitment$ does not correspond to a blocked public key. The \TS cannot recognize the blindly signed commitment nor the ciphertext $\ctxt$ it contains.

If the signature and proof are correct, the \TS derives the fresh share $\xsfresh = \pdec(\ctxt) \pmod{\grouporder} = \xs + \xsencdelta \pmod{\grouporder}$. The size of $\xsencdelta$ ensures that $\xsfresh$ cannot be linked to the long-term share $\xs$, thus hiding Alice's identity from the \TS. Alice derives her fresh share $\xpfresh = \xp - \xsencdelta \pmod{\grouporder}$. By construction, $\secret = \xsfresh + \xpfresh = \xp + \xs \pmod{\grouporder}$.

Alice can now run the \textsf{TCP} protocol in Fig.~\ref{fig:threshold-schnorr} to prove her identity to the bank. The \TS and Alice use the just computed respective fresh shares $\xsfresh$ and $\xpfresh$. The \TS does \emph{not} learn which user ran the \textsf{TCP} protocol.
Note that the \TS never communicates directly with service providers. Therefore, Alice can use \name without the SP's knowledge.

\para{Blocking keys.}
Alice can request the \TS to block her key by using the \blockshare protocol. She authenticates to the \TS using any pre-defined means (e.g., a PUK, or a passphrase; knowledge of the private key $\abcsku$ is \emph{not} required). The \TS adds Alice's public key $\abcpku$ to the list of blocked keys, causing existing tokens to become invalid, and she cannot create new tokens.

Storing the private key $\abcsku$ and unused tokens on the user's device is much safer than storing the full key directly. Even if an attacker obtains $\abcsku$ and unused tokens, \name guarantees key security and key rate-limiting. Running \blockshare immediately invalidates existing tokens, and prevents the attacker from obtaining new ones. As a result, the attacker can no longer use the user's key $\secret$. This is not the case if the attacker can obtain $\secret$ directly. 

\para{Preventing time correlation.}
To preserve her privacy, the actions of obtaining tokens---where Alice is authenticated---and using tokens---where Alice is anonymous---\emph{must} be uncorrelated, i.e., tokens should not be obtained right before being used. To avoid correlation, Alice can configure her device to obtain tokens at random or regular times (e.g., every night), ensuring that tokens are always available. Suppose Alice obtains fresh tokens every morning, then a time-line of
registration ($\treg$), obtaining tokens ($\tobtain{i}$), using tokens ($\tspend{i})$, and blocking the key ($\tblock$) events might look as follows:\\
\begin{center}
\vspace{-8mm}
\begin{tikzpicture}
\coordinate (T0) at (0,0);
\coordinate (T1) at (0.5,0);
\coordinate (T2) at (3.75,0);
\coordinate (T3) at (7,0);
\coordinate (T4) at (8,0);

\draw[->] (T0) to node[pos=0.95,above]{time} (T4);
\draw[->] (T0) to node[pos=0.95,above]{time} (T4);

\newcommand{\event}[3]{
  \coordinate (#1) at ($ #2 $);
  \draw ($ (#1) + (0,2pt) $) to ($ (#1) + (0, -2pt) $);
  \node (node#1) at ($ (#1) + (0, -8pt) $) {#3};
}
\event{t2}{(T1)!.0!(T3)}{$\treg$}
\event{t3}{(T1)!.3!(T3)}{$\tobtain{1}$}
\event{t4}{(T1)!.35!(T3)}{$\tobtain{2}$}
\event{t5}{(T1)!.4!(T3)}{$\tobtain{3}$}

\event{s3}{(T1)!.70!(T3)}{$\tspend{1}$}
\event{s2}{(T1)!.77!(T3)}{$\tspend{2}$}

\event{b}{(T3)!0.25!(T4)}{$\tblock$}

\end{tikzpicture} \vspace{-4mm}	
\end{center}
Note that the obtain and use events do not necessarily follow each other and are \emph{not} correlated. The third token $\tobtain{3}$ is unused when Alice blocks her key at time $\tblock$. This token can thereafter not be used.

\section{Cryptographic Preliminaries}
\label{sec:preliminaries}

Let $\secpar$ be a security parameter. Throughout, $\G$ is a cyclic group of prime order $\grouporder$ (of $2\secpar$ bits) generated by $\generator.$ We write $\Zp$ for the integers modulo $\grouporder$; by $\NatNumUpTo{n}$ we denote the set $\{0, \ldots, n - 1\}$; and by $a \randin A$ we denote that $a$ is chosen uniformly at random from the set $A$. We use a cryptographic hash function $\hash: \{0,1\}^* \to \Zp$ that maps strings to integers modulo $\grouporder$.
For reference, Table~\ref{tab:building-blocks} in Section~\ref{keyshare-tokens} summarizes the notation used by \name's building blocks, and Table~\ref{tab:notation-tandem} in Section~\ref{keyshare-tokens} explains frequently-used symbols in \name. 

\subsection{Cryptographic Building Blocks}
\name relies on a couple of cryptographic building blocks. We use an additive homomorphic encryption scheme given by the algorithms $\pkeygen$, $\penc, \pdec$ with plaintext space $\mathbb{Z}_{\PaillierN}$
and space of randomizers $\randspace$. Let $(\ppk, \psk) = \pkeygen(1^{\secpar})$ be a key-pair, then we write $\ctxt = \penc(m; \prand)$ to denote the homomorphic encryption of the message $m \in \mathbb{Z}_{\PaillierN}$ using randomness $\prand \in \prandspace$. The scheme is additively homomorphic, so
\begin{equation*}
  \mathbf{E}^+_{pk}(m_1 ; \prand_1) \mathbf{E}^+_{pk}(m_2; \prand_2) =
  \mathbf{E}^+_{pk}(m_1 + m_2\;(\textrm{mod}\,{\PaillierN}) ; \prand_1 \prand_2).
\end{equation*}
Our proof of concept uses Joye and Libert's encryption scheme~\cite{JoyeL13}, see Appendix~\ref{sec:appendix:joye-libert}, but Paillier's scheme~\cite{Paillier99} would also work.

\name uses two computationally hiding and binding commitment schemes. First, by $\commit(m,r)$ we denote a commitment function that takes a message $m \in \Zp$ and a randomizer $r \in \Zp$. Analogously, we define $\commit((m_1,\allowbreak \ldots,\allowbreak m_k),\allowbreak r)$ to commit to a tuple of messages. We instantiate this scheme using Pedersen's commitments~\cite{Pedersen91}. Second, we denote by $\Delta = \extcommit{m}{r}$ with $m \in \{0,1\}^*, r \in \{0,1\}^{2\secpar}$ an extractable commitment scheme~\cite{SantisCP00}. That is, in our reductions, we can extract the input $m$ used to create a commitment $\Delta$. For example, the instantiation $\extcommit{m}{r} = H(m \| r)$ is extractable in the random oracle model for $H$.

To construct key-share tokens the \TS signs them using a blind signature scheme supporting attributes~\cite{BaldimtsiL13} given by the following protocols:
\begin{itemize}
\item The signer runs $(\bspk, \bssk) = \bssetup(1^\secpar, \nrattributes)$ to setup a system for signatures on $\nrattributes$ attributes. It obtains a public-private key pair $(\bspk, \bssk)$.
\item The interactive protocol $\bssign(\bspk, C)$ is run by a user and the signer on input of the signer's public key $\bspk$ and a Pedersen commitment $\bscommitment = \commit((a_1,\allowbreak \ldots,\allowbreak a_{\nrattributes}), \bscommitmentrand)$ to the attributes. The signer takes its private key $\bssk$ as private input, whereas the user takes the attributes $(a_1, \ldots, a_{\nrattributes})$ and the randomizer $r$ as private input. At the end of protocol, the user obtains the tuple $(\bssignature, \bsfreshcommitment, \bsfreshcommitmentrand)$ where $\bssignature$ is a signature on $\bsfreshcommitment = \commit((a_1, \ldots, a_{\nrattributes}), \bsfreshcommitmentrand)$, a fresh commitment to the attributes. The issuer does not learn the values of the attributes nor the resulting signature $\tokensignature$.
\item The verifier calls $\bsverify(\bspk, \bssignature, \bsfreshcommitment)$ to verify the signature $\bssignature$ on commitment $\bsfreshcommitment$. The algorithm outputs $\top$ if the signature is valid, and $\bot$ otherwise.
\end{itemize}
We require that the scheme has the blind signing property~\cite{BaldimtsiL13}; and that signatures are unforgeable~\cite{BaldimtsiL13}. For example, the scheme by Baldimtsi and Lysyanskaya~\cite{BaldimtsiL13} satisfies these properties.

\subsection{Threshold-Cryptographic Protocols}
\label{sec:tcps}
In this paper, we focus on cryptographic protocols run between a user and a service provider, e.g., showing a credential to an SP or spending an electronic coin. 
The threshold-cryptographic version of such a protocol splits the user's key $\secret$ and the user's side of the original protocol in two parts, run by different parties. Each party operates on a secret-share of the user's key. Security of the threshold-cryptographic protocol (TCP) ensures that a large enough subset of shares (two in the case of two parties) are required to complete the protocol.

We focus on TCPs where the user's side of the protocol is distributed between the user and the \TS. After registration, the user and the \TS hold the shares $\xp$ and $\xs$ of $\secret$. After running $\genshares$ with a new token, the user and \TS hold fresh key shares $\xpfresh$ and $\xsfresh$. They then run the TCP protocol, which we denote as:
\begin{equation}
  \label{eq:tcp-protocol}
  \tcprp(\auxrp) \leftrightarrow
  \tcpu(\xpfresh, \auxu) \leftrightarrow
  \tcpts(\xsfresh),
\end{equation}
where the SP, the user and the \TS respectively run the interactive programs $\tcprp$, $\tcpu$ and $\tcpts$. The user mediates all interactions between the service provider and the \TS.
The user and the SP take extra inputs needed for the execution of the target cryptographic protocol denoted as $\auxu$ and $\auxrp$.
We denote the complete protocol from~\eqref{eq:tcp-protocol} by $\tcp(\xpfresh, \xsfresh, \auxu, \auxrp)$.

\name can only enhance the privacy (\cref{property:privacy}) of certain TCPs. We formalize the condition that these TCPs should satisfy. To avoid that the \TS can recognize the user based on the shares input to the TCP, we randomize long-term secret shares. Thus, we require that TCPs enhanced with \name function with randomized key shares. In addition, our privacy-friendly \genshares protocol requires this randomization to be linear.

For simplicity, we assume that the user's secret $\secret \in \Zp$ for some field $\Zp$ of prime order $\grouporder$ (e.g., corresponding to the group $\G$ defined above). Our constructions, however, can be modified to settings with unknown order arising from RSA assumptions. Formally, we require the TCP to be linearly randomizable:

\begin{definition}
  Let $\xp, \xs \in \Zp$ be secret shares of the user's secret $\secret$. 
  Then, we say that the TCP is \emph{linearly randomizable} if
  for all $\kssrandomizer$ we have that
  (1) if $\tcp(\xp, \xs, \auxu, \auxrp)$ completes successfully, then so does $\tcp(\xp - \kssrandomizer, \xs + \kssrandomizer, \auxu, \auxrp)$, and (2) $\xs + \kssrandomizer$ is independent from $\xs$.
\end{definition}

\para{}The first condition implies that the original secret sharing $(\xp, \xs)$ and the randomized secret sharing $(\xp - \xsencdelta, \xs + \xsencdelta)$ must share the same secret, whereas the second implies that the \TS cannot recognize the user from the randomized secret share alone.

\para{Security and privacy properties of TCPs.} To ensure that a TCP with \name satisfies the security properties (\cref{property:protect-key} and~\cref{property:rate-limit}) we require that the TCP itself is secure. That is, if the \TS no longer uses its share $\xs$ to run its part of the TCP, 
then no malicious user can successfully complete the TCP with the SP. We formalize this in Game~\ref{game:tcp-security} in Section~\ref{sec:security-privacy}.

To ensure that a TCP with \name satisfies the privacy property (\cref{property:privacy}) we require that the TCP itself is private with respect to the \TS (respectively the \TS colluding with the SP): If the \TS runs its part of the TCP  
using a randomized key-share as input, then the \TS (respectively the \TS and the SP) cannot recognize the user. We formalize this in Game~\ref{game:tcp-privacy} in Section~\ref{sec:security-privacy}.
 \section{The Full \name Construction}
\label{keyshare-tokens}

We now introduce the full \name construction.

\begin{table}[tb]
  \caption{\label{tab:building-blocks}Notation and cryptographic building blocks.}
\begin{tabularx}{\columnwidth}{lX@{}}  
\textbf{Symbol}    & \textbf{Interpretation}  \\
\midrule
$\NatNumUpTo{n}$ & The set $\{0,\ldots,n - 1\}$ \\
$\secpar$ & The security parameter \\
$\G, \generator, \grouporder$ & Cyclic group $\G = \langle \generator \rangle$ of order $\grouporder$\\[2mm]
\multicolumn{2}{@{}l}{\emph{Additively homomorphic encryption scheme}} \\
$\pkeygen(1^{\secpar})$ & Generate public-private key-pair \\
$\penc(m; r)$   & Encrypt $m \in \mathbb{Z}_{\PaillierN}$ with randomizer $r \in \prandspace$\\ 
$\pdec(\ctxt)$  & Decrypt ciphertext $\ctxt$ \\
  $\PaillierN$ & Size of additive plaintext domain \\
  $\prandspace$ & Space of randomizers \\[2mm]
\multicolumn{2}{@{}l}{\emph{Commitment schemes and hash function}} \\
$\commit(m, r)$ & Commit to $m \in \Zp$ (or a tuple of messages) with randomizer $r \in \Zp$ \\
$\extcommit{m}{r}$ & Commit to $m \in \{0,1\}^*$ with randomizer $r \in \{0,1\}^{2\ell}$ \\
$\hash(s)$ & Hash function from $s \in \{0,1\}^*$ to $\Zp$ \\[2mm]
\multicolumn{2}{@{}l}{\emph{Blind signature scheme with attributes}} \\
$\bssetup(1^{\secpar}, \lambda)$ & Generate signer's key-pair for signatures on $\lambda$ attributes \\
$\bssign(\bspk, \bscommitment)$ & Protocol to blindly sign attributes in $C$ \\
$\bsverify(\bspk, \bssignature, \bsfreshcommitment)$ & Verify signature $\bssignature$ on $\bsfreshcommitment$.\\
\end{tabularx} 
\end{table}

\begin{table}[tb]
  \caption{\label{tab:notation-tandem} Frequently used symbols in \name protocols}
  \centering
\begin{tabular}{ll}  
\textbf{Symbol}    & \textbf{Interpretation}  \\
\midrule
$\tokendisclose$ & Disclose subset in cut-and-choose construction \\
$\xsencdelta,\kssrandomizer_i$ & Randomizers of key shares \\
$\tokensecpar$ & Token security parameter \\
$\lengthdelta$ & Length of randomizers $\kssrandomizer_i$ in bits \\
$\secret$      & Long-term secret key for a user    \\
$\ppk, \psk$ & Public-private encryption key-pair of \TS\\
$\bspk, \bssk$ & Public-private signing key-pair of \TS\\
$\abcpku, \abcsku$  & Public-private key-pair of the user $\User$  \\
$\grouporder$ & Order of the group $\G$ \\
$\xp, \xs$ &  Long-term key share held by user resp. \TS \\
$\xsenc$ & Homomorphic encryption of $\xs$ \\
$\xpfresh, \xsfresh$ & User's resp. TS' key share output by \genshares \\
$\epoch$ & The current epoch\\
$\tokensignature$ & Blind signature of the \TS \\
\end{tabular}
\end{table}

\para{Setup.}
The \TS sets up the \name system as follows.
\begin{protocol}
The $\tandemsetup(1^\secpar,1^{\tokensecpar})$ protocol is run by the \TS, where $\secpar$ and $\tokensecpar$ are security parameters. The \TS generates a public-private key-pairs $(\ppk, \psk) = \pkeygen(1^{\secpar})$ for the homomorphic encryption scheme and $(\bspk, \bssk) = \bssetup(1^{\secpar}, \tokensecpar + 3)$ for the blind signature scheme. The \TS publishes $\ppk$ and $\bspk$. Finally, the \TS keeps track of an initially empty list of revoked public keys $\revocationlist$.
\end{protocol}

\para{Registering users.}
When a user first registers at the \TS, the \TS computes a key-share $\xs$ for that user, and sends her an encrypted version $\xsenc = \penc(\xs)$. To ensure that the \TS cannot hide an identifier in higher-order bits of $\xs$ that are not randomized by the user in the remainder of the protocol, the \TS proves that the plaintext $\xs$ is in the correct range.

\begin{protocol}
The \registeruser protocol is run between a user and the \TS, and proceeds as follows.
\begin{enumerate}
\item The user opens an encrypted channel to the \TS and authenticates it.
\item
  \label{step:register:choose-secrets}
  The user $\User$ and the \TS generate secret shares $\xp \randin \Zp$ and $\xs \randin \Zp$, respectively.
  The user also generates a public-private key-pair $(\abcpku, \abcsku) = (\generator^{\abcsku}, \abcsku)$ for $\abcsku \randin \Zp$ that we use to authenticate the user's device and to revoke the user's tokens. The user sends $\abcpku$ to the \TS.
\item The \TS picks $\prand \randin \prandspace$, computes $\xsenc = \penc(\sks{}; \prand)$
  and sends $\xsenc$ to the user. Moreover, the \TS sends a range proof to the user that $\xsenc$
  is constructed correctly, i.e., that
    \begin{equation}
      \label{eq:range-proof-xs}
      \pdec(\xsenc{}) \in [0, \grouporder).
    \end{equation}
    \fullversioncmd{See Appendix~\ref{sec:appendix-register-proof} for how to instantiate this proof.}
\begin{confversion}
      For Paillier's encryption scheme, this proof can be instantiated using standard techniques~\cite{BellareG97}. The Appendix of the full version~\cite{fullversion} constructs this proof for Joye and Libert's scheme~\cite{JoyeL13} using a cut-and-choose technique.
\end{confversion}
 \item The \TS records $(\sks{}, \xsenc, \abcpku)$ for this user, and marks
   this user as active.
   The user stores $(\sku{}, \xsenc, \abcsku)$ on her device.
\end{enumerate}
\end{protocol}
In Appendix~\ref{sec:appendix-multiple-ts} we explain how users can use multiple \TSs to increase robustness and/or security.

\para{Obtain a key-share token.}
First, the user randomizes the ciphertext $\xsenc$. However, it seems difficult to prove directly, for example in zero-knowledge, that the randomized ciphertext produced by the user is of the correct form. 
Therefore, we use a standard cut-and-choose approach~\cite{BrassardCC88,ChaumFN88} to allow the \TS to check that the key share it uses in the \TCP is a randomization of the correct secret key with overwhelming probability.

Let $\lengthdelta$ be a security parameter.
The user constructs
$2\tokensecpar$ witness ciphertexts $\ctxt_i = \xsenc \cdot \penc(\kssrandomizer_i; \prand_i)$ with $\kssrandomizer_i \randin [0, 2^{\lengthdelta})$ and $\prand_i \randin \prandspace$. The users sends commitments $\ksscommitment_i$ to these ciphertexts to the \TS. The \TS then asks the user to open a subset $\tokendisclose$ of cardinality $\tokensecpar$, so that the \TS can verify that these $\tokensecpar$ ciphertexts were correctly formed. The user then picks $\xsencdelta \randin [2^{\lengthdelta}, 2^{\lengthdelta + 1})$ and $\prand \randin \prandspace$ and constructs the randomized ciphertext $\ctxt = \xsenc \cdot \penc(\xsencdelta; \prand)$. The choice of $\xsencdelta$ ensures that it is always bigger than the $\kssrandomizer_i$s. The user constructs a commitment to her private key $\abcsku$, the current epoch $\epoch$,
$\ctxt$, and the remaining $\tokensecpar$ unopened witness ciphertexts $\ctxt_i$. The \TS blindly signs this commitment.

\label{ref:size-lengthdelta}
We set $\lengthdelta = \lengthdeltaval$ to ensure that the $\tokensecpar + 1$
plaintexts $\xs + \kssrandomizer_i$ and $\xs + \xsencdelta$ corresponding to the
ciphertexts in the token statistically hide $\xs$. We require that
the size of plaintext space $\PaillierN$ of $\penc$ is
bigger than $2^{\lengthdelta + 2}$ to ensure no overflows occur.

\begin{protocol}
  The \obtainkstoken protocol is run between a user and the \TS.
\begin{enumerate}
\item The user opens an encrypted channel to the \TS and authenticates it.
\item The user recovers $(\sku{}, \xsenc, \abcsku)$
  from storage, and authenticates to the \TS using $\abcsku$.
  The \TS looks up the corresponding user's record $(\sks{}, \xsenc, \abcpku)$
  and aborts if this user exceeded the rate-limit for the current epoch, was banned, or was blocked. 
\item The \TS picks a random subset $\tokendisclose \subset \{1, \ldots, 2\tokensecpar\}$ of cardinality $\tokensecpar$ of indices of ciphertexts it will check at step 5; and commits to $\tokendisclose$ by picking $\discloseSetCommitmentRand \randin \{0,1\}^{2\secpar}$ and sending $\discloseSetCommitment = \extcommit{\tokendisclose} {\discloseSetCommitmentRand}$ to the user.
\item The user picks randomizers $\kssrandomizer_1, \ldots, \kssrandomizer_{2\tokensecpar} \in \{0,1\}^{\lengthdelta}$ and $\prand_1, \ldots, \prand_{2\tokensecpar} \in \prandspace$ to create witness ciphertexts; and randomizers $\ksscommitmentrand_1, \ldots, \ksscommitmentrand_{2\tokensecpar} \in \Zp$ and $\deltacommitmentrand_1, \ldots, \deltacommitmentrand_{2\tokensecpar} \in \{0,1\}^{2\secpar}$ for the commitments and sets:
  \begin{equation}
    \begin{split}
    \ctxt_i      &= \xsenc \cdot \penc(\kssrandomizer_i; \prand_i) \\
    \ksscommitment_i &= \commit(\hash(\ctxt_i), \ksscommitmentrand_i) \\
    \deltacommitment_i &= \extcommit{(\kssrandomizer_i, \prand_i)}{ \deltacommitmentrand_i},
    \end{split}
    \label{eq:token-element}
  \end{equation}
  for $i = 1, \ldots, 2\tokensecpar$. She sends the commitments $\ksscommitment_1, \ldots, \ksscommitment_{2\tokensecpar}$ and $\deltacommitment_1, \ldots, \deltacommitment_{2\tokensecpar}$ to the \TS. Note that the commitments $\ksscommitment_i$ and $\deltacommitment_i$ are computationally binding and hiding. We use the extractable commitments $\deltacommitment_i$ to extract the inputs $\kssrandomizer_i, \prand_i$ in the proofs.
\item The \TS opens the commitment $\discloseSetCommitment$ by sending the subset $\tokendisclose$ and the randomizer $\discloseSetCommitmentRand$ to the user. The user checks that $\discloseSetCommitment = \extcommit{\tokendisclose} {\discloseSetCommitmentRand}$, and aborts if the check fails.
  \item
The user sends 
    $(\ctxt_i, \kssrandomizer_i,\allowbreak \prand_i, \ksscommitmentrand_i, \deltacommitmentrand_i)_{i \in \tokendisclose}$ to the \TS to open the requested commitments.
The \TS checks that these values satisfy equation~\eqref{eq:token-element} and that $\kssrandomizer_i < 2^{\lengthdelta}$. If any check fails, the \TS bans the user.
\item
  \label{step:obtaintoken:pick-delta}
  Next, the user creates the randomized ciphertext $\ctxt = \xsenc \cdot \penc(\xsencdelta; \prand)$ for $\xsencdelta \randin [2^{\lengthdelta}, 2^{\lengthdelta + 1})$ and $\prand \randin \prandspace$. Let $\tokenhidden = \{i_1, \ldots, i_{\tokensecpar}\} = \{1,\allowbreak \ldots,\allowbreak 2\tokensecpar\} \setminus \tokendisclose$ be the set of indices of unopened commitments. The user picks $\ksscommitmentrand \randin \Zp$ and sends to the \TS the commitment
    \begin{equation*}
      \ksscommitment = \commit((\abcsku, \epoch, \hash(\ctxt), \hash(\ctxt_{i_1}), \ldots, \hash(\ctxt_{i_{\tokensecpar}})), \ksscommitmentrand)
    \end{equation*}
    to her private key $\abcsku$, the epoch $\epoch$, the ciphertext $\ctxt$, and the unopened witness ciphertexts.
    Finally,
    letting $\eta = \hash(\ctxt)$ and $\eta_i = \hash(\ctxt_{i})$,
she proves in zero-knowledge to the \TS that commitment $\ksscommitment$ is correct:
    \begin{multline*}
      \textsf{PK}\{ ( (\eta_i, \ksscommitmentrand_i)_{i \in \tokenhidden},
      \abcsku, \eta, \ksscommitmentrand) \;:\; \\
      \forall i \in \tokenhidden \left[ \ksscommitment_{i} = \commit( \eta_i , \ksscommitmentrand_{i})\right] \land
      \abcpku = \generator^{\abcsku} \land \\
      \ksscommitment = \commit((\abcsku, \epoch, \eta, \eta_{i_1},\ldots,\eta_{i_\tokensecpar}), \ksscommitmentrand) \}.
    \end{multline*}
    The \TS checks this proof.
  \item If any check fails, the \TS bans the user and aborts the protocol. If all checks pass, the \TS runs $\bssign(\bspk, \ksscommitment)$ with the user. The $\TS$ takes as private input its signing key $\bssk$, the user takes as private input the attributes, and $\ksscommitmentrand$. Finally, the user obtains the tuple $(\bssignature, \kssfreshcommitment, \kssfreshcommitmentrand)$ where $\bssignature$ is a blind signature on the commitment $\kssfreshcommitment = \commit((\abcsku, \epoch, \hash(\ctxt), \hash(\ctxt_{i_1}), \ldots, \hash(\ctxt_{i_{\tokensecpar}}), \kssfreshcommitmentrand)$. The user stores the key-share token $\ksstoken = \ksstokencontentWithRange{i \in \tokenhidden}$.
\end{enumerate}
\end{protocol}

The following lemma states that even if a user is malicious, at least one of the witness ciphertexts $\ctxt_i$ must be correctly formed.
\confversioncmd{(See the Appendix of the full version~\cite{fullversion} for the proof.)}
\fullversioncmd{(See Appendix~\ref{app:proofs-of-lemmas} for the proof.)}
\begin{lemma}
  \label{lem:one-good-element}
  Consider a token $\ksstoken = \ksstokencontent$ obtained using the above protocol by a (potentially malicious) user with corresponding encrypted $\TS$ key-share $\xsenc$. Let $\deltacommitment_1, \ldots, \deltacommitment_{\tokensecpar}$ be the set of corresponding commitments used during the obtain step. Then, with probability $1 - 1/\binom{2\tokensecpar}{\tokensecpar}$ there exists an index $i^*$; and randomizers $\kssrandomizer^* < 2^{\lengthdelta}$, $\prand^*$, and $\deltacommitmentrand^*$ such that:
  \begin{align*}
    \ctxt_{i^*} &= \xsenc \cdot \penc(\kssrandomizer^*; \prand^*) \\
    \deltacommitment_{i^*} &= \extcommit{(\kssrandomizer^*, \prand^*)}{\deltacommitmentrand^*}.
  \end{align*}
\end{lemma}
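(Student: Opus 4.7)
The plan is a standard cut-and-choose analysis. First, I use the binding/extractability properties of the two commitment schemes to give every index $i \in \{1,\ldots,2\tokensecpar\}$ a well-defined ``intended'' content. Concretely, by the extractability of $\extcommit{\cdot}{\cdot}$, extract from each $\deltacommitment_i$ a pair $(\kssrandomizer_i^{*}, \prand_i^{*})$; and by the binding property of $\commit$, observe that $\ksscommitment_i$ can be opened to at most one ciphertext $\ctxt_i^{*}$ (via its hash). Call index $i$ \emph{well-formed} if $\ctxt_i^{*} = \xsenc \cdot \penc(\kssrandomizer_i^{*};\prand_i^{*})$ and $\kssrandomizer_i^{*} < 2^{\lengthdelta}$, and \emph{malformed} otherwise. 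Note that well-formed indices are exactly those for which the lemma's conclusion would be witnessed by the extracted values.

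Second, I argue that two necessary conditions conflict unless a very unlikely guess is made. On the one hand, if the obtain protocol does not abort, the \TS accepts the opening of every $i \in \tokendisclose$ in Step~6, which by the binding property forces \emph{every} disclosed index to be well-formed. On the other hand, if the lemma's conclusion fails, then \emph{every} index in $\tokenhidden$ must be malformed. Since $|\tokendisclose| = |\tokenhidden| = \tokensecpar$, both conditions can hold only when the user has committed to exactly $\tokensecpar$ well-formed and $\tokensecpar$ malformed indices, and $\tokendisclose$ is precisely the set of well-formed ones.

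Third, I bound the probability of such a coincidence. The \TS commits to $\tokendisclose$ via $\discloseSetCommitment = \extcommit{\tokendisclose}{\discloseSetCommitmentRand}$ \emph{before} the user produces her commitments, and $\extcommit{\cdot}{\cdot}$ is computationally hiding, so the partition chosen by the user is (up to negligible advantage) independent of $\tokendisclose$. Since $\tokendisclose$ is uniformly distributed over the $\binom{2\tokensecpar}{\tokensecpar}$ size-$\tokensecpar$ subsets of $\{1,\ldots,2\tokensecpar\}$, the conditional probability that it exactly matches the user's well-formed set is at most $1/\binom{2\tokensecpar}{\tokensecpar}$. Complementing this yields the claimed bound.

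The main obstacle, and the only step that is more than bookkeeping, is handling the order-of-commitment subtlety in the third paragraph: a malicious user might try to use information leaked by $\discloseSetCommitment$ to place her malformed indices inside $\tokenhidden$. I would formalize this via a standard reduction: any user achieving success probability noticeably larger than $1/\binom{2\tokensecpar}{\tokensecpar}$ could be turned into a distinguisher against the hiding game for $\extcommit{\cdot}{\cdot}$ by running two executions with different choices of $\tokendisclose$ inside the commitment and comparing which placement of malformed indices the user chooses. All other steps (extraction from $\deltacommitment_i$, binding of $\ksscommitment_i$, the Step~6 checks forcing well-formedness on the disclosed half) are routine invocations of the properties already listed in Section~\ref{sec:preliminaries}.
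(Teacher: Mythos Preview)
Your proposal is correct and follows essentially the same cut-and-choose argument as the paper's own proof, which is only a short paragraph: disclosed indices must be correct, so if all $\tokensecpar$ hidden indices are bad the adversary must have placed exactly the $\tokensecpar$ bad tuples outside $\tokendisclose$, an event of probability $1/\binom{2\tokensecpar}{\tokensecpar}$. You are in fact more careful than the paper in two respects---explicitly invoking extractability of $\extcommit{\cdot}{\cdot}$ and binding of $\commit$ to pin down a well-formedness predicate, and handling the order-of-commitment issue via the hiding property of $\extcommit{\cdot}{\cdot}$---both of which the paper's proof leaves implicit.
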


Using a homomorphic-CCA secure~\cite{PrabhakaranR08} scheme with targeted malleability that allows adding known randomizers only would obviate the need for extractable commitments. Unfortunately, to the best of our knowledge no such schemes exists. The RCCA scheme by Canetti et al.~\cite{CanettiKN03} is not homomorphic, the schemes by Prabhakaran and Rosulek~\cite{PrabhakaranR08} are multiplicatively homomorphic, and the fully homomorphic scheme by Lai et al.~\cite{LaiDMSW16} is not homomorphic-CCA.

\para{Using a key-share token.}
When using a token $\ksstoken = \ksstokencontent$, the user sends
  $\bssignature,\kssfreshcommitment,\epoch,\ctxt, \ctxt_{1},\allowbreak \ldots,\allowbreak \ctxt_{\tokensecpar}$
to the \TS,
and proves that $\kssfreshcommitment$ contains $\epoch, \hash(\ctxt), \hash(\ctxt_{1}), \ldots, \hash(\ctxt_{\tokensecpar})$ and that the user's tokens have not been revoked.
The \TS decrypts $\ctxt$ and uses the plaintext as the key in the threshold-cryptographic protocol.
But how does the \TS check if $\ctxt$ is correctly formed?
To this end, the user reveals the differences $\kssrandomizerDiff_i = \xsencdelta - \kssrandomizer_i$ for all $i=1,\ldots,\tokensecpar$.
We know from Lemma~\ref{lem:one-good-element} that at least one index $i^*$ exists such that $\ctxt_{i^*}$ is correctly formed.
Therefore, if the differences $\kssrandomizerDiff_i$ are correct,
then because $\ctxt_{i^*}$ is a randomization of $\xs$, so must be $\ctxt$. 
In this, key-share tokens differ from Chaum et al.'s e-cash tokens~\cite{ChaumFN88}, where it suffices that the correct index $i^*$ exists.

\begin{protocol}
  The $\genshares$ protocol is run between an anonymous user and the \TS.
\begin{enumerate}
\item The user takes $(\sku{}, \xsenc, \abcsku)$ and a token $\ksstoken = \ksstokencontent$ as input and connects to the \TS via an anonymous encrypted channel and authenticates the \TS.
\item
  \label{step:genshares:user-prove}
  First, the user retrieves the current revocation list $\revocationlist$ from the \TS. If her public key $\abcpku = \generator^{\abcsku}$ is contained in $\revocationlist$, her tokens are blocked and she aborts the protocol, destroys her tokens, and does not use the \TS again. Otherwise, she sends $\bssignature,\kssfreshcommitment,\epoch,\ctxt, \ctxt_{1},\allowbreak \ldots,\allowbreak \ctxt_{\tokensecpar}$ to the \TS together with a zero-knowledge proof that $\kssfreshcommitment$ commits to these values and that her user's tokens have not been revoked:
  \begin{multline*}
    \textsf{PK}\{ (\abcsku, \kssfreshcommitmentrand) \;:\;
    \generator^{\abcsku} \not\in \revocationlist
    \;\land \\
    \kssfreshcommitment = \commit((\abcsku, \epoch, \hash(\ctxt), \hash(\ctxt_1), \ldots, \hash(\ctxt_{\tokensecpar})), \kssfreshcommitmentrand)
    \}.
  \end{multline*}
  The \TS verifies the proof; that the signature is valid, i.e., $\bsverify(\bspk, \bssignature, \kssfreshcommitment) = \top$; that it has not seen the signature $\bssignature$ before; and that the epoch $\epoch$ corresponds to the current epoch. The \TS aborts if any check fails. The revocation mechanism can be implemented using BLAC (blacklistable anonymous credentials)~\cite{BLACTsangAKS10} or dynamic accumulators~\cite{CamenischKS09}.
\item Next, the user computes $\kssrandomizerDiff_i = \xsencdelta - \kssrandomizer_i > 0$ and $\prandReveal_i = \prand_i^{-1}\cdot \prand$ such that
  \begin{equation}
    \label{eq:gamma-rand-correct}
    \begin{split}
      \ctxt &= \ctxt_i \cdot \penc(\kssrandomizerDiff_i ; \prandReveal_i)
    \end{split}
  \end{equation}
  for $i = 1, \ldots, \tokensecpar$. She sends $\kssrandomizerDiff_1,\allowbreak \ldots,\allowbreak \kssrandomizerDiff_{\tokensecpar},\allowbreak \prandReveal_1, \ldots, \prandReveal_{\tokensecpar}$ to the \TS.
\item The \TS verifies that the $\kssrandomizerDiff_i$s and $\prandReveal_i$s satisfy equation~\eqref{eq:gamma-rand-correct}
  and that $0 < \kssrandomizerDiff_i < 2^{\lengthdelta+1}$. The \TS aborts if any check fails.
\item The \TS decrypts $\ctxt$, and sets $\xsfresh = \pdec(\ctxt) \pmod{\grouporder}$.
\item The user calculates her key share $\xpfresh$ as:
    \begin{equation*}
      \xpfresh \equiv \sku{} - \xsencdelta \pmod{\grouporder}
    \end{equation*}
\end{enumerate}
\end{protocol}
Using Lemma~\ref{lem:one-good-element}, we can show that the decrypted ciphertext $\ctxt$ must also be of the right form.
\confversioncmd{(See the Appendix of the full version~\cite{fullversion} for the proof.)}
\fullversioncmd{(See Appendix~\ref{app:proofs-of-lemmas} for the proof.)}
\begin{lemma}
  \label{lem:all-same}
  If the tuple $(\epoch, \ctxt, \ctxt_1,\allowbreak \ldots, \ctxt_{\tokensecpar},
  )$ with $\kssrandomizerDiff_1, \ldots, \kssrandomizerDiff_{\tokensecpar}$ and $\prandReveal_1, \ldots, \prandReveal_{\tokensecpar}$ satisfies equation~\eqref{eq:gamma-rand-correct}, then with probability $1 - 1/\binom{2\tokensecpar}{\tokensecpar}$ there exists $\xsencdelta < 2^{\lengthdelta+2}$ such that
  \begin{equation*}
    \pdec(\ctxt) = \xs + \xsencdelta
  \end{equation*}
  where $\xs$ is the \TS key-share for the corresponding user.
\end{lemma}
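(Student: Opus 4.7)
The plan is to piggyback on Lemma~\ref{lem:one-good-element}, which already supplies, for the unopened ciphertexts $\ctxt_1,\ldots,\ctxt_{\tokensecpar}$ packaged in the token, a ``good'' index with precisely the failure probability $1/\binom{2\tokensecpar}{\tokensecpar}$ claimed here. So no fresh cut-and-choose argument is needed: I can reuse the same event and show that, conditioned on it, equation~\eqref{eq:gamma-rand-correct} forces $\ctxt$ itself to be a valid additive randomization of $\xsenc$.

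Concretely, I would first invoke Lemma~\ref{lem:one-good-element} to obtain an index $i^*$ and witnesses $\kssrandomizer^* < 2^{\lengthdelta}$, $\prand^*$ with $\ctxt_{i^*} = \xsenc \cdot \penc(\kssrandomizer^*; \prand^*)$. I would then instantiate equation~\eqref{eq:gamma-rand-correct} at $i = i^*$, giving $\ctxt = \ctxt_{i^*} \cdot \penc(\kssrandomizerDiff_{i^*}; \prandReveal_{i^*})$, where the \TS's check in \genshares already guarantees $0 < \kssrandomizerDiff_{i^*} < 2^{\lengthdelta+1}$. Substituting the first identity into the second and using the additive homomorphism yields
\begin{equation*}
  \ctxt = \xsenc \cdot \penc\bigl(\kssrandomizer^* + \kssrandomizerDiff_{i^*};\; \prand^* \prandReveal_{i^*}\bigr).
\end{equation*}
Setting $\xsencdelta := \kssrandomizer^* + \kssrandomizerDiff_{i^*}$ and decrypting would then give $\pdec(\ctxt) = \xs + \xsencdelta$, as required.

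The only step that needs real care is the size accounting. From $\kssrandomizer^* < 2^{\lengthdelta}$ and $\kssrandomizerDiff_{i^*} < 2^{\lengthdelta+1}$ I get $\xsencdelta < 3 \cdot 2^{\lengthdelta} < 2^{\lengthdelta+2}$, matching the bound in the lemma statement. The hypothesis $\PaillierN > 2^{\lengthdelta+2}$ stated in Section~\ref{keyshare-tokens} guarantees that $\xs + \xsencdelta$ does not wrap around in the plaintext domain, so the decryption identity holds over the integers and not merely modulo $\PaillierN$. Beyond this bookkeeping I do not anticipate any genuine obstacle; the additive homomorphism does the substantive work, and the probabilistic bound is inherited verbatim from Lemma~\ref{lem:one-good-element}.
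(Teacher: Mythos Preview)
Your proposal is correct and follows essentially the same route as the paper's proof: invoke Lemma~\ref{lem:one-good-element} for a good index $i^*$, instantiate equation~\eqref{eq:gamma-rand-correct} there, combine via the additive homomorphism, and do the size bookkeeping to rule out reduction modulo $\PaillierN$. The only cosmetic difference is that the paper decrypts first and argues with plaintexts, whereas you compose ciphertexts and decrypt at the end; one small omission in your write-up is that the no-wraparound step also needs $\xs < \grouporder < 2^{\lengthdelta}$ (from the range proof in \registeruser) so that $\xs + \xsencdelta < 2^{\lengthdelta} + 3\cdot 2^{\lengthdelta} = 2^{\lengthdelta+2} < \PaillierN$, not just $\xsencdelta < 2^{\lengthdelta+2}$.
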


\parait{The range proof in registration is essential.}
The range proof in equation~\eqref{eq:range-proof-xs} in the \registeruser protocol ensures that the plaintext $\xs = \pdec(\xsenc)$ is small compared to the randomizers $\kssrandomizer_i$ and $\xsencdelta$. As a result, the randomized ciphertexts $\ctxt_i$ statistically hide $\xs$. It is not sufficient to skip the range proof and instead choose the randomizers $\kssrandomizer_i$ and $\xsencdelta$ from the full plaintext domain $\NatNumUpTo{\PaillierN}$ to hide $\xs$. Without the range proof, the \TS can construct tokens that it can later recognize by exploiting the fact that a large $\xs$ results in a reduction modulo $\PaillierN$. More precisely, the \TS can set $\xs$ of its target user somewhat large, so that $\xs + \kssrandomizer_{j} > \PaillierN$ (with a non-negligible probability). The user believes that the \TS derives $\xs + \kssrandomizer_{j} \pmod{\grouporder}$ (because she believes no modular reduction took place) and compensates accordingly. However, the \TS actually derives $\xsfresh = (\xs + \kssrandomizer_j \mod{\PaillierN}) \pmod{\grouporder} = \xs + \kssrandomizer_j - (N \mod{\grouporder}).$ To test if the current token is from its target user, the \TS adds $(N \mod{\grouporder})$ to $\xsfresh$. If the guess was correct, the \tcp completes correctly, otherwise the protocol fails. This allows the \TS to detect specific users.

\para{Blocking the Key.}
To block her key, the user runs the \code{BlockShare} protocol with the \TS ensuring that no new key-share tokens are created for her, and that all her unspent tokens are blocked. Note that the user does not require knowledge of $\abcsku$, she only needs some mechanism to authenticate to the \TS.
\begin{protocol} The user authenticates to the \TS. The \TS looks up the user's record $(\xs, \xsenc, \abcpku)$; marks the user as blocked; and adds $\abcpku$ to the revocation list $\revocationlist$ so that the user's unused tokens will be refused.
\end{protocol}

We assume the \TS is honest with respect to blocking, i.e., it correctly blocks
all unspent tokens. A malicious \TS could try to attack privacy by revoking the
tokens of some honest user. Thereby revealing whether the current user is in the
revoked set or not. However, this attack will only work once. As per the first
step of the \genshares protocol, the revoked user will detect this revocation and
then refuses to use the \TS again. Making the list $\revocationlist$ append-only ensures that the \TS will always be caught when it maliciously revokes users.

\subsection{Alternative Constructions}
\label{sec:alternative-tandem}
An alternative method to construct tokens could be to use an authenticated encryption scheme that the user and the \TS evaluate using secure multi-party computation~\cite{Yao86}. The server inputs its key share $\xs$ while the user inputs the randomizer $\kssrandomizer$. The user's output is the authenticated encryption of $\xs + \kssrandomizer$ for the \TS's symmetric key which serves as token. To ensure that the \TS cannot recognize this token, the protocol should resist malicious servers and the circuit should validate the \TS's inputs (i.e., that the encryption key is the same for all users). Similarly, the protocol should resist malicious users to ensure the server's key share does not leak to the user. Such a circuit requires at least 4 block cipher operations and a hash computation. Taking results from recent maliciously secure two-party computation protocols~\cite{WangRK17} shows that this MPC is requires 1 to 2 orders of magnitude more computation and 2 to 3 orders of magnitude more bandwidth than our custom protocol.

Another simple alternative construction is to let users retrieve $\xsenc = \Enc(\xs)$ using private information retrieval (PIR) via an anonymous channel---the user must still hide her identity. Then, users randomize $\xsenc$ similarly to our construction, and the \TS decrypts the ciphertext to recover $\xsenc + \kssrandomizer$, which it then uses in the \textsf{TCP}. To enable blocking of keys, the \TS needs to frequently refresh its encryption keys, effectively invalidating previously retrieved ciphertexts $\xsenc$. This simple protocol, however, has serious drawbacks. First, blocking is only enforced upon key refreshing, thus the time span when compromised keys can be used depends on the refreshing schedule of the \TS. Second, because the encryption of $\xs$ for the current period can be randomized as often as the user wants (and the use of PIR precludes record-keeping), this scheme cannot provide rate-limiting. Third, because the \TS acts as a decryption oracle for a homomorphic encryption scheme, which is only CPA secure, proving security in this setting requires very strong and non-standard assumptions.

 \section{Security and Privacy of \name}
\label{sec:security-privacy}
In this section we formalize the security and privacy properties offered by \name.
We refer to the 
\confversioncmd{full version of this paper~\cite{fullversion}}
\fullversioncmd{appendix}
for the complete security and privacy proofs.

\subsection{Security of \name}
We capture the security of \name using a security game. It models that if the user's key is compromised (e.g., her device is stolen), the user can block the use of her key, assuming the honesty of the \name server.
\begin{game}
  \label{game:security}
  The \emph{\name security game} is between a challenger controlling the \TS and the SP, and an adversary controlling up to $\lastuser$ users. The adversary aims to complete the TCP for a blocked or rate-limited user.
\begin{description}
\item[Setup phase] The challenger sets up the \TS by running $\tandemsetup$. The challenger also sets up the SP. The challenger runs \registeruser with the adversary for each of the $\lastuser$ users the adversary controls.
\item[Query phase] During the query phase, the adversary can ask the \TS to run the  $\obtainkstoken$ and $\blockshare$ protocols with users controlled by the adversary. Moreover, the adversary can make $\qRunTCP$ queries to the challenger. In response, the \TS first runs the \genshares protocol with the user (controlled by the adversary), followed by a run of the \tcp protocol.
\item[Selection phase] At some point the adversary outputs the identifier of a
  blocked or rate-limited user $\U^*$ on which it wants to be challenged later. The challenger runs $\blockshare$ for user $\U^*$ to ensure the user is blocked respectively that the rate-limited user used all tokens.
\item[Second query phase] The adversary can keep asking the \TS to run the
  \obtainkstoken and \blockshare protocols. The adversary can also make $\qRunTCP$ queries as before.
\item[Challenge phase] Finally, upon request of the adversary, the challenger acts as SP in the \tcp protocol. At the same time, the adversary may still make queries and run protocols as before. The adversary wins if it successfully completes the \tcp with the SP on behalf of the blocked user $\U^*$. To prevent trivial wins, this \tcp protocol must be completable only by user $\U^*$ (See Appendix~\ref{app:full-tcps} for how to model this for attribute-based credentials).
\end{description}
\end{game}
In this game, all users are automatically corrupted right from the moment they start the registration protocol. This models the notion that users can even be blocked if an active adversary is present right from the start, and also implies that honest users---which are only corrupted later by an active adversary---can still be blocked.

Of course, to have security using \name, the TCP itself must be secure. Hence, we require that even if a malicious user has interacted many times with the \TS, she cannot use her key when she does not have access to the \TS. We formalize this using the following game.
\begin{game}
  \label{game:tcp-security}
  The \emph{TCP security game} is between a challenger controlling the TS and the SP, and the adversary controlling a malicious user.
  \begin{description}
  \item[Setup phase] During the setup phase, the adversary generates $\xp \randin \Zp$, whereas the \TS, controlled by the challenger, generates $\xs \randin \Zp$.
  \item[Query phase] In the query phase, the adversary can make $\qTCP(\xsencdelta)$ queries to request that the \TS runs $\tcpts(\xs + \xsencdelta)$ with the user. The adversary is responsible for running $\tcpu$. Optionally, the adversary-controlled user can communicate with the challenger-controlled SP running $\tcprp()$ as well.
  \item[Challenge phase] In the challenge phase, the adversary is not allowed to make $\qTCP$ queries. Instead, it interacts solely with the challenger-controlled SP running $\tcprp()$. The adversary wins if the SP accepts.
  \end{description}
\end{game}

\begin{theorem}
  \label{thm:tandem-security}
  No PPT adversary can win the \emph{\name security game} with non-negligible probability, provided that the \tcp is secure (i.e., no PPT adversary can win the TCP security game), the homomorphic encryption scheme is CPA secure, the blind-signature scheme is unforgeable, and the commitment scheme $\extcommit{\cdot}{\cdot}$ is extractable.
\end{theorem}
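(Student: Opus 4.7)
The plan is to prove the theorem by a sequence of game hops from the original Tandem security game to one in which no efficient adversary can win, discharging each hop against exactly one of the four stated assumptions.

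First, I would introduce a hop in which the challenger logs every pair $(\bssignature,\kssfreshcommitment)$ it produces during $\obtainkstoken$ and aborts whenever the adversary later submits, in some $\genshares$ run, a valid signature-commitment pair that was not logged. The gap between this hop and the real game is bounded by a direct reduction to the unforgeability of the blind signature scheme: the reduction simulates the full Tandem environment and uses the external signing oracle inside every $\obtainkstoken$ execution, so any fresh valid pair presented in $\genshares$ yields a blind-signature forgery.

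Second, I would add a hop in which, for each accepted $\genshares$ query, the challenger uses the extractor of $\extcommit{\cdot}{\cdot}$ together with the knowledge extractors of the zero-knowledge proofs of $\obtainkstoken$ and $\genshares$ to read off the identity $\abcsku$, the randomizers $(\kssrandomizer_i,\prand_i)$ underlying the witness ciphertexts, and the randomizer $\xsencdelta$ underlying $\ctxt$. Lemmas~\ref{lem:one-good-element} and~\ref{lem:all-same} then guarantee, up to an additive $1/\binom{2\tokensecpar}{\tokensecpar}$ term, that each accepted token is pinned to a specific registered user and that $\pdec(\ctxt)\equiv \xs+\xsencdelta \pmod{\grouporder}$ for that user's TS share $\xs$. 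Using the extracted $\abcsku$ I also abort whenever the underlying user is already on $\revocationlist$; soundness of the proof in step~2 of $\genshares$ bounds this by a negligible term. After these hops every surviving $\genshares$ run corresponds to an unblocked user, so no $\genshares$ run for $\U^*$ can reach $\tcpts$ after the selection phase.

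Third, I would reduce the resulting game to Game~\ref{game:tcp-security}. The reduction guesses $\U^*\in\{1,\ldots,\lastuser\}$, losing a factor $\lastuser$, and embeds the external TCP challenger's share as the key share of the guessed user. In $\registeruser$ for this user the simulator hands the adversary a ciphertext $\xsenc$ produced without knowing the plaintext; indistinguishability from an honest $\penc(\xs)$ is ensured by CPA security, since neither the simulator nor the external challenger ever decrypts it. Each $\qRunTCP$ query targeting the guessed user is answered by extracting $\xsencdelta$ as above and forwarding a $\qTCP(\xsencdelta)$ query to the external challenger, which plays $\tcpts$ with input $\xs+\xsencdelta$ against the adversary's $\tcpu$. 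All other users are simulated directly, since the reduction knows their shares. The challenge $\tcp$ is relayed to the external SP, and any adversary that completes it wins Game~\ref{game:tcp-security}. The delicate point is that the simulator never learns the plaintext $\xs$ itself: CPA security covers the decoy ciphertext, the external share has the same distribution as an honest TS share, and by Step~2 the external oracle plays $\tcpts$ with input $\xs+\xsencdelta$, the same value the honest TS would have obtained by decrypting $\ctxt$, so the joint transcripts of $\genshares$ and $\tcp$ are distributed identically in both worlds. Combining these bounds yields the theorem, with a loss of $\lastuser$ from the guess plus the negligible contributions of blind-signature unforgeability, extractability, CPA security, and TCP security.
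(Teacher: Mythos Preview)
Your game-hop structure---unforgeability to tie tokens to issuance, extractability to avoid decryption, CPA to decouple $\xsenc$ from $\xs$, then a reduction to Game~\ref{game:tcp-security}---is the same route the paper takes. Two points need tightening.

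First, in your third hop you omit the range proof. In \registeruser the \TS sends not only $\xsenc$ but also a proof that $\pdec(\xsenc)\in[0,\grouporder)$. Once you hand the adversary a CPA-challenge ciphertext (or any ciphertext whose randomness you do not know), you can no longer produce an honest range proof, so you must simulate it. The paper devotes a separate lemma (Lemma~\ref{lem:can-replace-ctxt}) to precisely this: extract the adversary's disclose set $\tokendisclose$ from $\extcommit{\cdot}{\cdot}$, simulate the cut-and-choose part of the range proof, simulate the zero-knowledge component, and only then invoke CPA security. Without this step your simulator is detectably wrong during \registeruser.

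Second, your compression of the CPA hop and the TCP embedding into one step is logically out of order. To run the CPA reduction you must submit two plaintexts $m_0=\xs$ and $m_1=0$ to the CPA challenger, so at that point you still need to \emph{know} $\xs$. Only after the ciphertext has been swapped for $\penc(0)$ can you forget $\xs$ and source it from the external TCP challenger. The paper separates these as two consecutive games (replace $\xsenc$ by $\penc(0)$ while still knowing $\xs$; then answer $\qRunTCP$ for $\U^*$ via the TCP oracle). Your sentence ``produced without knowing the plaintext; indistinguishability \ldots is ensured by CPA security'' conflates the two and, read literally, is not a valid CPA reduction.
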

\begin{proofsketchesinbody}
\begin{proof}[Proof sketch]
  We prove the security of the scheme by reducing it to the TCP security property.
  First, we show how to run \genshares without decrypting ciphertexts. During the \obtainkstoken protocol, we model hash functions as random oracles to allow us to extract the tuple $(\epoch, \hash(\ctxt), \hash(\ctxt_1), \ldots, \hash(\ctxt_{\tokensecpar}))$ contained in the commitment $\ksscommitment$.
  We also extract the corresponding randomizers $\kssrandomizer_i$ and $\prand_i$ from the extractable commitments $\deltacommitment_i$ in step 3. During \genshares we can use the tuple to identify the user (because the attacker cannot forge signatures), and thus the corresponding key share $\xs$. We then find an index $i^*$ such that $\ctxt_{i^*} = \xsenc \cdot \penc(\kssrandomizer_{i^*}, \prand_{i^*})$ using the extracted randomizers $\kssrandomizer_i$ and $\prand_i$. This does not require decrypting $\ctxt_i$. As a result, we know $\ctxt_{i^*}$ decrypts to $\xs + \kssrandomizer_{i^*}$ and therefore $\pdec(\ctxt) = \xs + \kssrandomizer_{i^*} + \kssrandomizerDiff_{i^*}$. By Lemma~\ref{lem:one-good-element} such an index $i^*$ exists with overwhelming probability.

  Knowing the plaintext of $\ctxt$, we no longer need to decrypt ciphertexts to run \genshares. Therefore, we can use the CPA security of the homomorphic encryption scheme to replace the initial ciphertext $\xsenc = \penc(\xs)$ for the challenge user by $\xsenc = \penc(0)$ an encryption of 0. During \genshares we add $\xs$ to compensate. (To enable the reduction to CPA, we simulate the range proof in step 2 of \registeruser.)

  Finally, we answer all queries for the challenge user using the TCP security oracle. Hence, a break of the \name security game results in a break of the TCP security game.
\end{proof}
\end{proofsketchesinbody}

\confversioncmd{See the Appendix of the full version~\cite{fullversion} for the proof.}
\fullversioncmd{See Appendix~\ref{sec:security-proof} for the proof.}

\subsection{Privacy of \name}

The following game models that \name provides key-use privacy: A malicious \name server cannot distinguish between two unblocked honest users performing a transaction using the \TS even if it colludes with the service provider, provided that the service provider alone cannot distinguish transactions by these users. Thus, users are unlinkable when using their keys at the \TS.

\begin{game}
  \label{game:privacy}
  The \emph{\name privacy game with colluding SP} is between a challenger, who controls two honest users $\U_0$ and $\U_1$, and an adversary $\Adv$ who controls the \TS and the SP.
  \begin{description}
  \item[Setup phase] The adversary $\Adv$ outputs the number of key-share tokens $n_0, n_1$ each respective honest user should obtain. The adversary is responsible for setting up the SP and the \TS, i.e., it should publish public keys $\ppk$ and $\bspk$. Next, the honest users $\U_0$ and $\U_1$ run $\registeruser$ with the adversary-controlled \TS and then obtain $n_0$ and $n_1$ key-share tokens respectively. First, $\U_0$ runs $\obtainkstoken$ $n_0$ times to obtain tokens $\ksstoken_{0,1}, \ldots, \ksstoken_{0,n_0}$. Then, $\U_1$ runs $\obtainkstoken$ $n_1$ times to obtain tokens $\ksstoken_{1,1},\ldots,\ksstoken_{1,n_1}$.
\item[Query phase] During the query phase, the adversary can make
  $\qRunTCP(\U_i, j, \auxu)$ queries to request that user $\U_i$ uses token
  $\ksstoken_{i,j}$ and then runs the TCP with input $\auxu$. If $i \in \{0,1\}$ and user $\U_i$ did not use token $\ksstoken_{i,j}$ before, then user $\U_i$, controlled by the challenger, first runs $\genshares$ with the \TS using token $\ksstoken_{i,j}$ and then runs $\tcpu(\auxu)$ with the \TS and the SP (running $\tcpts$ and $\tcprp$ respectively).
\item[Challenge phase] At some point, the adversary outputs a pair of token indices $(i_0,i_1)$ for user $\U_0$ and $\U_1$ respectively on which it wants to be challenged. Let $\ksstoken_0 = \ksstoken_{0,i_0}$ and $\ksstoken_1 = \ksstoken_{1,i_1}$ be the corresponding tokens. The adversary loses if either token $\ksstoken_0$ or $\ksstoken_1$ has been used before or if user $\U_0$ or $\U_1$ detects it is blocked when running $\genshares$. Then, the challenger picks a bit $\challengebit \in \{0, 1\}$ and proceeds as if the adversary made a $\qRunTCP(\U_b, \ksstoken_{b})$ query followed by a $\qRunTCP(\U_{1-b}, \ksstoken_{1-b})$ query.
\item[Guess phase] The adversary outputs a guess $\challengebit'$ of $\challengebit$. The adversary wins if $\challengebit' = \challengebit$.
\end{description}
\end{game}

The privacy game models the fact that there is $\emph{no}$ time correlation between when tokens are obtained by a user, and when they are spent by a user. At the same time, the adversary has full control over the \TS and the SP, so this game also models the fact that the \TS and the SP \emph{can} correlate events that they see.

Since the SP is controlled by the adversary, the \tcp must ensure privacy with respect to the SP and the \TS, if all that the TS sees are randomized secret shares. We formalize this in the following game.
\begin{game}
  \label{game:tcp-privacy}
  The \emph{TCP privacy game with colluding SP} is between a challenger controlling honest users $\User_0$ and $\User_1$ and an adversary $\Adv$, controlling the \TS and the SP.
\begin{description}
\item[Setup] The adversary publishes the \TS public key and is responsible for setting up the SP. The challenger sets up its users. First, user $\User_0$ generates $\sku{0} \randin \Zp$ while the \TS generates $\sks{0} \randin \Zp$, then $\User_1$ and \TS similarly generate $\sku{1}$ and $\sks{1}$. Finally, the \TS sends $\sks{0}$ and $\sks{1}$ to users $\User_0$ and $\User_1$ respectively.
  \item[Queries] Adversary $\Adv$ can make $\qRunTCP(i, \auxu)$ queries, to
    request $\User_i$ to run the TCP protocol using input $\auxu$ with the \TS and the SP (both controlled by $\Adv$). User $\User_i$ picks $\kssrandomizer \randin \Zp$ and sends the randomized secret-share $\xsfresh = \sks{i} + \kssrandomizer \pmod{\grouporder}$ to the \TS. The user sets $\xpfresh = \sku{i} - \kssrandomizer$ and runs $\tcpu(\xpfresh, \auxu)$ with the \TS and the SP running $\tcpts(\xsfresh)$ and $\tcprp$ respectively.
  \item[Challenge] Adversary $\Adv$ outputs an input $\auxu$. The challenger picks a bit $\challengebit \randin \{0, 1\}$. Then the challenger acts as if $\Adv$ first made a $\qRunTCP(b, \auxu)$ query, and then a $\qRunTCP(1 - b, \auxu)$ query.
  \item[Guess] $\Adv$ outputs a guess $\challengebit'$ for $\challengebit$, $\Adv$ wins if $\challengebit = \challengebit'$.
\end{description}
\end{game}
\name also provides key-use privacy against the \TS alone, even if the SP can identify users. (If the SP can identify users, then so can the \TS and the SP together. We exclude this case to prevent a trivial win.) We model this situation as a variant of the previous two games.
\begin{definition}
  \label{def:honest-sp-privacy}
  The \emph{\name privacy game with honest SP} and the \emph{TCP privacy game with honest SP} are as in \cref{game:privacy} and \cref{game:tcp-privacy} above, however, the challenger controls the SP. The adversary can interact with the SP as a normal user.
\end{definition}

\begin{theorem}
  \label{thm:tandem-privacy}
  No PPT adversary can win the \emph{\name privacy game with colluding SP} (respectively the \emph{\name privacy game with honest SP}) with probability non-negligibly better than $1/2$, provided that the \tcp is privacy-friendly (i.e., no PPT adversary can win the TCP privacy game with colluding SP respectively the TCP privacy game with honest SP),
  the commitment scheme $\commit(\cdot, \cdot)$ is computationally hiding,
  and that the commitment scheme $\extcommit{\cdot}{\cdot}$ is extractable.
\end{theorem}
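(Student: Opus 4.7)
The plan is to reduce \name privacy to TCP privacy by simulating the \name-specific protocol messages (commitments, blind signatures, auxiliary ciphertexts, zero-knowledge proofs) in a way that depends only on the TCP transcript. Given an adversary $\Adv$ that wins the \name privacy game with colluding SP, I would build $\AdvB$ against the TCP privacy game with colluding SP (the honest-SP case is analogous). $\AdvB$ plays the TCP game on one side and simulates the \name game to $\Adv$ on the other; whenever $\Adv$ issues a $\qRunTCP$ or challenge query for user $i$, $\AdvB$ forwards the query to its own TCP oracle and relays the TCP messages between $\Adv$ (playing the \TS) and the TCP challenger's user, so that $\Adv$ sees a consistent \tcp transcript.

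First I would simulate \registeruser and \obtainkstoken. The adversarial \TS sends $\xsenc$ with a range proof certifying $\pdec(\xsenc) \in [0, \grouporder)$; by extracting from this proof, $\AdvB$ recovers the implicit $\xs$ for each honest user, which is necessary later for aligning $\ctxt$ with the TCP oracle's share. In each \obtainkstoken session, $\AdvB$ uses extractability of $\extcommit{\cdot}{\cdot}$ to recover $\tokendisclose$ from $\discloseSetCommitment$ before committing to any witnesses. Knowing $\tokendisclose$ in advance, $\AdvB$ constructs honest values $(\ctxt_i, \kssrandomizer_i, \prand_i)$ only for $i \in \tokendisclose$ (these get opened) and replaces $\ksscommitment_i, \deltacommitment_i$ for $i \in \tokenhidden$ with commitments to dummy values, indistinguishable by the computational hiding of $\commit$ and $\extcommit{\cdot}{\cdot}$. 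The zero-knowledge proof of well-formedness of $\ksscommitment$ in step 6 is produced by the simulator.

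Then I would simulate \genshares using the TCP oracle. When $\Adv$ asks user $i$ to spend a token with input $\auxu$, $\AdvB$ queries $\qRunTCP(i, \auxu)$ and observes the randomized share $\xsfresh$ that the TCP challenger's user will send to the \TS. Using the extracted $\xs$, $\AdvB$ samples $\xsencdelta$ uniformly from $[2^{\lengthdelta}, 2^{\lengthdelta+1})$ subject to $\xsencdelta \equiv \xsfresh - \xs \pmod{\grouporder}$ (the choice of $\lengthdelta$ makes this set exponentially large), sets $\ctxt = \xsenc \cdot \penc(\xsencdelta)$ so that $\Adv$'s decryption of $\ctxt$ matches exactly $\xsfresh$, and builds the witness ciphertexts $\ctxt_i$ for $i \in \tokenhidden$ together with matching $\kssrandomizerDiff_i, \prandReveal_i$ satisfying $\ctxt = \ctxt_i \cdot \penc(\kssrandomizerDiff_i; \prandReveal_i)$. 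The blind signature is reused from the matching \obtainkstoken session (blindness of the scheme and fresh rerandomization of $\kssfreshcommitment$ break the obtain-to-spend link), and the zero-knowledge proof in \genshares is simulated.

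The main obstacle is arguing that the simulated \genshares transcript is statistically indistinguishable from the real one, since $\Adv$ can decrypt $\ctxt$ and each $\ctxt_i$ to recover the joint plaintext distribution $(\xs + \xsencdelta, \xs + \kssrandomizer_1, \ldots, \xs + \kssrandomizer_{\tokensecpar})$ in $\ZN$. The choice $\lengthdelta = \lengthdeltaval$ is tailored precisely for this: conditioned on $\xsfresh = \xs + \xsencdelta \pmod{\grouporder}$, the residues $\xs + \kssrandomizer_i \pmod{\grouporder}$ are statistically close to uniform and independent of $\xs$, matching the distribution the simulation produces when sampling the $\kssrandomizer_i$ freshly. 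Combining this statistical argument with the computational hops (range-proof and $\extcommit{\cdot}{\cdot}$ extraction, hiding of both commitment schemes, zero-knowledge simulation, and blind-signature blindness), $\Adv$'s view factors through the TCP oracle's output, so any non-negligible advantage of $\Adv$ against \name privacy transfers, up to negligible loss, to $\AdvB$ against TCP privacy.
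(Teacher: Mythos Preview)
Your proposal is essentially correct and follows the same blueprint as the paper: extract $\xs$ from the range proof, extract $\tokendisclose$ from $\extcommit{\cdot}{\cdot}$, replace the hidden commitments by dummies (hiding), simulate the zero-knowledge proofs in both \obtainkstoken and \genshares, use the statistical hiding guaranteed by $\lengthdelta$, and finish with blindness plus TCP privacy. The paper lays this out as an explicit game sequence (Games~0--7, then a three-step interpolation A/B/C between $b=0$ and $b=1$), whereas you compress it into a single reduction narrative; the ingredients are the same.

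Two differences are worth flagging. First, you cannot literally package this as one reduction $\AdvB$ to TCP privacy that also ``uses'' blindness: once your simulation says ``reuse the signature from the matching \obtainkstoken session,'' the signature shown in \genshares is tied to a specific obtain session, and whether the first challenge signature came from $U_0$'s or $U_1$'s session is precisely what blindness hides---so blindness must be its own hop, not folded into the TCP reduction. The paper makes this separation explicit: Game~A$\to$B swaps the two challenge signatures via the blind-signature game, and only then Game~B$\to$C swaps the users via TCP privacy. Your final paragraph does list blindness among the ``computational hops,'' so you have the right pieces, but the body of the reduction as written does not yet accommodate that hop. Second, the paper aligns $\ctxt$ with the TCP oracle's share by programming the random oracle (setting $H(\ctxt')=H(\ctxt)$) rather than by constructing $\ctxt$ late at \genshares time; your late-construction variant also works because the \genshares proof is simulated and $\tilde{C}$ is hiding, but it is a genuinely different mechanism from the paper's.
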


\begin{proofsketchesinbody}
\begin{proof}[Proof sketch]
  We first argue that we can remove all identifying information from the key-share tokens of the challenge users. First, we extract the server's key-shares $\sks{0}$ and $\sks{1}$ from the proof of knowledge in step 2 of the \registeruser protocol. Then we simulate the proof of knowledge in step 6 of \obtainkstoken, replace the ciphertext $\enctokenid$ by the encryption of zero (using the CPA security of the ElGamal encryption scheme), extract the subset $\tokendisclose$ so that we can send random commitments $\ksscommitment_i, \deltacommitment_i$ for $i \not\in \tokendisclose$ (because the commitment schemes are computationally hiding), and finally, we set the unrevealed ciphertexts $\ctxt_i = \penc(\kssrandomizer_i, \prand_i)$ for $i \not\in \tokendisclose$. None of these changes are detectable by the adversary during \obtainkstoken.

  To use tokens as requested by the adversary, the challenge users add $\sks{i}$ to their long-term share $\xp$ to compensate for the changes made so that \genshares completes successfully. Moreover, because the randomizers $\kssrandomizer_i$ statistically hide $\sks{}$, the adversary cannot detect the final change to the ciphertexts during \genshares.

  Therefore, by the blindness of the signature scheme, we can swap tokens between the users and still simulate protocols perfectly. Therefore, any adversary that can then still distinguish users must break the security of the TCP privacy game (with a colluding SP or with a honest SP).
  To extract secrets from the proofs and commitments, and to make the final reduction to TCP privacy, we model hash functions as random oracles.
\end{proof}
\end{proofsketchesinbody}

\fullversioncmd{See Appendix~\ref{sec:privacy-proof} for the proof.}
\confversioncmd{See the Appendix of the full version~\cite{fullversion} for the proof.}
 \section{Securing Protocols with \name} \label{Basic-keysharing}
\name enables the use of a central server in a common class of TCPs without incurring a privacy cost with respect to this server. It operates on specific linearly randomizable TCPs (with corresponding group order $\grouporder$) that satisfy the security an privacy properties in Games~\ref{game:tcp-security} and~\ref{game:tcp-privacy}. A central server that is willing to run these TCPs can trivially offer them in a privacy-preserving manner by implementing \name's protocols in Section~\ref{keyshare-tokens}. As a result, the central server can no longer distinguish users when it runs a TCP. Of course, it can still distinguish different TCPs and see inputs to the TCPs that are not protected by \name.

Many protocols already have threshold-cryptographic versions that satisfy the
necessary properties. For example, threshold variants of Schnorr signatures~\cite{GennaroJKR07} and ElGamal-based encryption schemes~\cite{ElGamal84,ShoupG02} rely on Shamir secret-sharing and are thus linearly randomizable.
A reduction to the original security property shows they also satisfy the TCP security property: given oracle access to a signer or decryptor for a fixed private key, we can answer the $\qTCP(\xsencdelta)$ queries by modifying the original response.
The protocols are also TCP private. The server-side protocols for signing and
decrypting operate solely on the secret-share and the common input, the message
or ciphertext. In Appendix~\ref{app:elgamal-tandem} we show how to protect ElGamal decryption with \name.

\name can enhance threshold-cryptographic versions of electronic cash~\cite{CamenischHL05,MiersG0R13}; group signatures~\cite{BichselCNSW10,CamenischL04}; and attribute-based credentials (ABC)~\cite{BBS+AuSM06,brands2000rethinking,Idemix1-CamenischH02,Idemix2-CamenischL02}. Only Brands' scheme has a threshold-cryptographic version~\cite{brands2000rethinking}. For the others, the threshold-cryptographic versions of the zero-knowledge proofs must be created. As an example, we show how to convert the BBS+ ABC scheme~\cite{BBS+AuSM06} into a \name-suitable threshold-cryptographic scheme. 

Not all threshold schemes are compatible with \name. Threshold DSA signatures are notoriously complicated. Some schemes use multiplicative secret-sharing (instead of additive)~\cite{MacKenzieR04,DoernerKLS18}, and others require additional information such as public keys or encryption keys that break the TCP privacy property~\cite{GennaroGN16,GennaroG18}. Similarly, threshold RSA encryption and decryption~\cite{Shoup00} need the public modulus, and hence do not satisfy the TCP privacy property.

\subsection{Use Case: Attribute-Based Credentials}
\label{sec:bbsplus}

Attribute-based credentials can be conceptualized as digital equivalents to classic identity documents such as passports. 
The owner of a credential can selectively disclose any subset of attributes to a
service provider so that the validity of the disclosed attributes can be verified. In many ABC systems credentials are unlinkable across disclosures, making users anonymous within the set of users having the same disclosed attributes. 

Credentials contain the user's secret key to bind credentials to a user, and to ensure that only the owner can use them. 
\name can be used to strengthen the security of this key to ensures that credentials cannot be abused while preserving users' privacy. 

We now show how to apply \name to BBS+ credentials~\cite{BBS+AuSM06} by converting its issuing and showing protocols into threshold-cryptographic alternatives. BBS+ credentials are anonymous credentials built from BBS+ signatures~\cite{BBS+AuSM06}. BBS+ signatures operate in a pairing setting and rely on discrete-logarithm based assumptions. Let $(\G_1, \G_2)$ be a bilinear group pair, both of prime order $\grouporder$, generated by $g$ and $h$ respectively. The pairing is given by $\hat{e}: \G_1 \times \G_2 \rightarrow \G_T$ where $\G_T$, also of order $\grouporder$, is generated by $\hat{e}(g,h)$. Let $l$ be the number of attributes. In the BBS+ credential scheme, an issuer randomly chooses generators $B, B_0,..,B_l \randin \G_1$, picks a private key $\skI \randin \Zp$, and computes $w=h^{\skI}$. The issuer's public key is $\pkI = (w, B, B_0, .., B_l)$. 

\para{Obtaining a credential.}
Attribute-based credentials contain the user's secret key as an attribute. For simplicity, we describe the \name BBS+ issuance and showing protocols below with two attributes: the secret key $\secret$ and an issuer-determined attribute $a_1$. To obtain a credential, the user and the \name server run the following TCP version of the issuance protocol with the issuer.
\newcommand{\TBBSissuance}{\code{TandemBBS.Issue()}\xspace}
\newcommand{\TBBSshowing}{\code{TandemBBS.Show()}\xspace}
Let $\xpfresh$ and $\xsfresh$ be the shares of the user's secret key $\secret = \xpfresh + \xsfresh$ held by the user and the \TS respectively. The user must commit to her secret key $\secret$ to allow the issuer to blindly sign it.
  As the user's secret key is shared between the user and the \TS, they both have to participate in creating the commitment. First, the user sends $B_0$ to the \TS so that it can compute $B_0^{\xsfresh}$. Then the user and the \TS create a commitment 
  $ U = B^{s'} B_0^{\xpfresh} B_0^{\xsfresh} = B^{s'} B_0^{\secret} $
  where $s' \in_R \Zp$.
To prove to the issuer that $U$ is well-formed, the user and the \TS construct the proof
$ \PK\{(\secret, s'): U = B^{s'} B_0^{\secret}\}. $
The construction of this proof is very similar to the threshold version of Schnorr's protocol in Fig.~\ref{fig:threshold-schnorr}. For completeness, we include the full protocol in
Fig.~\ref{fig:BBS+issuance-commitment-proof-distributed} in Appendix~\ref{sec:abcs-full-tcps}. If this proof of knowledge verifies, the issuer randomly generates  $s'', e \in_R \Zp$, calculates $$ A = \Big( g B^{s''} U B_1^{a_1} \Big)^{\frac{1}{e+\skI}} \quad  \in \G_1, $$ and sends the tuple $(A,e,s'')$ to the user. The user calculates  $s=s' + s''$ and stores the credential $\sigma = (A,e,s)$.

\para{Showing a credential.}
After the issuance protocol, the user can show the credential to authenticate to a service provider. We convert the showing protocol into a TCP that uses the \name server.

Using the showing protocol, the user can prove possession of a credential $\sigma = (A, e, s)$ over her key $\secret$ and a revealed attribute $a_1$ ($\secret$ remains hidden) by convincing the service provider the credential is valid, i.e., that
\begin{equation}
  \label{eq:have-signature}
  \hat{e}(A, h^ew)= \hat{e}(g B^s B_0^{\secret} B_1^{a_1}, h).
\end{equation}
We follow the approach by Au et al.~\cite{BBS+AuSM06} to prove this in zero-knowledge. Let $g_1 , g_2$ be generators in $\G_1$. First, the user creates commitments
$C_1 = Ag_2^{r_1}$ and $C_2 = g_1^{r_1} g_2^{r_2}$ 
for $r_1, r_2 \randin \Zp$, and sends them to the SP.
Finally, she and the \TS engage in the following zero-knowledge proof with the SP:
\begin{multline*}
  \PK \Big\{ (r_1, r_2, \alpha_1, \alpha_2, e, \secret, s) :
      C_2 =   g_1^{r_1} g_2^{r_2} \land C_2^e= g_1^{\alpha_1} g_2^{\alpha_2} \land \\
      \hat{e}(C_1, w) \hat{e}(C_1,h)^e =  \hat{e}(g,h) \hat{e}(B, h)^s
      \mathbf{\hat{e}(B_0, h)^{\secret}} \cdot\mbox{} \\
      \hat{e}(B_1, h)^{a_1} \hat{e}(g_2, w)^{r_1} \hat{e}(g_2, h)^{\alpha_1} \Big\}
\end{multline*}
to prove that she indeed posseses the signature over the hidden and the disclosed attributes and that equation \eqref{eq:have-signature} is satisfied. In the proof, $\alpha_1 = er_1$ and $\alpha_2 = er_2$. The user can herself generate the proofs for the first two conjuncts. The third conjuct, however, contains the user's secret key $\secret$ of which the user only has a share. Thus, the user has to contact the \TS to construct this part of the proof. This proof is just a proof of representation, as before, albeit a bit more complex.
As a result, a construction similar to Fig.~\ref{fig:threshold-schnorr}
and~\ref{fig:BBS+issuance-commitment-proof-distributed} (in the appendix),
allows the user and the \TS to jointly compute this proof. See
Appendix~\ref{sec:abcs-full-tcps} for the full protocol.

\para{Security and privacy of the TCPs.}
These TCPs satisfy the TCP security and privacy notions defined in Section~\ref{sec:security-privacy}.
The \TS computes zero-knowledge proofs of knowing $\xsfresh$. A malicious user learns nothing about $\xsfresh$ (thus nor $\xs$) as a result of the zero-knowledge property.
Hence, the TCP showing and issuance protocols satisfy the TCP security property (see Game~\ref{game:tcp-security}).

For privacy (see Game~\ref{game:tcp-privacy}), the \TS operates on a randomized key $\xsfresh$, so the \TS cannot distinguish users if the SP is honest. The indistinguishability of the credential scheme guarantees that the \TS cannot distinguish by colluding with the SP either.
Thus, the TCP showing protocol is private for honest and colluding SPs.

We refer to Appendix~\ref{app:full-tcps} for the full TCP
security and privacy proofs.

\para{Revocation and rate-limiting with \name.}
When applying \name to ABC schemes, the TS can rate-limit and block keys, and
therefore rate-limit and revoke credentials. Complex custom
solutions create revocable~\cite{PEREAAuTK11,BLACTsangAKS10} and
rate-limitable~\cite{CamenischHKLM06} credentials directly. While the end-result
is similar, \name makes different trust assumptions.

Consider the case where honest users want to protect themselves against
compromise of their credentials. The custom solutions
rely on the honesty of SPs to enforce rate-limits and revocations on
behalf of the user. \name instead ensures rate-limiting and blocking as long as
the TS, which the user chooses, is honest.

The \name approach in this section, however, cannot protect SPs against
malicious users as users are not forced to use \name. In
Appendix~\ref{sec:abc-rate-limiting} we show a modification that forces all users to
use the (same) TS. With this change, the TS can rate-limit and block any
misbehaving users on behalf of issuers and service providers, replacing complex
ad-hoc cryptographic techniques.
 \section{Performance Evaluation} \label{sec:Implementation}

We evaluate \name's computational and bandwidth cost. We use the ABC instantiation as a case study and compare its performance without key protection, with vanilla threshold-cryptographic version of the ABC protocols, and with \name protection.

\name consists of four protocols: \registeruser, \obtainkstoken, \genshares, and
\blockshare. We implemented in C the time-critical protocols, \obtainkstoken and
\genshares.\footnote{Code here: \url{https://github.com/spring-epfl/tandem}} We used Pedersen commitments~\cite{Pedersen91} as commitment scheme, and BBS+ credentials~\cite{BBS+AuSM06} to construct the blind signature (we use an extra attribute containing a serial number to ensure that the blind signature can be used only once). We use the RELIC cryptographic library to implement them~\cite{relic-toolkit}.\footnote{We use a BLS curve over a 381 bits field in RELIC. This setup ensures 128 bits security, while the group order remains 255 bits.} We use a recent implementation~\cite{BarbosaCF17} of Joye and Libert's additive homomorphic encryption scheme~\cite{JoyeL13}, see Appendix~\ref{sec:appendix:joye-libert}.
We set the modulus size to 2048 bits and the size of the plaintext space $\PaillierN$ to 394~bits, such that $\PaillierN > 2^{\lengthdelta+2}$ for $\tokensecpar < 64$ (recall $\lengthdelta$ is the size of the randomizers, see page~\ref{ref:size-lengthdelta}). With this setting, encrypting a single 394 bits plaintext takes 0.9\,ms whereas it takes 24.2\,ms to decrypt a ciphertext.

We empirically measure performance on a single core of an Intel i7-7700 running at 3.6\,GHz.

\para{Obtaining a token.} We first justify our choice for the parameters $\tokensecpar$. Our analysis shows that an attacker can break \name's security property by constructing a key-share token for a blocked user with probability $\binom{2\tokensecpar}{\tokensecpar}^{-1}$. Hence, $\tokensecpar = 42$ gives 80 bits of security, and $\tokensecpar = 66$ gives 128 bits security. However, \obtainkstoken is an \emph{interactive} protocol.
The success probability of an attacker is limited by how often the \TS lets the attacker try to construct a malicious token rather than by the adversary's computational power. As the \TS bans users trying to construct malicious tokens, one can choose a smaller $\tokensecpar$ in practice. In a system with 100\,000 users, $\tokensecpar = 20$ ensures that the probability that an attacker (corrupting all users) can at least \emph{once} use any blocked key is less than $10^{-6}$.

\begin{figure}[tb]
\begin{tikzpicture}
    \begin{axis}[
        xlabel=Difficulty ($\tokensecpar$),
        ylabel=Time (ms),
        legend entries={Obtain User, Obtain Server, GenShares Server},
        legend pos=north west,
        ymin=0,ymax=0.2,
        xmin=0,xmax=70,
        scaled y ticks=base 10:3,
        xlabel near ticks,
        ylabel near ticks,
        legend style={font=\footnotesize},
        label style={font=\footnotesize},
        tick label style={font=\footnotesize},
        ytick scale label code/.code={},
        height=45mm, width=\columnwidth,
      ]
      \pgfplotsset{
        every axis plot post/.style={
          error bars/.cd,
          y dir=both, y explicit,
        },
      }

      \addplot+[black,mark=square] table[x=difficulty,y=ObtainUserTotal] {data/exp-data.dat};
      \addplot+[blue,mark=o] table[x=difficulty,y=ObtainServerTotal] {data/exp-data.dat};
      \addplot+[red,mark=+] table[x=difficulty,y=GenSharesServerTotal] {data/exp-data.dat};

    \end{axis}
  \end{tikzpicture}
  \caption{\obtainkstoken protocol computing time at the user (black) and the server side (blue), and \genshares protocol computation time at the server side (red) for increasing difficulty levels $\tokensecpar$ excluding revocation cost.\label{fig:time-obtain}}
\end{figure}

Fig.~\ref{fig:time-obtain} shows the computing time (without communication) for the \obtainkstoken protocol at the user (black) and server (blue) for different values of the parameter $\tokensecpar$. The homomorphic encryption scheme---creating the ciphertexts (user), and checking a subset of these (\TS)---dominates the computational cost. Our experiments reveal that the timing variance across executions is negligible. The bandwidth cost is low: users send and receive $1314\tokensecpar + 405$ and $2\tokensecpar + 117$ bytes respectively. For a security level of $\tokensecpar = 20$, the user sends about 26\,KiB and receives less than 200 bytes.

\para{Using the key.}
We first examine the cost of using a key without the token-revocation check.
On the user side running \genshares is very cheap: less than 5\,ms even for $\tokensecpar = 60$. Users send $594k + 516$ bytes to the server, i.e.,
12\,KiB for $\tokensecpar = 20$ and 35\,KiB for $\tokensecpar = 60$. We show the server's computational cost for recovering the \TS key-share from the token in Fig.~\ref{fig:time-obtain}. For a reasonable security level of $\tokensecpar = 20$, the server computational overhead is around 50\,ms.
The sending of the token in the \genshares protocol can be combined with the request to start the TCP, resulting in no extra latency on top of the delay incurred by the Tor network~\cite{TorDingledineMS04} (1--2\,s to send and receive a small amount of data on a fresh circuit\footnote{As reported by \url{https://metrics.torproject.org/torperf.html}, visited August 31, 2019.}). Circuit creation and $\genshares$ can be run preemptively, thereby reducing the user-perceived delay.

\name uses standard revocation techniques to revoke key-share tokens. Therefore, we did not implement revocation. Tokens expire automatically, so we would only need to block tokens from the current epoch. If the number of blocked users per epoch is small (e.g., for short epochs, such as a day), \name could use BLAC~\cite{BLACTsangAKS10}. BLAC is simple, but proving that a user's tokens are not revoked has linear complexity. Based on the results by Henry and Goldberg~\cite{HenryG13}, we estimate that for 100 blocked users a user needs an additional 20\,ms to prove non-revocation. The \TS needs about 10\,ms to check this proof. If the number of revoked users per epoch is larger, then it is more efficient to use dynamic accumulators~\cite{CamenischKS09}, which are more complex, but constant time. We estimate them to have a 10--20\,ms cost.

Given the above measurements, a modern 4-core server can participate in approximately 50 TCPs per second (not counting the cost of the application-dependent TCP itself), i.e., serve 3\,000 users per minute, requiring about 20 Mbit/s incoming bandwidth.

\para{\registeruser and \blockshare. } These protocols are run rarely (only upon registration and for blocking) and are thus not critical for scalability. We estimate the cost for \registeruser to be well below a second for both the user and the \TS (given its similarity with the \obtainkstoken and \genshares protocols, and that the cost of the range proof is around 500\,ms).
Users have no computational cost when running $\blockshare$. The server's cost is less than a few milliseconds when using BLAC~\cite{HenryG13} or dynamic accumulators~\cite{CamenischKS09}.

\para{Comparison.}
Table~\ref{tab:tandem-overhead} compares the computational cost of creating a single BBS+ showing proof with 5 hidden attributes without key protection, using straight-forward TCP version of the disclosure proof, and using the \name-augmented TCP with $\tokensecpar = 20$.

Without a TCP, the credential showing is very fast and, as there is no other party involved in the use of the key, the showing of the credential is perfectly private. However, it is not possible to perform key blocking nor limit the key-usage without changing the credential type (e.g.~\cite{CamenischHKLM06}) \emph{and} trusting the SP.
The traditional TCP version, has minimal overhead (1\,ms at the server) and provides key blocking and rate limiting, at the cost of privacy. \name provides all three properties. Without taking into account the \obtainkstoken operation that happens offline, the user's overhead is negligible (4\,ms), and well below a second (52\,ms) for the server. In all cases, the delays due to \name's cryptographic operations are small compared to Tor's network delay.

\begin{table}
  \centering
  \caption{\label{tab:tandem-overhead} Comparison of computational cost and properties when running the showing protocol of the BBS+ ABC scheme. We compare not using a TCP, using a traditional TCP, and using a TCP with \name ($\tokensecpar = 20$, excluding revocation cost).}
  \begin{tabular}{lccc}
    & No TCP & Vanilla TCP & TCP + \name \\
    \midrule
    Obtain Token \\
    \enspace User   & -     & -     & 57\,ms \\
    \enspace Server & -     & -     & 30\,ms \\
    Run Protocol \\
    \enspace User    & 5\,ms & 5\,ms & 5 + 4\,ms \\ 
    \enspace Server  & -     & 1\,ms & 1 + 52\,ms \\
    \midrule
    Key blocking &  $\times$   & \checkmark & \checkmark \\
    Rate limiting & $\times$   & \checkmark & \checkmark \\
    Privacy &       \checkmark & $\times$   & \checkmark \\
  \end{tabular}
\end{table}
 \section{Conclusion} \label{sec:Conclusion}

Protecting cryptographic keys is imperative to maintain the security of cryptographic protocols.
As users' devices are often insecure, the community has turned to threshold-cryptographic protocols to strengthen the security of keys.
When run with a central server, however, these protocols raise privacy concerns.
In this paper, we have proposed \name, a provably secure scheme that, when
composed with threshold-cryptographic protocols, provides privacy-preserving
usage of keys. \name also enables users to block their keys and rate-limit their
usage.
Our prototype implementation shows that for reasonable security parameters
\name's protocols run in less than 60\,ms, showing \name's practicality.

\name is particularly suited for privacy-friendly applications such as eCash and ABCs because it retains their inherent privacy properties. Yet, \name can be used to strengthen a wide variety of primitives, including signature and encryption schemes, as long as they can be transformed into linearly-randomizable threshold protocols. Using attribute-based credentials we have shown that deriving such a threshold protocol can be done with standard techniques, and that thereafter adding \name is straightforward.
 \section*{Acknowledgments}
This research is partially funded by the NEXTLEAP
project\footnote{\url{https://nextleap.eu}} within the European Union's Horizon
2020 Framework Program for Research and Innovation (H2020-ICT-2015, ICT-10-2015)
under grant agreement 688722; the Netherlands Organization for Scientific
Research (NWO); and KPN under project ‘Own Your Own Identity (OYOI)’. 
\bibliographystyle{ACM-Reference-Format}
\bibliography{tandem}

\appendix
\section{\name Details and Extensions}

\subsection{The Joye-Libert Encryption Scheme}
\label{sec:appendix:joye-libert}

For completeness, we summarize here the Joye and Libert additively homomorphic
encryption scheme~\cite{JoyeL13} that \name uses. These definitions are
reproduced from the original paper~\cite{JoyeL13}.

\newcommand{\legendre}[2]{\genfrac{(}{)}{}{}{#1}{#2}}
\begin{definition}[\cite{JoyeL13}]
  Let $q$ be an odd prime and let $n \geq 2$ such that $n | (q - 1)$. Then we
  define the \emph{$n$-th power residue symbol modulo q},
  $\legendre{a}{q}_{\!n}$,
  as the smallest (in terms of absolute number) representation of $a^{\frac{q - 1}{n}} \mod{q}$.
\end{definition}

\begin{definition}[\cite{JoyeL13}]
  \newcommand{\jlsize}{\beta}
  \newcommand{\jlmodulus}{n}
  The Joye-Libert additively homomorphic encryption scheme is given by the
  following algorithms.
  \begin{itemize}
  \item $\pkeygen(1^\secpar, \jlsize).$ The key  generation algorithm takes as
    input a security parameter $\secpar$ and
    an integer $\jlsize \geq 1$ indicating the bit-size of the message space.
    Generate primes $r, s \equiv 1
    \pmod{2^{\jlsize}}$, and set the modulus $\jlmodulus = rs$. Finally, pick $y
    \randin \mathbb{J}_{\jlmodulus} \setminus \mathbb{QR}_{\jlmodulus}$, i.e., a
    number whose Jacobi symbol is 1, but that is not a quadratic residue. Return the public
    key $\ppk = (\jlmodulus, y, \jlsize)$ and the private key $\psk = r.$
    The message space will be $\{0, \ldots, 2^{\jlsize} - 1\}$
  \item $\penc(m).$ To encrypt a message $m \in \{0, \ldots, 2^{\jlsize} - 1\}$
    against public key $\ppk = (\jlmodulus, y, \jlsize)$ pick a random
    $r \in \mathbb{Z}_{\jlmodulus}^*$ and return the ciphertext $\ctxt = y^m
    r^{2^{\jlsize}} \mod{\jlmodulus}.$
  \item $\pdec(\ctxt).$ Given a ciphertext $\ctxt$ and a private key $\psk = r$,
    return the message $m \in \{0, \ldots, 2^{\jlsize} - 1\}$ such that
    \begin{equation*}
      \left[ \legendre{y}{r}_{\!\!2^{\jlsize}} \right]^{m} = 
      \legendre{\ctxt}{r}_{\!\!2^{\jlsize}} \pmod{r}.  
    \end{equation*}
    See Joye and Libert~\cite{JoyeL13} for an efficient algorithm to finding $m$. 
  \end{itemize}
\end{definition}

\subsection{Using More Than One \name Server}
\label{sec:appendix-multiple-ts}

\newcommand{\threshold}{t}
\newcommand{\nrparties}{n}

\newcommand{\xss}[1]{x_S^{#1}}
\newcommand{\xssenc}[1]{\overline{x}_S^{#1}}

So far we described \name for use with a single \name server (\TS). In this appendix we show how we can extend $\name$ to support multiple \TSs at the same time. Depending on the configuration, using more than one \TS provides robustness against \TSs that are off-line or deny service to the user, and protects the user against \TSs that violate the security assumptions by colluding with an attacker.

The idea is to use a $\threshold$-out-of-$\nrparties$ Shamir secret-sharing of
the server's secret $\xs$ and to give each \TS one secret-share of $\xs$. Now,
the user needs to cooperate with $\threshold$ \TSs to use her key. Note that the
user's secret $\xp$ is still needed, so no coalition of \TSs can, by themselves,
collude to use the user's key. An attacker that corrupts the user's device, and
thus learns $\xp$, needs to additionally collude with $\threshold$ \TSs to use
the user's key. Therefore, using $\threshold > 1$ strengthens key security if
the security assumption for \TSs cannot be fully guaranteed. If $\threshold <
\nrparties$ some of the \TSs need not participate when the user wants to use her
key, giving the user robustness against denial of service by at most $\nrparties
- \threshold$ non-cooperating \TSs.

The threshold-cryptographic protocols that we have considered in this paper,
i.e., Schnorr signatures, ElGamal encryption and credential schemes, can be
extended to allow Shamir secret sharing of the \TSs keys.

When using multiple \TSs, users must decide which ones compute the TCP correctly.
The user can directly verify the zero-knowledge proofs in for example Schnorr
signatures and the credential protocols (see next section) to check that the \TS
behaved correctly. The ElGamal encryption scheme can be extended so the \TS
proves correct behavior.

We now show how to modify the \name protocols to support multiple \TSs. During \registeruser the user registers with each of the $\nrparties$ \TSs. However, in step~\ref{step:register:choose-secrets}, the $\nrparties$ \TSs jointly compute a $\threshold$-out-of-$\nrparties$ Shamir secret-sharing polynomial $f(X) = a_0 + a_1 X + \ldots + a_{\threshold - 1} X^{\threshold - 1}$ with $a_i \in \Zp$, for example using Pedersen's verifiable secret-sharing protocol~\cite{Pedersen91a}. The server's secret $\xs$ shared by the $\nrparties$ \TSs then equals $a_0$. \name server $i$ stores $\xss{i} = f(i)$ and sends the corresponding ciphertext $\xssenc{i}$ to the user together with the range proof. Each \TS uses its own key-pair for the additively homomorphic encryption scheme.

To obtain a one-time-use key-share token the user runs \obtainkstoken with each of the \TSs in parallel. To support multiple \TSs, the user changes how it chooses value of $\xsencdelta_i$ that it uses for the $i$th \TS in step~\ref{step:obtaintoken:pick-delta}. The user creates a new, random, secret-sharing polynomial $f'(X) = \xsencdelta + b_1 X + \ldots + b_{\threshold - 1} X^{\threshold - 1}$ for $\xsencdelta, b_i \randin \Zp$. Then, the user picks $\xsencdelta_i \randin [2^{\lengthdelta}, 2^{\lengthdelta + 1})$ subject to the constraint that $\xsencdelta_i = f'(i) \pmod{\grouporder}$. Thereafter, the user continues as before.

To use her key, she picks $\threshold$ tokens, and runs \genshares with each of the corresponding \TSs. By choice of the $\xsencdelta_i$ and $\xss{i}$ the \TSs operate on the $\threshold$-out-of-$\nrparties$ shared secret $\xs + \xsencdelta \pmod{\grouporder}$. The user then uses $\xp - \xsencdelta \pmod{\grouporder}$ so that the resulting secret still is $\secret = \xp + \xs.$ \section{Attribute-Based Credentials}
\label{sec:appendix-abcs}

In this section we specify the TCP protocols used for issuance and showing of
BBS+ credentials. Next, we prove that they satisfy the TCP security and privacy
conditions required for \name.

\newcommand{\nonce}{\code{nonce}}
\newcommand{\context}{\code{ctx}}
\begin{figure*}[tb]
  \centering
    \begin{tabular}{lclcl}
      \TS & & User & & Issuer \\
      $\xsfresh \in \Zp, B_0, \context$ & & $\xpfresh \in \Zp, s', g, B, B_0, \context$ & &  $U, \context$ \\
      \midrule
 
      $\serrand \randin \Zp$ & & $\phrand, \srand \randin \Zp$ & &\\ 
      $\sercommit=B_0^{\serrand}$ & & $\phcommit=B_0^{\phrand}$ \\
      $\Ucommit = \Ucommit' \sercommit $ & $\diagramrecv{\Ucommit',\nonce}$ & $\Ucommit' = B^{\srand} \phcommit $ & \diagramrecv{\nonce} & $\nonce \randin \{0,1\}^{128}$ \\
      $\chal = H(\Ucommit \parallel \context \parallel \nonce)$ \\
      $\serResp=\serrand+\chal\cdot\xsfresh$ & $\diagramsend{\Ucommit, \serResp}$ &
               $\chal = H(\Ucommit \parallel \context \parallel \nonce)$ \\
      & & $\sresp = \srand +\chal\cdot s'$ & & \\
      & & $\phResp = \phrand+\chal\cdot\xpfresh$ & & \\
      & & $\xusresp = \phResp+\serResp$ & $\diagramsend{\chal, \sresp, \xusresp}$ & $\Ucommit'' = U^{-\chal}\cdot B^{\sresp}\cdot B_0^{\xusresp}$\\
      & & & & $\chal \? H(\Ucommit'' \parallel \context \parallel \nonce)$\\
      \end{tabular}
  \caption{\label{fig:BBS+issuance-commitment-proof-distributed}Full details of
    the non-interactive proof of knowledge of the user's commitment $ U = B^{s'} B_0^{\xp}
    B_0^{\xs} $ in the BBS+ TCP issuance protocol, where $\context = U
    \parallel B \parallel B_0$ captures the statement to be proven. The \name server only knows $\xsfresh$ and the user knows $\xpfresh$ and the randomness $s'$ (recall $\xsfresh$ and $\xpfresh$ are the respective outputs of the \genshares protocol). The \TS effectively creates a zero-knowledge proof of knowing $\xsfresh$.}
\end{figure*}

\subsection{The Full TCP Protocols}
\label{sec:abcs-full-tcps}
In the body of the paper we argued that the following proof of knowledge
\begin{equation*}
\PK\{(\secret, s'): U = B^{s'} B_0^{\secret}\}.
\end{equation*}
in the BBS+ issuance protocol can be converted to a threshold-cryptographic
version following the ideas of the threshold Schnorr protocol in
Fig.~\ref{fig:threshold-schnorr}. We show the full, non-interactive version of
this protocol in Fig.~\ref{fig:BBS+issuance-commitment-proof-distributed}. The
issuer creates a nonce $\nonce$ to ensure freshness. In these protocols $H:
\{0,1\} \to \Zp$ is a cryptographic hash function which we will later model as a
random oracle.

To enable BBS+ issuance and verification, the TS must participate in two
protocols, one for issuance, shown in
Fig.~\ref{fig:BBS+issuance-commitment-proof-distributed}, and one very similar
protocol for showing credentials. We summarize the TS's side of these protocols.

\begin{protocol}
  \label{tcp:bbsplus}
  The \tcpts protocol run by the TS for BBS+ schemes is as follows. The TS takes
  as input the randomized secret share $\xsfresh$. First the user
  indicates to the TS whether she wants the TS to participate in an issuance
  protocol or a showing protocol. We assume that the group order $\grouporder$
  and the description of the relevant groups are known to the TS.

  \para{Issuance.}
  If the user indicates an issuance protocol, they proceed as follows.
  \begin{enumerate}
    \item The user sends to the TS the generator $B_0$, the
      context $\context = U \parallel B \parallel B_0$, a nonce $\nonce$, and
      the partial commitment $\Ucommit'$.
    \item The TS picks $\serrand \randin \Zp$, and computes the final commitment
      $\Ucommit = \Ucommit' B_0^{\serrand}$. Next, the TS computes the challenge
      $\chal = H(\Ucommit \parallel \context \parallel \nonce)$ and computes its
      response $\serResp = \serrand + \chal \cdot \xsfresh$. The TS sends
      $\Ucommit$ and $\serResp$ to the user.
  \end{enumerate}

  \para{Showing.}
  If the user instead indicates a showing protocol, they proceed as follows.
  Let $E_1 \in \G_1, E_2 \in \G_1, E_3 \in \G_T$ represent the commitments in
  the zero-knowledge proof corresponding to the three respective conjuncts. And let $E_3'$ be the partial commitment without the TS's contribution.
  \begin{enumerate}
    \item The user sends to the TS the generator $\hat{e}(B_0, h)$, the
      context $\context$ representing the bases and the full proof statement, a
      nonce $\nonce$, the commitments $E_1, E_2$ and the partial commitment $\tilde{E_3}'$.
    \item The TS picks $\serrand \randin \Zp$, and computes the final commitment
      $E_3 = E_3' \hat{e}(B_0, h)^{\serrand}$. Next, the TS computes the challenge
      $\chal = H(E_1 \parallel E_2 \parallel E_3 \parallel \context \parallel \nonce)$ and computes its
      response $\serResp = \serrand + \chal \cdot \xsfresh$. The TS sends
      $E_3$ and $\serResp$ to the user.
  \end{enumerate}
\end{protocol}

\subsection{TCP Security and Privacy}
\label{app:full-tcps}

We now prove that the TCP protocol in Protocol~\ref{tcp:bbsplus} is TCP secure
(see Game~\ref{game:tcp-security}). In this case, the SP models the role of
credential issuer and credential verifier. To win in the challenge phase, the
adversary should convince the SP (acting as verifier) that it holds a valid
credential. Of course, this only makes sense if this credential was issued
against a key $\secret = \xp + \xs$ protected by the \TS. Therefore, the SP will
only accept in the challenge phase if it can confirm that it issued this
credential on a \TS protected key $\secret$.

To enable tracking of protected credentials, we assume that the credential
includes a random attribute $a_1 \randin \Zp$ known to the issuer. In most
instances of attribute-based credential schemes such an attribute either already
exists (e.g., a user identifier) or can be cheaply added. The particular choice
of random attribute is not important. What matters is that all protected
credentials have a set of attributes that differentiates them from non-protected
credentials.

To track which credentials are protected, the SP and TS proceed as follows. For
every credential that the SP issues, it checks that the challenge $\chal$ in the
proof of knowledge of the commitment $U$ was
computed by the \TS when running $\tcpts$ in issuance mode. As a result, the
SP can be sure that the TS-protected keyshare $\xsfresh = \xs + \delta$ is included in $U$.
If this is the case, the SP stores the (random) attribute $a_1 \randin \Zp$ for
this credential in a list.

During the challenge phase of the TCP security game, the adversary must demonstrate possession of a
TS-protected credential. To this end, the adversary will reveal the attribute
$a_1$ as part of the showing protocol. The SP will accept, and the adversary
will win, if attribute $a_1$ matches a protected credential.

\begin{theorem}
  Provided that the discrete logarithm problem is hard in $\G_1$ and BBS+
  credentials are unforgeable, the \tcpts protocol in Protocol~\ref{tcp:bbsplus}
  is TCP secure (see Game~\ref{game:tcp-security}) in the random oracle model for $H$.
\end{theorem}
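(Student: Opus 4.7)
My plan is to reduce TCP security to the discrete logarithm assumption in $\G_1$ combined with BBS+ signature unforgeability, working in the random oracle model for $H$. A reduction $\mathcal{R}$ receives a DL instance $(B_0, Y)$ and implicitly sets $Y = B_0^{\xs}$, so that the \TS's long-term share is the unknown logarithm. Because $\mathcal{R}$ plays both the \TS and the SP, and the SP role only requires the BBS+ signing key (which $\mathcal{R}$ generates itself), it can issue credentials honestly, maintain the list of protected attributes $a_1$, and verify showings; the delicate task is to simulate the \TS without knowing $\xs$.

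To answer $\qTCP(\xsencdelta)$ queries, I note that the \TS's contribution in either mode is a Schnorr-style proof of knowledge of $\xsfresh = \xs + \xsencdelta$ whose statement element is computable from $Y$ and $\xsencdelta$ alone: in issuance, $B_0^{\xsfresh} = Y \cdot B_0^{\xsencdelta}$, and in showing, $\hat{e}(B_0,h)^{\xsfresh} = \hat{e}(B_0,h)^{\xsencdelta}\cdot\hat{e}(Y,h)$. Hence $\mathcal{R}$ runs the standard HVZK simulator: it samples $\chal$ and $\serResp$ uniformly, derives the \TS commitment from the verification equation, combines it with the adversary's partial commitment $\Ucommit'$, and programs $H$ at the resulting Fiat-Shamir input to return $\chal$. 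Freshness of the nonce---provided by the SP (also controlled by $\mathcal{R}$) during issuance and sampled by $\mathcal{R}$ for each showing---ensures that this programming succeeds except with negligible probability.

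When $\mathcal{A}$ wins, it completes a valid showing on some revealed attribute $a_1$ that, by the protection condition, lies in $\mathcal{R}$'s list. Applying the forking lemma to the Fiat-Shamir-compiled showing proof, $\mathcal{R}$ extracts $\secret$ together with a valid BBS+ signature $(A,e,s)$ on $(\secret, a_1)$. If this pair was never signed during the query phase, the tuple $(A,e,s,\secret,a_1)$ is a BBS+ forgery and $\mathcal{R}$ outputs it. Otherwise, $a_1$ identifies a unique issuance transcript of $\mathcal{A}$; fixing $\mathcal{A}$'s random tape and rewinding that interaction with a different SP-supplied nonce yields a second accepting transcript. Because the user's side of the issuance proof is itself a Schnorr proof of knowledge of $(s', \xpfresh)$ against the bases $(B, B_0)$, special soundness on the two transcripts (whose user-side commitments coincide) extracts $\xpfresh$; combining with the $\xsencdelta$ used in that query yields $\xp$, and $\xs = \secret - \xp$ solves the DL instance.

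The main obstacle is orchestrating this nested rewinding consistently. The outer forking on the showing proof and the inner rewinding on the issuance proof each introduce a polynomial loss, and across branches $\mathcal{R}$ must keep the simulated \TS responses and the programmed random oracle mutually consistent, in particular re-simulating the \TS on the second issuance branch with the new challenge value (which is always possible by HVZK). A subtler point is to cleanly isolate the user's $\phResp$ from the \TS's $\serResp$ on each branch: this relies on the fact that $\mathcal{R}$ picks $\serResp$ itself during simulation, so $\phResp = \xusresp - \serResp$ is computable on each branch, and two $\phResp$ values for different Fiat-Shamir challenges but identical $\phrand$ then suffice to extract $\xpfresh$ by Schnorr's special soundness.
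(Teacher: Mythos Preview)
Your overall strategy mirrors the paper's proof: embed a DL instance as the unknown TS share $\xs$, simulate the TS's Schnorr-style contributions in both modes via HVZK and random-oracle programming, extract the full secret $\secret$ from the challenge-phase showing by rewinding, and recover $\xs$ by subtracting off the user's contribution recorded during the matching issuance. Your handling of the forgery branch (invoking BBS+ unforgeability to rule out a credential that was never issued) is also in line with the paper, just phrased as an explicit case split rather than a standing hypothesis; and your remark that $\phResp = \xusresp - \serResp$ is computable because $\mathcal{R}$ chose $\serResp$ during simulation is exactly what the paper uses.

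There is, however, a genuine gap in how you extract the user's share during issuance. You propose to rewind the identified issuance interaction by supplying a \emph{different SP nonce}, and you then claim the two resulting transcripts have coinciding user-side commitments $\Ucommit'$. But in the protocol of Figure~\ref{fig:BBS+issuance-commitment-proof-distributed} the nonce is delivered \emph{before} the adversary chooses $\phrand,\srand$ and forms $\Ucommit'$. A malicious adversary with fixed random tape may let its randomizers depend on the nonce (say $\phrand = H'(\nonce)$ for some private function $H'$), so distinct nonces can produce distinct $\Ucommit'$, and special soundness no longer applies.

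The paper avoids this by rewinding at a later point: \emph{after} the adversary has already sent $\Ucommit',\nonce$ to the simulated TS, so $\Ucommit'$ is genuinely frozen. On the second branch the reduction re-simulates the TS with a freshly chosen challenge $\chal'$ (which it can do because it programs $H$ at the point $\Ucommit\parallel\context\parallel\nonce$), and obtains a second accepting response from the adversary. Subtracting the known simulated $\serResp$ on each branch yields two user responses for the same $\Ucommit'$ and distinct challenges, whence the user's share by special soundness. This is precisely the inner rewinding you sketch in your final paragraph; you only need to trigger the fresh challenge through the TS simulation rather than through a fresh SP nonce.
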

In the proof we show that if the adversary successfully manages to prove
possession of a BBS+ credential with a key that is shared with the TS, then it
must also be able to break discrete logarithms. To reduce to the DL problem, the
challenger will act as if the (unknown) discrete logarithm equals $\xs$. Since the \TS
doesn't know $\xs$, we use the random
oracle to simulate the proofs in which the TS is involved.

In the challenge phase, 
we then extract the
secret, and therefore the discrete logarithm from the successful prover.
Note
that the TS does not participate during the challenge phase, so there are no
conflicts with it simulating the proofs that it then also tries to extract. As a result, we break the discrete logarithm.

\begin{proof}
Suppose that adversary $\Adv$ can convince the SP that it possesses a
credential containing a valid (i.e., protected by the TS) attribute $a_1$. Since
we assumed that forging BBS+ credentials is hard, $\Adv$ must have proven
possession of one of the credentials issued by the TS. We will build an
adversary $\AdvB$ to break the discrete logarithm assumption in $\G_1$.

Let $(G, Z)$ be a discrete logarithm instance in $\G_1$. The goal of $\AdvB$ is to
find $z$ such that $Z = G^z$. To this end, $\AdvB$ sets $B_0 = G$ and proceeds
as if $\xs = z$. Since $\AdvB$ can no longer complete the zero-knowledge
proofs in \tcpts when it receives $\TCP(\delta)$ queries, it simulates them using the random oracle. Let $Y = Z B_0^{\delta}$ be
the randomized public key corresponding to the TS's randomized secret $\xsfresh
= \xs + \delta$ (recall that $\AdvB$ does not know $\xsfresh$).
\begin{enumerate}
  \item For the issuance mode, $\AdvB$ proceeds as follows in step 2. It
    picks $\chal, \serResp \randin \Zp$, sets $\sercommit = B_0^{\serResp}
    Y^{-\chal}$ and computes $\Ucommit = \Ucommit' \sercommit$. Finally, it
    updates the random oracle so that $H(\Ucommit
    \parallel \context \parallel \nonce)$ equals $\chal$.
  \item For the showing mode, $\AdvB$ proceeds similarly. In step 2 it picks
    $\chal, \serResp \randin \Zp$, and computes $E_3 = E_3'
    \hat{e}(B_0^{\serResp}Y^{-\chal}, h).$ Finally, it updates $H$ such that
    $H(E_1 \parallel E_2 \parallel E_3 \parallel \context \parallel \nonce)$
    equals $\chal$.
\end{enumerate}
In both cases, the probability that patching $H$ fails is negligible, because the
input to $H$ is random from the perspective of $\Adv$.

To recover the discrete logarithm of $Z$ we 
must know the user's key $\xp$ such that the credential contains $\secret = \xs
+ \xp$. In fact, nothing limits the user from using a different $\xp$ for each
credential. Therefore, we extract this user-related part as follows.

During the issuance protocol with the SP, the user will interact with the TS to
create the proof of knowledge $(\chal, \sresp, \xusresp)$ as in
Fig.~\ref{fig:BBS+issuance-commitment-proof-distributed} and send it to the SP
acting as issuer. Recall, \AdvB controls SP.
After receiving the proof, \AdvB will rewind $\Adv$ to the
point where it sent $\Ucommit', \nonce$ to the TS. At this point 
$\Adv$'s randomizers $\phrand, \srand$ are fixed. Then \AdvB picks another
challenge $\chal'$ and proceeds as before. $\Adv$ will send another proof
$(\chal', \sresp', \xusresp')$ to the SP. By dividing out the factors the TS
created by simulating its proofs, we end up with two traces $(\Ucommit', \chal,
\sresp, \phResp)$ and $(\Ucommit', \chal', \sresp', \phResp')$ from which \AdvB
extracts the user's secret $\xp$. Therefore, the credential must contain the
secret $\secret = \xp + \xs + \delta$.

Finally, to extract the discrete logarithm of $Z$, $\AdvB$ rewinds $\Adv$ in the
challenge phase, and extracts the secrets, including the secret key $\secret$
encoded in the credential. It looks up the corresponding value $\xp$ extracted
when issuing this credential and returns the discrete logarithm $z = \secret -
\xp - \delta$ of $Z$ with respect to $G$.
\end{proof}

In the full \name security game (Game~\ref{game:security}) we require
additionally that the protocol is only computable by $U^*$ so that the
challenger can confirm a win. This property is trivially satisfied for any
protocol that identifies the user, such as Schnorr’s proof of identification or
signature schemes. In those cases, the SP simply asks for a new proof of
identity or a signature on a new message; and security relates to
the non-impersonation property and non-forgeability properties of the underlying
schemes.

When applying \name to protocols where the user is anonymous with respect to the
SP, e.g., when showing a credential, this condition requires a little bit more
work to verify. Formally, we require that the protocol \tcprp run by the SP in
the challenge phase takes as extra input the identity of a user $U^*$. The
protocol \tcprp will only accept if the SP can verify the identity of user $U^*$.

To facilitate this check for the BBS+ protocols above, we proceed as follows.
First, for every issuance that the adversary participates in, it can choose to
reveal the token owner's identity $U$ to the SP. When the adversary does, the SP
links the random attribute $a_1$ to user $U$. Two, during the challenge phase,
the adversary reveals the attribute $a_1$. If the SP recorded this attribute as corresponding to user $U^*$, the adversary wins.

The TCP privacy property (see Game~\ref{game:tcp-privacy}) follows by
inspection. First, note that the key $\xsfresh$ that users send to the TS
(controlled by the adversary) information theoretically hides the user's
identity. Next, the values that the users send in step 1 of the TCP protocol
(see Protocol~\ref{tcp:bbsplus} above) are independent from any user secret.
Therefore if a coalition of SP and TS can distinguish users, this must be
because of what the user sends to the SP. The theorem follows.

\begin{theorem}
  The \tcpts protocol in Protocol~\ref{tcp:bbsplus}
  is TCP Private (see Game~\ref{game:tcp-security}) against honest SPs.
  Moreover, provided the issuance and showing runs do not identify the user to
  the SP, then these protocols are also TCP private against colluding SPs.
\end{theorem}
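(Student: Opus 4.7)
The plan is to establish both claims by demonstrating that the TS's view during any \tcpts invocation is statistically independent of which user is executing the protocol; everything else then reduces to what the SP alone could already compute. This matches the paper's earlier ``by inspection'' style argument in the TCP privacy game, so the main job is to enumerate the TS's view carefully and then perform a clean reduction to the assumed SP-side indistinguishability.

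First I would enumerate the TS's view in a single run of the TCP. In both issuance and showing mode the TS receives (i) the randomized secret share $\xsfresh = \sks{i} + \kssrandomizer \pmod{\grouporder}$ sent during \genshares and (ii) the first-round message of \tcpts. For (i), the user samples $\kssrandomizer \randin \Zp$ freshly, so $\xsfresh$ is uniform on $\Zp$ and information-theoretically independent of $\sks{i}$, hence of the user index. For (ii), in issuance mode the user-side message is $(B_0, \context, \nonce, \Ucommit')$ with $\Ucommit' = B^{\srand}B_0^{\phrand}$; since $B, B_0$ are independent generators and $\srand, \phrand \randin \Zp$ are fresh, $\Ucommit'$ is uniform on $\G_1$ and reveals nothing about $\xpfresh$ or $s'$. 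In showing mode the analogous message is $(\hat{e}(B_0, h), \context, \nonce, E_1, E_2, E_3')$; by the same argument $E_1, E_2, E_3'$ are perfectly hiding in every user-dependent exponent because their randomizers are fresh. Finally, the TS's own replies $\chal$ and $\serResp = \serrand + \chal \cdot \xsfresh$ are deterministic functions of the TS's inputs and its local randomness $\serrand$, so they introduce no additional user-dependent leakage into the TS's view.

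With independence of the TS's view established, the honest-SP claim is immediate: the two consecutive $\qRunTCP$ queries in the challenge phase induce identically distributed TS transcripts regardless of $\challengebit$, so no adversary controlling only the TS has any distinguishing advantage. For the colluding-SP claim I would give a reduction $\AdvB$ to the SP-only privacy of the underlying issuance and showing protocols (i.e., to the stated assumption that they do not identify the user). Given an adversary $\Adv$ against the TCP privacy game with colluding SP, $\AdvB$ internally simulates the TS by sampling fresh randomness as analyzed above and plays $\Adv$'s SP role against an external challenger running the real BBS+ issuance or showing protocol with a single, undivided secret key. Since the simulated TS-view is identically distributed to the real one and the SP-side transcript is produced by the external challenger, any non-negligible advantage of $\Adv$ contradicts the assumed SP-level indistinguishability.

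The main obstacle I anticipate is verifying independence for the pairing-based commitment $E_3'$ in the showing protocol: one must open the third conjunct of the zero-knowledge proof carefully to check that every coefficient involving a user secret (the $r_1, \alpha_1, \alpha_2, e, s$ terms, alongside $\secret$) is masked by a fresh randomizer that is never transmitted to the TS, and that the Fiat--Shamir challenge $\chal = H(E_1 \parallel E_2 \parallel E_3 \parallel \context \parallel \nonce)$ does not open a back-channel, which holds because its input is already a function only of public data and independent randomness in the random-oracle model. Once this bookkeeping is discharged, the rest of the argument is the mechanical inspection step the paper alludes to.
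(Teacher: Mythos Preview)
Your proposal is correct and follows the same approach the paper takes: the paper's argument (given in the paragraph immediately preceding the theorem) is literally ``by inspection''---show that $\xsfresh$ and the step-1 messages the user sends to the \TS are distributed independently of the user's identity, and conclude that any residual distinguishing power must come from the SP side, which is excluded by hypothesis. Your write-up is simply a more careful unpacking of that inspection, and the only place to be a bit cautious is the explicit reduction you sketch for the colluding-SP case: because the adversarial \TS's replies $(\Ucommit,\serResp)$ feed into the user's SP-side message $(\chal,\sresp,\xusresp)$, you cannot just forward an external non-threshold transcript verbatim; the cleaner route (and what the paper's one-line argument implicitly does) is to fix the user's \TS-side randomness, observe that this fixes the adversary's \TS responses, and then invoke SP-side indistinguishability on the resulting conditional distribution.
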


\subsection{Rate-limiting in ABCs}
\label{sec:abc-rate-limiting}
Anonymous users can use the cover of privacy to misbehave, negatively impacting the system. ABC systems are not exempt from such misbehavior. Suppose, for example, that a user shares her ``I am older than 18'' credential with many under-aged users who do not hold such a credential. Then, those under-aged users can incorrectly convince service providers that they are over 18 years of age. If this happens often, service providers can no longer rely on these credentials to verify that a user is older than 18. 

To limit such misbehavior, ABCs could benefit from rate-limiting.
One method to limit abuse is to rate-limit credentials by ensuring that
credentials can only be used a limited number of times. For instance, solutions
such as $n$-times anonymous credentials~\cite{CamenischHKLM06} use custom
cryptographic techniques to construct a special type of ABC that can be used
only $n$ times per epoch.

\name can achieve a similar type of rate-limiting \emph{without} modifying the underlying cryptographic construction of ABCs. To rate-limit use of a system, the \TS enforces a per-user and per-epoch limit $q$ on the number of tokens it issues per user and per epoch. As a result, no credential can be shown more than $q$ times per epoch. This approach limits \emph{all} credentials associated to a user's key. If desired, \name can equally be applied on a per-credential basis.

This rate-limiting strategy \emph{requires} that all users use \name. However, recall that the SPs (issuers and verifiers) cannot detect the use of \name, allowing users to forego sharing their keys with the \TS, thus avoiding the rate limit. To enable the \TS to enforce a rate-limit on all credentials, issuers must only issue credentials on keys that are shared with the \TS.

A small change to the threshold-cryptographic version of the issuance protocol
enables the issuer to confirm that users use \name. To signal its involvement,
the \TS signs the challenge $\chal$ and sends the signature $\sigma$ to the
user. The user forwards the challenge to the issuer. The issuer verifies the
signature $\sigma$. If the proofs are correct, then the user's key was shared
with the \TS and the issuer signs the credential.
 \section{Threshold ElGamal Decryption}
\label{app:elgamal-tandem}
As a second example, we show how \name can be easily applied to ElGamal decryption. Let $\secret = \xp + \xs$ be the secret-shared private key and $\schnorrpk = \generator^{\secret}$ the corresponding public key. The ElGamal encryption of a message $m \in \G$ is given by $(c_1, c_2) = (\generator^{r}, m \cdot \schnorrpk^r)$ where $r \randin \Zp$ is an ephemeral key.

To threshold-decrypt a ciphertext $(c_1, c_2)$ the user and the \TS proceed as follows. They first run \genshares, so that the user and the \TS hold the respective shares $\xpfresh$ and $\xsfresh$ such that $\secret = \xpfresh + \xsfresh$. Then the user sends $c_1$ to the \TS which computes $\alpha = c_1^{-\xsfresh}$ and sends it back to the user. The user can now recover the message as $m' = c_2 \alpha c_1^{-\xpfresh}$. Note that the \TS never learns the value of the message. In fact, the user could blind $c_1$ before sending it, to ensure that the \TS cannot recognize the ciphertext either. \section{Proofs of Lemmas}
\label{app:proofs-of-lemmas}

\begin{proof}[Proof of Lemma~\ref{lem:one-good-element}]

  Whenever a ciphertext $\ctxt_i$ is selected by the \TS for opening, the \TS
  checks that it and the corresponding randomizers $\prand_i$,
  $\kssrandomizer_i$, $\deltacommitmentrand_i$, and $\ksscommitmentrand_i$ are as in equation~\eqref{eq:token-element} and that $\kssrandomizer_i < 2^{\lengthdelta}$, and hence as stated in the theorem. 

  Since the \TS checks $\tokensecpar$ tuples, every adversary needs to include at least $\tokensecpar$ correct tuples in its set of $2 \tokensecpar$ tuples. If no index $i^*$ exists for the remaining tuples, then all $\tokensecpar$ of them were incorrectly formed. The probability that none of these $\tokensecpar$ bad tuples were selected during the cut-and-choose protocol is $1 / \binom{2\tokensecpar}{\tokensecpar}.$
\end{proof}

\begin{proof}[Proof of Lemma~\ref{lem:all-same}]
  From Lemma~\ref{lem:one-good-element} we know that with probability $1 - 1/\binom{2\tokensecpar}{\tokensecpar}$ there exists $i^*$ and $\kssrandomizer^*, \xs$ such that 
  \begin{align*}
    \pdec(\ctxt_{i^*}) &= \xs + \kssrandomizer^*
  \end{align*}

  Let $\ctxt = \penc(\alpha)$.
  From equation~\eqref{eq:gamma-rand-correct} we know that:
  \begin{align*}
    \ctxt &= \ctxt_{i^*} \cdot \penc(\gamma_{i^*}; \prand_{i^*})
  \end{align*}
  By decrypting we find that $\alpha = \xs + \kssrandomizer^* + \gamma_{i^*} \pmod{\PaillierN}$. Moreover, $\kssrandomizer^* < 2^{\lengthdelta}$ (by Lemma~\ref{lem:one-good-element}), $\xs < \grouporder < 2^{\lengthdelta}$ (by construction) and $\gamma_{i^*} < 2^{\lengthdelta+1}$ as checked by the \TS. Since $\lengthdelta = \lengthdeltaval$ and $\PaillierN > 2^{\lengthdelta+2}$, we have that $\alpha = \xs + \kssrandomizer^* + \gamma_{i^*}$ as integers, and thus $\ctxt$ is a proper randomization, with randomizer $\kssrandomizer^* + \gamma_{i^*} < 2^{\lengthdelta + 2}$, of $\xs$ as well.
\end{proof}

 \section{Constructing Correctness Proof of $\xsenc$}
\label{sec:appendix-register-proof}

In this section we describe the details of the range proof of $\pdec(\xsenc)$ in the \registeruser protocol. The range proof ensures that the \TS cannot recognize anonymous users by constructing specially crafted versions of $\xsenc$ as explained earlier. When using a homomorphic encryption scheme that supports zero-knowledge proofs, such as Paillier's encryption scheme, we can use standard techniques, see for example the bitwise technique by Bellare and Goldwasser~\cite{BellareG97}, to prove that $\pdec(\xsenc)$ is at most $2\secpar$ bits (which is a sufficient proxy for $\grouporder$ in our schemes).

In our implementation, however, we use Joye and Libert's encryption scheme which
does not readily admit zero-knowledge proofs. Therefore, we instantiate the range proof using a construction that consists of two parts.
{\renewcommand{\theenumi}{\Roman{enumi}}
\begin{enumerate}
\item The \TS constructs a commitment $\ksscommitment$ to $\sks{}$ using a commitment scheme whose message space is at least as big as the plaintext space of the encryption scheme. The \TS then uses a traditional zero-knowledge proof to show that the value $\sks{}$ committed in $\ksscommitment$ is smaller than $\grouporder$.
\item Next, the \TS uses a cut-and-choose technique to show that $\ksscommitment$ commits to $\pdec(\xsenc) = \sks{}$. 
\end{enumerate}}

The details are as follows. The user and \TS take $\xsenc$ as input. The \TS takes as private input $\sks{}$ and the randomizer $\prand$ used to construct $\xsenc$. Let $\bigG$ be a cyclic group of order $\biggrouporder$ generated by $\biggenerator$ such that $\biggrouporder > \PaillierN$ (recall, $\PaillierN$ is the size of the plaintext domain of the homomorphic encryption scheme). Let $\biggeneratorh$ be another generator of $\bigG$ such that the discrete logarithm of $\biggeneratorh$ with respect to $\biggenerator$ is unknown. We use this group to create a commitment scheme with a large message space.

The full protocol has 7 steps. Part I is represented by step 1, whereas part II is represented by the cut-and-choose technique in steps 2 -- 7. If at any step a verification fails, the protocol is aborted. The cut-and-choose technique is very similar to the construction we use in the \obtainkstoken and \genshares protocols. Let $\tokensecpar$ be the difficulty level of the cut-and-choose protocol.

\begin{enumerate}
\item
  \label{step:rangeproof:rangeproof}
  The \TS picks $\ksscommitmentrand \randin \bigZp$, and computes the commitment $\ksscommitment = \biggenerator^{\sks{}} \biggeneratorh^{\ksscommitmentrand}$. Next, the \TS creates a non-interactive proof that the commitment
  $\ksscommitment$ contains key-share $\sks{}$ of the correct size:
    \begin{equation}
      \label{eq:proof-range-xs}
    PK\{ (\sks{}, \ksscommitmentrand) : \ksscommitment = \biggenerator^{\sks{}} \biggeneratorh^{\ksscommitmentrand} \land
    \sks{} \in [0, \grouporder)\},
  \end{equation}
  and sends $\ksscommitment$ and this proof to the user. This proof can be implemented using a standard technique like the bitwise commitment technique of Bellare and Goldwasser~\cite{BellareG97}. The user checks the correctness of the proof.
\item
  \label{step:rangeproof:user-commit}
  The user randomly chooses a subset $\tokendisclose \subset \{1, \ldots, 2\tokensecpar\}$ of cardinality $\tokensecpar$. She commits to $\tokendisclose$ by picking $\discloseSetCommitmentRand \randin \{0,1\}^{\secpar}$ and sending $\discloseSetCommitment = \extcommit{\tokendisclose}{\discloseSetCommitmentRand}$ to the \TS.
\item
  \label{step:rangeproof:ts-commit}
  The \TS picks randomizers $\kssrandomizer_1, \ldots, \kssrandomizer_{2\tokensecpar} \randin
  \{0,1\}^{\lengthdelta}$ and $\prand_1, \ldots, \prand_{2\tokensecpar} \randin \prandspace$
  to construct ciphertexts, and $\ksscommitmentrand_1, \ldots, \ksscommitmentrand_{2\tokensecpar} \in \bigZp$ to create commitments. Then, the \TS sets:
  \begin{equation}
    \begin{split}
    \ctxt_i      &= \penc(\kssrandomizer_i; \prand_i) \\
    \ksscommitment_i &= \biggenerator^{\kssrandomizer_i} \biggeneratorh^{\ksscommitmentrand_i}
    \end{split}
    \label{eq:generate-token-element}
  \end{equation}
  for $i = 1, \ldots, 2\tokensecpar$. Finally, the \TS sends the ciphertexts $\ctxt_1,\allowbreak \ldots,\allowbreak \ctxt_{2\tokensecpar}$ and commitments $\ksscommitment_1, \ldots, \ksscommitment_{2\tokensecpar}$ to the user. The commitments are computationally binding and information theoretically hiding. (Contrary to the \code{ObtainKeyShareToken} protocol, the \TS can safely send the ciphertexts, because the user cannot decrypt them.)
\item
  \label{step:rangeproof:user-open}
  The user sends the subset $\tokendisclose$ and the commitment randomizer $\discloseSetCommitmentRand$ to the \TS.
\item
  \label{step:rangeproof:ts-open}
  If $\discloseSetCommitment = \extcommit{\tokendisclose}{\discloseSetCommitmentRand}$, then the \TS sends $(\kssrandomizer_i, \prand_i, \ksscommitmentrand_i)_{i \in \tokendisclose}$ to the user (otherwise, it aborts). The user verifies that the values $\ctxt_i, \ksscommitment_i$ for $i \in \tokendisclose$ satisfy equation~\eqref{eq:generate-token-element}. Moreover, the user checks that $\kssrandomizer_i < 2^{\lengthdelta}$ for $i \in \tokendisclose$.
\item
   \label{step:rangeproof:ts-relate}
  Next, the \TS computes
    \begin{equation*}
      \kssrandomizerReveal_i = \kssrandomizer_i - \sks{}, \quad
      \ksscommitmentrandReveal_i = \ksscommitmentrand_i - \ksscommitmentrand, \quad
      \prandReveal_i = \prand_i \prand^{-1}
    \end{equation*}
    for $i \not\in \tokendisclose$, and sends them to the user.
  \item
    \label{step:rangeproof:user-verify}
    Finally, the user checks that
    \begin{equation}
      \label{eq:keygen-user-check}
      \begin{split}
      \ctxt_i &= \xsenc \cdot \penc(\kssrandomizerReveal_i; \prandReveal_i) \\
      \ksscommitment_i &= \ksscommitment \cdot \biggenerator^{\kssrandomizerReveal_i} \biggeneratorh^{\ksscommitmentrandReveal_i}
      \end{split}
    \end{equation}
    and that $0 \leq \kssrandomizerReveal_i < 2^{\lengthdelta}$ for $i \not\in\tokendisclose$, and accepts the proof if all verifications are correct.
\end{enumerate}

\begin{lemma}
  \label{lem:ctxt-pk-correct}
  If the user does not reject in the above protocol, then with probability $1 - 1/\binom{2\tokensecpar}{\tokensecpar}$ we have that $\pdec(\xsenc) \in [0,\grouporder)$ as required.
\end{lemma}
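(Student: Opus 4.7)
The plan is to combine the soundness of the range proof in step~\ref{step:rangeproof:rangeproof} with a cut-and-choose analysis of the $2\tokensecpar$ ciphertext/commitment pairs, mirroring the structure of Lemma~\ref{lem:one-good-element}. The end goal is to show that, except with probability $1/\binom{2\tokensecpar}{\tokensecpar}$ over the user's choice of $\tokendisclose$, one can identify an index whose tuple simultaneously pins down $\pdec(\xsenc)$ and the value $\sks{}^{*}$ committed to in $\ksscommitment$, and to then use the range proof to transport the bound $\sks{}^{*} \in [0,\grouporder)$ onto $\pdec(\xsenc)$.

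First I would invoke soundness of the non-interactive range proof: if the user accepts in step~\ref{step:rangeproof:rangeproof}, then (up to a negligible probability that I absorb into the standard assumptions) $\ksscommitment$ commits to some $\sks{}^{*} \in [0,\grouporder)$. The task then reduces to proving $\pdec(\xsenc) = \sks{}^{*}$ as integers. For the cut-and-choose step, I would fix the \TS's message from step~\ref{step:rangeproof:ts-commit} and classify each index $i \in \{1,\ldots,2\tokensecpar\}$ intrinsically from the pair $(\ctxt_i,\ksscommitment_i)$: call $i$ \emph{openable} if some $\kssrandomizer_i<2^{\lengthdelta}$, $\prand_i$, $\ksscommitmentrand_i$ satisfy equation~\eqref{eq:generate-token-element}, and \emph{relatable} if some $\kssrandomizerReveal_i\in[0,2^{\lengthdelta})$, $\prandReveal_i$, $\ksscommitmentrandReveal_i$ satisfy equation~\eqref{eq:keygen-user-check}. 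For user acceptance, every $i\in\tokendisclose$ must be openable and every $i\in\tokenhidden$ must be relatable. If no index were both, then the non-openable set $A$ and the non-relatable set $B$ would cover all indices, forcing $A\subseteq\tokenhidden$ and $B\subseteq\tokendisclose$; since both sets have size $\tokensecpar$, this pins down $\tokendisclose = B$ exactly. Because the user's commitment $\discloseSetCommitment$ in step~\ref{step:rangeproof:user-commit} hides $\tokendisclose$ from the \TS before step~\ref{step:rangeproof:ts-commit}, this event occurs with probability at most $1/\binom{2\tokensecpar}{\tokensecpar}$.

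Conditioning on an index $i^{*}$ that is both openable and relatable, I would combine the two witnesses. Openability gives $\pdec(\ctxt_{i^{*}}) = \kssrandomizer_{i^{*}}$ and $\ksscommitment_{i^{*}} = \biggenerator^{\kssrandomizer_{i^{*}}}\biggeneratorh^{\ksscommitmentrand_{i^{*}}}$, while relatability gives $\ctxt_{i^{*}} = \xsenc\cdot\penc(\kssrandomizerReveal_{i^{*}};\prandReveal_{i^{*}})$ and $\ksscommitment_{i^{*}} = \ksscommitment\cdot\biggenerator^{\kssrandomizerReveal_{i^{*}}}\biggeneratorh^{\ksscommitmentrandReveal_{i^{*}}}$. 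Dividing and decrypting yield $\pdec(\xsenc) \equiv \kssrandomizer_{i^{*}}-\kssrandomizerReveal_{i^{*}} \pmod{\PaillierN}$, while comparing the two expressions for $\ksscommitment_{i^{*}}$ and invoking binding of the Pedersen-style commitment in $\bigG$ gives $\sks{}^{*} \equiv \kssrandomizer_{i^{*}}-\kssrandomizerReveal_{i^{*}} \pmod{\biggrouporder}$. Since $|\kssrandomizer_{i^{*}}-\kssrandomizerReveal_{i^{*}}|<2^{\lengthdelta}$, $\sks{}^{*}\in[0,\grouporder)\subset[0,2^{\lengthdelta})$, and the parameters satisfy $\biggrouporder>\PaillierN>2^{\lengthdelta+2}$, both congruences lift to integer equalities, giving $\pdec(\xsenc) = \sks{}^{*} \in [0,\grouporder)$ as required.

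The main obstacle I expect is the bookkeeping of the cut-and-choose step: I need the openable/relatable classification to be an \emph{intrinsic} property of the \TS's first-message tuples, so that it is well-defined before any opening is revealed, and I need to argue cleanly that the hiding of $\tokendisclose$ (provided by the extractable commitment $\discloseSetCommitment$) is what makes the \TS's adversarial choices independent of $\tokendisclose$. A secondary subtlety is tracking the parameter chain $\grouporder<2^{\lengthdelta}$, $\biggrouporder>\PaillierN>2^{\lengthdelta+2}$ carefully enough that both modular relations can be lifted to integer equalities without any wraparound. Once those two points are handled, the rest reduces to routine application of binding, soundness of the range proof, and the bound on the cut-and-choose failure probability.
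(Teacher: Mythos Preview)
Your proposal is correct and follows essentially the same route as the paper: extract the committed value from the range proof in step~\ref{step:rangeproof:rangeproof}, use the cut-and-choose bound to locate an index that is simultaneously consistent with equations~\eqref{eq:generate-token-element} and~\eqref{eq:keygen-user-check}, and then chain the commitment and ciphertext relations through that index to identify $\pdec(\xsenc)$ with the committed value over the integers.

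The one presentational difference worth noting is that you make the two-sided requirement explicit by naming the ``openable'' and ``relatable'' predicates and arguing directly that if no index carries both then $\tokendisclose$ is uniquely determined; the paper instead only tracks the openability side (as in Lemma~\ref{lem:one-good-element}) and uses implicitly that every $i\notin\tokendisclose$ is relatable because the user verified equation~\eqref{eq:keygen-user-check}. Your framing is a bit cleaner but buys nothing extra. One small point of care in the final step: the difference $\kssrandomizer_{i^{*}}-\kssrandomizerReveal_{i^{*}}$ could a priori be negative, so rather than bounding its absolute value you should first use the commitment relation together with $\sks{}^{*}\in[0,\grouporder)$ and $\biggrouporder>2^{\lengthdelta}$ to force $\kssrandomizer_{i^{*}}-\kssrandomizerReveal_{i^{*}}\geq 0$ (equivalently, argue as the paper does that $\kssrandomizer_{i^{*}}=\sks{}^{*}+\kssrandomizerReveal_{i^{*}}$ over the integers), and only then transfer this to the ciphertext side.
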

\begin{proof}
  From the zero-knowledge proof in step 1, we know that the \TS knows an opening $\alpha', \ksscommitmentrand'$ of $\ksscommitment = \biggenerator^{\alpha'}\biggeneratorh^{\ksscommitmentrand'}$
  such that $0 \leq \alpha' < \grouporder$. We complete the proof by showing that $\alpha' = \pdec{(\xsenc)}$.

  We continue as per Lemma~\ref{lem:one-good-element} and Lemma~\ref{lem:all-same}. We restate them here for completeness. First, along the lines of Lemma~\ref{lem:one-good-element}, with probability $1 - 1/\binom{2\tokensecpar}{\tokensecpar}$ there exists an index $i^*$ such that the \TS knows an opening $\kssrandomizer^*, \ksscommitmentrand^*$ such that:
  \begin{equation}
    \label{eq:point-exists}
    \begin{split}
      \kssrandomizer^* &= \pdec(\ctxt_{i^*}) < 2^{\lengthdelta} \\
      \ksscommitment_{i^*} &= \biggenerator^{\kssrandomizer^*} \biggeneratorh^{\ksscommitmentrand^*}.
    \end{split}
  \end{equation}
  The user checks that the \TS knows an opening for the $\tokensecpar$ pairs that are opened by the \TS in step 4. So, the \TS must include at least $\tokensecpar$ pairs for which it knows a correct opening. Suppose, for contradiction, that the index $i^*$ does not exist, i.e., that the remaining $\tokensecpar$ pairs are incorrect or cannot be opened by the \TS. Since the protocol completed, the user did not detect foul play. This situation can only occur if the \TS correctly guesses the set $\tokendisclose$ in advance. Since the \TS does not learn anything about $\tokendisclose$ before step 3, the probability that none of the remaining pairs is correct is $1 / \binom{2\tokensecpar}{\tokensecpar}$, as required.
  
  Assume now that this index $i^*$ as required above exists. We use this to show that $\ksscommitment$ commits to $\pdec(\ctxt)$, i.e., that $\alpha' = \pdec(\ctxt)$. From equation~\eqref{eq:keygen-user-check} we know that:
  \begin{equation*}
    \begin{split}
      \ksscommitment_{i^*} &= \ksscommitment \cdot \biggenerator^{\kssrandomizerReveal_{i^*}} \biggeneratorh^{\ksscommitmentrandReveal_{i^*}}
    \end{split}
  \end{equation*}
  so, by using equation~\eqref{eq:point-exists} and equating exponents, we find that $\kssrandomizer^* = \alpha' + \kssrandomizerReveal_{i^*} \pmod{\biggrouporder}$.
  We know from the zero-knowledge proof that $\alpha' < \grouporder$ and by direct inspection that $\kssrandomizerReveal < 2^{\lengthdelta}$ therefore, the equality holds over the integers as well, and we have
  \begin{equation}
    \label{eq:delta-one}
    \kssrandomizer^* = \alpha' + \kssrandomizerReveal_{i^*} < 2^{\lengthdelta + 1} < \PaillierN.
  \end{equation} 
  From equation~\eqref{eq:keygen-user-check} we also know that:
  \begin{equation*}
    \ctxt_{i^*} = \xsenc \cdot \penc(\kssrandomizerReveal_{i^*}; \prandReveal_{i^*})
  \end{equation*}
  By decrypting and using equation~\eqref{eq:point-exists} we find that:
  \begin{equation*}
    \kssrandomizer^* = \pdec( \xsenc \cdot \penc(\kssrandomizerReveal_{i^*}; \prandReveal_{i^*}) ) = \pdec(\xsenc) + \kssrandomizerReveal_{i^*} \pmod{\pailliermodulus}.
  \end{equation*}
  Substituting $\kssrandomizer^*$ from equation~\eqref{eq:delta-one} and substracting $\kssrandomizerReveal_{i^*}$ shows that $\alpha' = \pdec(\xsenc) \pmod{\PaillierN}$, and therefore, by size of $\alpha'$ and $\pdec(\xsenc) < \PaillierN$, that $\alpha' = \pdec(\xsenc)$ as required.
\end{proof}

In the security proof, we replace $\xsenc$ with the encryption of 0, so that the adversary who has corrupted a user learns nothing about $\sks{}$ (except what is revealed as a result of the threshold-cryptographic protocol). The following lemma states that we can do so, without the adversary detecting this change.
\begin{lemma}
  \label{lem:can-replace-ctxt}
  TS can simulate the correctness proof given above such that $\xsenc = \penc(0)$, provided that the encryption scheme is CPA secure and the commitment scheme $\extcommit{\cdot}{\cdot}$ is extractable. This simulation does not require any knowledge of how $\xsenc$ was created.
\end{lemma}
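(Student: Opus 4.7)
The plan is to exhibit an explicit simulator $\mathsf{Sim}$ that, given $\xsenc$ as an opaque ciphertext and black-box access to the user, produces a transcript computationally indistinguishable from an honest run against $\xsenc = \penc(\sks{})$; running $\mathsf{Sim}$ on $\xsenc = \penc(0)$ then yields the lemma. Concretely, $\mathsf{Sim}$ picks $\ksscommitment = \biggenerator^{0}\biggeneratorh^{\ksscommitmentrand}$ with $\ksscommitmentrand \randin \bigZp$ and simulates the range proof of step~\ref{step:rangeproof:rangeproof} via its honest-verifier ZK simulator; on receipt of $\discloseSetCommitment$ it runs the extractor of $\extcommit{\cdot}{\cdot}$ to recover $\tokendisclose$ before choosing anything; for $i \in \tokendisclose$ it honestly samples $\kssrandomizer_i \randin \{0,1\}^{\lengthdelta}$, $\prand_i \randin \prandspace$, $\ksscommitmentrand_i \randin \bigZp$ and builds $\ctxt_i, \ksscommitment_i$ via~\eqref{eq:generate-token-element}; for $i \notin \tokendisclose$ it inverts the sampling by picking $\kssrandomizerReveal_i \randin \{0,1\}^{\lengthdelta}$, $\ksscommitmentrandReveal_i \randin \bigZp$, $\prandReveal_i \randin \prandspace$, and defining $\ctxt_i := \xsenc \cdot \penc(\kssrandomizerReveal_i; \prandReveal_i)$ and $\ksscommitment_i := \ksscommitment \cdot \biggenerator^{\kssrandomizerReveal_i} \biggeneratorh^{\ksscommitmentrandReveal_i}$, so that~\eqref{eq:keygen-user-check} holds by construction. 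One then checks by inspection that $\mathsf{Sim}$ touches neither the plaintext nor the randomizer of $\xsenc$.

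Indistinguishability from the real protocol will be established by a hybrid sequence starting at the honest TS and ending at $\mathsf{Sim}$ running on $\penc(0)$. First, replace the range proof by its simulation (zero knowledge). Next, extract $\tokendisclose$ before step~\ref{step:rangeproof:ts-commit} (extractability of $\extcommit{\cdot}{\cdot}$; no change to the transcript distribution). Then, for $i \notin \tokendisclose$ swap the sampling order: the revealed triple $(\kssrandomizerReveal_i, \ksscommitmentrandReveal_i, \prandReveal_i)$ is identically distributed to the honest one on its commitment-randomizer and encryption-randomizer components, and $O(\tokensecpar \cdot \grouporder / 2^{\lengthdelta})$-close on the $\kssrandomizerReveal_i$ component. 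Next, replace $\ksscommitment$ by a commitment to $0$: Pedersen commitments are perfectly hiding and $\ksscommitmentrandReveal_i$ has already been made uniform and independent of $\ksscommitmentrand$, so this change is information-theoretically undetectable. After this hybrid the transcript depends on $\xsenc$ only through one homomorphic multiplication per index $i \notin \tokendisclose$, so the final hybrid replaces $\xsenc$ by $\penc(0)$ via a direct reduction to CPA security of $\penc$.

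The main subtlety is the third hybrid: the honest protocol silently aborts whenever $\kssrandomizer_i < \sks{}$---forcing $\kssrandomizerReveal_i < 0$ and failing the integer check in step~\ref{step:rangeproof:user-verify}---while $\mathsf{Sim}$ cannot mimic this abort without using $\sks{}$. Controlling the conditional abort probability and the statistical slack between the two $\kssrandomizerReveal_i$ distributions is exactly what forces $\lengthdelta = \lengthdeltaval$ and relies on the bound $\sks{} \in [0, \grouporder)$ guaranteed by the range proof of step~\ref{step:rangeproof:rangeproof}; thus, just as for Lemma~\ref{lem:ctxt-pk-correct}, the range proof is indispensable for making the simulation go through.
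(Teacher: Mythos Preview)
Your proposal is correct and follows essentially the same hybrid sequence as the paper's proof: extract $\tokendisclose$ from the extractable commitment, simulate the range proof, resample the unrevealed indices via equation~\eqref{eq:keygen-user-check}, replace $\ksscommitment$ using Pedersen hiding, and finally swap $\xsenc$ for $\penc(0)$ via CPA. The only cosmetic difference is that you simulate the range proof before extracting $\tokendisclose$ whereas the paper does the reverse; you are also more explicit than the paper about the statistical slack in the $\kssrandomizerReveal_i$ distribution and its dependence on $\lengthdelta$, which the paper handles in a single sentence.
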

This proof uses a sequence of games that interpolates between the situation where the \registeruser protocol is executed normally, and the situation, where $\xsenc$ is an encryption of 0. This game is as in the security game: the adversary can make \registeruser, \obtainkstoken, \genshares, and \blockshare queries. It's task is to determine if $\xsenc$ is as in the original protocol, or $\xsenc = \penc(0)$. In particular:
\begin{itemize}
  \item Game 0. In Game 0, $\xsenc$ is constructed as per the protocol.
  \item Game 1. We proceed as in Game 0, but simulate the cut-and-choose proof in steps~\ref{step:rangeproof:user-commit} --~\ref{step:rangeproof:user-verify} by extracting $\tokendisclose$.
  \item Game 2. As in Game 1, but simulate the zero-knowledge proof in step~\ref{step:rangeproof:rangeproof} of the protocol.
  \item Game 3. As in Game 2, but replace the commitment $\ksscommitment$ by a random commitment.
  \item Game 4. As in Game 3, but replace $\xsenc$ with an encryption of 0.
\end{itemize}
We show that each pair of consecutive games is indistinguishable to a polynomial-time adversary. Hence, no adversary can distinguish Game 0 from Game 4, thus proving the lemma.

\begin{proof}[Proof of Lemma~\ref{lem:can-replace-ctxt}]
  We first show how to simulate the cut-and-choose proof in steps~\ref{step:rangeproof:user-commit} --~\ref{step:rangeproof:user-verify}. The adverary sends a commitment $\discloseSetCommitment$ to the \TS in step 1. We use the extractability of $\extcommit{\cdot}{\cdot}$ to recover $\tokendisclose$ from $\discloseSetCommitment$ (for example, using the random oracle model if it is implemented using a hash-function). 

  We change how \TS acts in step~\ref{step:rangeproof:ts-commit}. Let $\tokendisclose \subset \{1, \ldots, 2\tokensecpar \}$ be the subset of cardinality $\tokensecpar$ extracted from $\discloseSetCommitment$. For all $i \in \tokendisclose$ the \TS sets $\ctxt_i$ and $\ksscommitment_i$ as per equation~\eqref{eq:generate-token-element}. For other elements, i.e., for $i \in \{1, \ldots, 2\tokensecpar\} \setminus \tokendisclose$, the \TS generates $\kssrandomizerReveal \randin \{0, \ldots, 2^{\lengthdelta} - 1\}, \ksscommitmentrandReveal \randin \bigZp, \prandReveal \randin \prandspace$ and sets $\ctxt_i$ and $\ksscommitment_i$ as per equation~\eqref{eq:keygen-user-check}.

  In step~\ref{step:rangeproof:user-open}, the adversary reveals $\tokendisclose$ and $\discloseSetCommitmentRand$. If $\discloseSetCommitment = \extcommit{\tokendisclose}{\discloseSetCommitmentRand}$ then with overwhelming probability, we correctly extracted $\tokendisclose$.
  If we correctly extracted $\tokendisclose$, the \TS can open the tuples for $i \in \tokendisclose$ in step 5 and return $\kssrandomizerReveal_i, \ksscommitmentrandReveal_i, \prandReveal_i$ for the other elements. Both satisfy the adversary's checks in steps~\ref{step:rangeproof:ts-open} and~\ref{step:rangeproof:user-verify}.

  Game 0 is indistinguishable from Game 1. The simulated proof can go wrong for two reasons. One, we can fail to extract the disclose set $\tokendisclose$, but this can only happen with negligible probability. Second, the distribution of $\kssrandomizerReveal_i$s for $i \not\in \tokendisclose$ is not completely correct, however, the size of $\kssrandomizer$ ensures that this difference is statistically hidden from the adversary. So, from the point of view of the adversary, Games 0 and 1 are indistinguishable.

  In Game 2 we simulate the zero-knowledge proof in step 1. By construction of the simulator of this proof, the adversary cannot detect this change.

  As a result of the changes we made in Game 1, the answers of \TS do not depend on the opening of $\ksscommitment$. So, in Game 3 the \TS can generate a random commitment $\ksscommitment \randin \bigG$. Since Pedersen's commitment scheme is information-theoretically hiding, the adversary cannot detect this change.

  In Game 4, the \TS sends $\xsenc = \penc(0)$ to the user instead of an encryption of the key-share $\sks{}$. As a result of the changes we made in Game 1, the \TS can still complete the remaining part of the protocol.

  We claim that the adversary $\Adv$ cannot distinguish Games 3 and 4. Suppose to the contrary that $\Adv$ \emph{can} distinguish Games 3 and 4. We then show that $\Adv$ can break the CPA security of the homomorphic encryption scheme.

  To do so, we build an adversary \AdvB against the CPA security of the encryption scheme. Recall that \AdvB can make a challenge query on two messages $m_0$ and $m_1$. In our case, \AdvB picks $m_0 = \sks{}$ and $m_1 = 0$. Then, its challenger returns a ciphertext $\ctxt_{*} = \penc(\ppk, m_{\challengebit})$ for some bit $\challengebit \randin \{0, 1\}$. Adversary \AdvB needs to guess $\challengebit$. 

  In $\registeruser$ queries for the challenge user $\U^*$, adversary \AdvB (which acts as a challenger to $\Adv$) uses $\xsenc = \ctxt_*$. Clearly, if $\challengebit = 0$, then $\AdvB$ perfectly simulates Game 3. If $\challengebit = 1$, it perfectly simulates Game 4. Therefore, if $\Adv$ can distinguish between Games 3 and 4, it can break the CPA security of the encryption scheme.
\end{proof} \section{Security Proof}
\label{sec:security-proof}

In the security proof, the challenger controls the \TS and the adversary tries to attack a user.
The security proof is a sequence of games. In the final game, the challenger simulates the game using only
the TCP oracle of the TCP security game, without knowing the corresponding \TS' key share $\sks{}$. As a result, any adversary that manages to use the blocked key of that user (or uses that key more often than the rate-limit allows in this epoch) must therefore break the security of the underlying threshold-cryptographic protocol.

We use the following sequence of games:
\begin{itemize}
\item Game 0: We play the game as described in the \name Security game, see  page~\pageref{game:security}.
\item Game 1: We change the definition of $\genshares$. The challenger simulates the workings of \TS but does not decrypt any ciphertext. Instead, the \TS uses the extractability of $\extcommit{\cdot}{\cdot}$
  and the $\deltacommitment_i$s (from the corresponding \obtainkstoken protocol) to compute
  the plaintext corresponding to $\ctxt$ (without decrypting), which it uses as $\xsfresh$.
  The \TS constructs the range proof of $\xsenc$ in the $\registeruser$ protocol as before.
\item Game 2: We guess the challenge user $\U^*$ and we change the definition of $\registeruser$ for this user: we replace $\xsenc = \penc(\xs)$ by $\xsenc = \penc(0)$.
\item Game 3: For all non-challenge users we answer $\genshares$ queries as in the previous game. For $\U^*$ the \TS simulates the TCP following $\genshares$ using the TCP security oracle (without knowing $\xs$ of $\U^*$).
\end{itemize}

We then prove the following:
\begin{itemize}
\item The adversary cannot distinguish Game 0 from Game 1. We prove that as long as one of the pairs $(\ctxt_i, \deltacommitment_i)$ is as it should be---and Lemma~\ref{lem:one-good-element} shows that this is the case with high probability---then we correctly recover the plaintext of $\ctxt$ and thus the TS extracts the correct $\xsfresh$, and therefore the TCP is correct as well.
\item The adversary cannot distinguish Game 1 from Game 2. We no longer decrypt ciphertexts. Hence, we can use the CPA security of the encryption scheme to show that the adversary cannot distinguish Game 1 from Game 2. More formally, we build a distinguisher that interpolates between Games 1 and 2. The distinguisher makes a query for  $m_0 = \xs$ and $m_1 = 0$ to its CPA challenger, and uses the answer as $\xsenc$. Lemma~\ref{lem:can-replace-ctxt} shows the adversary cannot detect this change to $\registeruser$. If the CPA challenger returned an encryption of $\xs$ then the distinguisher perfectly simulates Game 1, otherwise it simulates Game 2. We can still answer $\genshares$ queries correctly, since we no longer need to decrypt any ciphertexts.
\item The adversary cannot distinguish Game 2 from Game 3 because the TCP oracle simulates the same protocol.
\item Finally, if we have an adversary that can win Game 3, then it breaks the security of the TCP because by construction the challenger has no unrevoked respectively unused tokens in the challenge phase for the challenge user $\U^*$ because the user is blocked respectively rate-limited.
\end{itemize}

\begin{proof}[Proof of Theorem~\ref{thm:tandem-security}]
This proof follows the sequence of games highlighted above. Let $\U^*$ be the challenge user. We guess this user. If the guess turns out to be incorrect, we repeat the reduction with a new guess.

In Game 1 we change how the \TS responds to $\qRunTCP$ queries, in particular, we change $\genshares$ for the challenge user $\U^*$. The \TS (controlled by the challenger) no longer decrypts the ciphertext $\ctxt$ revealed in a token, but instead directly recovers the
plaintext using the $\deltacommitment_i$ and $\kssrandomizerDiff_i$ values. The \TS then
continues as before.

To enable the \TS to answer $\qRunTCP$ queries without decrypting, the \TS stores some extra values whenever $\Adv$ runs the $\obtainkstoken$ protocol. Whenever the \TS issues a credential $\cred$, it extracts the attributes $(\epoch, \abcsku, \hash(\ctxt), \hash(\ctxt_1), \ldots, \hash(\ctxt_{\tokensecpar}) )$ (normally, the \TS cannot learn these values). The challenger uses the extractability of $\extcommit{\cdot}{\cdot}$ to find inputs $\kssrandomizer_{i_1}', \ldots, \kssrandomizer_{i_m}'$ and $\prand_{i_1}', \ldots, \prand_{i_m}'$ used to create the unopened commitments $\deltacommitment_{i_1},\ldots,\deltacommitment_{i_m}.$ (The adversary might cheat so that not all $\deltacommitment_i$s are true commitments.) By Lemma~\ref{lem:one-good-element}, $m \geq 1$, and there exists $i^*$ such that the extracted inputs are correct, i.e., $\kssrandomizer_{i^*}' = \kssrandomizer_{i^*}$ and $\prand_{i^*}' = \prand_{i^*}$. The challenger records the tuple $(\U,
(i_1, \hash(\ctxt_{i_1}), \kssrandomizer_{i_1}', \prand_{i_1}'), \ldots, (i_m, \hash(\ctxt_{i_m}), \kssrandomizer_{i_m}', \prand_{i_m}'))$ for later use.

We now show how to answer \qRunTCP queries without needing to decrypt the ciphertexts. The \TS initially follows the $\genshares$ protocol. At the start of the protocol, $\Adv$ proves possession of a fresh, unrevoked signature on the values $(\epoch, \hash(\ctxt), \hash(\ctxt_1), \ldots, \hash(\ctxt_{\tokensecpar}))$
to the \TS (run by the challenger). Moreover, $\Adv$ provides $\kssrandomizerDiff_1, \ldots, \kssrandomizerDiff_{\tokensecpar}$ and $\prandReveal_1, \ldots, \prandReveal_{\tokensecpar}$. The \TS then checks that these values are correct. If not, it aborts. So far, the challenger follows the protocol.

Now, we deviate from the protocol. By the unforgeability of the blind signatures, this signature must have been obtained by running $\obtainkstoken$. Hence, the challenger can look up the corresponding tuple $(\U,\allowbreak (i_1, \hash(\ctxt_{i_1}), \kssrandomizer_{i_1}', \prand_{i_1}'),\allowbreak \ldots,\allowbreak (i_m, \hash(\ctxt_{i_m}), \kssrandomizer_{i_m}', \prand_{i_m}'))$ from tokens it signed by matching on the hashed ciphertexts. Let $\xsenc$ be the encrypted key share for this user. We use the values $\kssrandomizer_{i_1}',\ldots,\kssrandomizer_{i_m}'$ and $\prand_{i_1}',\ldots,\prand_{i_m}'$ to find the plaintext of one of $\ctxt_{i_1},\ldots,\ctxt_{i_m}$ and then use this to compute the plaintext of $\ctxt$.

For $i \in i_1, \ldots, i_m$ test if:
\begin{equation*}
  \ctxt_i = \xsenc \cdot \penc(\kssrandomizer_i', \prand_i')
\end{equation*}
Let $(i^*, \kssrandomizer_{i^*}', \prand_{i^*}')$ be the tuple that satisfies this equation. By Lemma~\ref{lem:one-good-element} we know that there must exist an index $i^*$ such that:
\begin{align*}
  \ctxt_{i^*} &= \xsenc \cdot \penc(\kssrandomizer_{i^*}, \prand_{i^*}), \\
  \deltacommitment_{i^*} &= \extcommit{(\kssrandomizer_{i^*}, \prand_{i^*})}{\deltacommitmentrand^*},
\end{align*}
so this procedure does indeed find such a tuple $(i^*, \kssrandomizer_{i^*}', \prand_{i^*}')$. The plaintext of $\ctxt_{i^*}$ thus is $\sks{} + \kssrandomizer_{i^*}'$. Therefore, the plaintext of $\ctxt$ is $\sks{} + \kssrandomizer_{i^*}' + \kssrandomizerDiff_{i^*}$ because $\ctxt = \ctxt_{i^*} \cdot \penc(\kssrandomizerDiff_{i^*}, \prandReveal_{i^*})$. Therefore $\xsfresh = \sks{} + \kssrandomizer_{i^*}' + \kssrandomizerDiff_{i^*} \pmod{\grouporder}$.

Now that the challenger has derived $\xsfresh$ it continues with the TCP as normal. This shows how we can answer $\qRunTCP$ queries without needing to decrypt the ciphertexts.

Games 0 and 1 cannot be distinguished by the adversary. During $\code{ObtainKeyShareToken}$ queries, the \TS extracts attributes using rewinding, so this is not detected by the adversary. By Lemma~\ref{lem:one-good-element} the index $i^*$ exists with overwhelming probability, so the responses of the \TS are completely identical for the \qRunTCP queries made by the adversary.

Let $\sks{}^*$ be the \TS' key-share for the challenge user $\U^*$. In Game 2, we do not send $\xsenc = \penc(\sks{}^*)$ to the adversary when it makes $\registeruser$ queries for the challenge user $\U^*$. Instead, we send $\xsenc = \penc(0)$.
During $\qRunTCP$ queries, we first extract the plaintext of $\ctxt$ as above, and then add $\sks{}^*$.
The fact that the \TS does not need to decrypt $\ctxt$ to answer \qRunTCP queries together with Lemma~\ref{lem:can-replace-ctxt} shows that the adversary cannot detect this change.

In Game 3, we again change how we answer \qRunTCP queries for the challenge user $\U^*$.  In particular, we will answer this query without using the corresponding key-share $\sks{}^*$. Instead, we use the challenge oracle for the TCP security in the query phase. We proceed as before, to find the plaintext $\xsencdelta$ of $\ctxt$ when running \genshares. However, now we use the TCP challenge oracle to run the TCP by making a $\tcp(\delta \mod{\grouporder})$ query. The \name security challenger relays the messages to the adversary $\Adv$. After the selection phase, we advance the TCP security challenger to the challenge phase. Moreover, the challenge user $\U^*$ cannot obtain new tokens (because $\U^*$ is either blocked or rate-limited), and all old tokens have been revoked or used, so we no longer need access to the TCP oracle to answer queries. Finally, in the challenge phase, we relay the messages to the TCP challenger. Then, if adversary $\Adv$ wins Game 3, it breaks the TCP security of the underlying TCP. Since we assumed this cannot happen, the \name scheme is secure as well. The only difference between Game 2 and Game 3 is that we use the TCP oracle to run the TCP. However, since the TCP oracle uses to correct randomized key, this change is indistinguishable to the adversary.
\end{proof}

 \section{Privacy Proof}
\label{sec:privacy-proof}

In our privacy proof, we reduce an attacker against privacy to an attacker on the blind singing property of the blind signature scheme. We use the following version based on the one by Baldimtsi and Lysyanskaya~\cite{BaldimtsiL13}. 

\begin{game} The \emph{blind signature game} is between a challenger controlling an honest user $\U$ and an adversary controlling the issuer and the verifier.
\begin{description}
\item[Setup] At the start of the game, $\Adv$ publishes the public key $\bspk$ of the signer and outputs all other necessary public parameters.
\item[Challenge] At some point, the adversary outputs two tuples of attributes $m_0$ and $m_1$
  on which it wants to be challenged. The challenger picks a bit $\challengebit \randin \{0, 1\}$ and randomizers $\ksscommitmentrand_0, \ksscommitmentrand_1$, and computes two commitments
  \begin{equation*}
    \begin{split}
      \ksscommitment_0 &= \commit(m_0, \ksscommitmentrand_0) \\
      \ksscommitment_1 &= \commit(m_1, \ksscommitmentrand_1)
    \end{split}
  \end{equation*}
  and proceeds as follows. First, it runs
  the $\bssign(\bspk, \ksscommitment_b)$ protocol
  with private input $m_b, \ksscommitmentrand_b$ for the user
  to obtain a signature $(\bssignature_b, \bsfreshcommitment_b, \bsfreshcommitmentrand_b)$ on $m_b$.
  Next, it runs
  the $\bssign(\bspk, \ksscommitment_{1-b})$ protocol
  with private input $m_{1-b}, \ksscommitmentrand_{1-b}$ for the user
  to obtain a signature $(\bssignature_{1-b}, \bsfreshcommitment_{1-b}, \bsfreshcommitmentrand_{1-b})$ on $m_{1-b}$.
  If both protocols are successful, the challenger sends $(\bssignature_0,
  \bsfreshcommitment_0)$ and $(\bssignature_1, \bsfreshcommitment_1)$ to the
  adversary. Otherwise, it sends nothing.
\item[Guess] Finally, the adversary outputs a guess $\challengebit'$ of $\challengebit$. The adversary wins if $\challengebit' = \challengebit$.
\end{description}
\end{game}
If no adversary can win this game then the signer cannot recognize the
signatures it helped produce.

The computationally hiding commitments in the \obtainkstoken protocol ensure that the \TS learns nothing about the unrevealed ciphertexts $\ctxt$ and witness ciphertexts $\ctxt_i$ which it blindly signs. So, when the user runs $\genshares$ and thereby reveals these ciphertexts, they cannot be directly correlated to a corresponding run of \obtainkstoken. Moreover, the plaintext corresponding to the ciphertexts $\ctxt_i$ are fully randomized, so that these too do not reveal anything about the user with which the \TS is currently interacting.

The privacy proof folows a sequence of games. Throughout we use a guess $i_0, i_1$ for the challenge tokens. If this guess turns out to be incorrect when the adversary makes it challenge query, we abort and try again. We first use a sequence of games to show that we can remove identifying information from the $\obtainkstoken$ protocol.
\begin{itemize}
\item Game 0 is the Tandem privacy game, see page~\pageref{game:privacy}.
\item In Game 1, we extract the \TS key-shares $\sks{0}$ and $\sks{1}$ for users $\User_0$ and $\User_1$ from the \TS' proof of knowledge in step I of the \registeruser protocol, see Appendix~\ref{sec:appendix-register-proof}.
\item In Game 2, we forge the user's zero-knowledge proof of correct
  construction of $\ksscommitment$, the commitment to the epoch, the user's
  private key $\abcsku$, and the randomized ciphertexts, at the end of \obtainkstoken protocol.
\item In Game 3, we use the extractability of $\extcommit{\cdot}{\cdot}$ to forge the user's cut-and-choose proof in the \obtainkstoken protocol, and send random commitments $\ksscommitment_i, \deltacommitment_i$ for $i \not\in \tokendisclose$. However, we honestly construct $\ksscommitment$ as per the protocol.
\item In Game 4, for user $\User_i$ and the challenge token, we set
  $\ctxt = \penc(\sks{i} + \xsencdelta, \prand)$ and
  $\ctxt_i = \penc(\sks{i} + \kssrandomizer_i, \prand_i)$ for $i \not\in \tokendisclose$
  rather than using $\xsenc$. We commit to $\ctxt_i$ for $i \not\in \tokendisclose$ as usual.
  Lemma~\ref{lem:ctxt-pk-correct} shows that with high probability we still follow the protocol correctly.
\item In Game 5, we omit $\sks{i}$ altogether in the construction of the unrevealed $\ctxt_i$, that is, we set:
  \begin{equation}
    \begin{split}
      \ctxt_i &= \penc(\kssrandomizer_i, \prand_i)
    \end{split}
  \end{equation}
  for all $i \not\in \tokendisclose$. Similarly, we set $\ctxt = \penc(\xsencdelta, \prand)$, and use these values to construct $\ksscommitment$. When answering $\qRunTCP$ queries, user $i$ adds $\sks{i}$, which we extract during the \registeruser protocol, to its long-term secret-share $\xp$ to compensate for this change. The size of the randomizers $\kssrandomizer_i$ and $\xsencdelta$ ensures that the \TS cannot detect this change.
\item In Game 6, we replace the user's private key $\abcsku$ in the commitment $\ksscommitment$ by the value 0. Because of the hiding property of the Pedersen commitment $\ksscommitment$ (and the fact that we simulate the proof of correct generation of $\ksscommitment$) ensure the adversary cannot detect this change.
\item Finally, in Game 7, we simulate the correct opening of the commitment $\kssfreshcommitment$ in the first step of the \genshares protocol without knowing the randomizer $\kssfreshcommitmentrand$. Note that $\kssfreshcommitment$ itself still commits to the same values as in Game 6. Because we simulate the proof, the adversary does not notice this change.
\end{itemize}
\newcommand{\finalgame}{Game 7\xspace}

We are now in the situation where the tokens held by user 0 and 1 are exchangeable. We use this to show that no adversary can distinguish situations $\challengebit = 0$ and $\challengebit = 1$. We use a sequence of games to interpolate between the two situations. We start from \finalgame.
\begin{itemize}
\item In Game A, the challenger uses $\challengebit = 0$ but otherwise proceeds as in \finalgame.
\item In Game B, the challenger swaps the signatures of the challenge tokens of users $\User_0$ and $\User_1$. By the blind signature game, the adversary cannot detect this change.
\item In Game C, the challenger also swaps the users $\User_0$ and $\User_1$ in the challenge phase. As a result, it perfectly simulates $\challengebit = 1$ in \finalgame. The privacy property of the threshold cryptographic protocol (with colluding respectively honest SP) ensures that the adversary cannot detect this change.
\end{itemize} 
Since these steps are indistinguishable, no adversary can distinguish the situations $\challengebit = 0$ and $\challengebit = 1$ in \finalgame, and by indistinguishability again, neither can any adversary distinguish these two in the original privacy game.

\begin{proof}[Proof of Theorem~\ref{thm:tandem-privacy}]
Throughout this proof, we use a guess for the challenge tokens $i_0$ and $i_1$ of users $\User_0$ and $\User_1$ respectively. If this guess turns out to be wrong in the challenge step, we abort and try again.

In Game 1, the challenger extracts $\sks{0}$ and $\sks{1}$ for users $\User_0$ and $\User_1$. In particular, the challenger runs the knowledge extractor on the proof of knowledge of the \registeruser protocol, see Equation~\ref{eq:proof-range-xs}, for each of the users. Since the extractor uses rewinding, the adversary does not detect this.

In Game 2, the challenger forges the proof of knowledge of correctness of the commitment $\ksscommitment$ at the end of the \obtainkstoken protocol for the challenge tokens $i_0$ and $i_1$ of users $\User_0$, $\User_1$ respectively. By simulatability, the adversary cannot detect this change.

In Game 3, the challenger extracts the subset $\tokendisclose$ from the commitment $\discloseSetCommitment$ as soon as it receives it. For the two challenge tokens, the challenger (acting as the user) now proceeds as follows. It computes $\ksscommitment_i, \deltacommitment_i$ for $i \in \tokendisclose$ as per the protocol. However, for $i \not\in \tokendisclose$ it lets the unrevealed commitments $\ksscommitment_i$ and $\deltacommitment_i$ commit to random values. The proof of knowledge that $\ksscommitment$ commits to the same values as $\ksscommitment_i$ is already forged since a previous step. Because the commitment scheme is computationally hiding, the adversary cannot detect this change. Despite the changes we made, the final token that is stored by the user is exactly the same as in the original $\obtainkstoken$ protocol.

In Game 4 we compute the values $\ctxt$ and $\ctxt_i$ for user $\User_j$ and $i \not\in \tokendisclose$ as
\begin{equation*}
  \begin{split}
    \ctxt &= \penc(\sks{j} + \xsencdelta; \prand) \\
    \ctxt_i &= \penc(\sks{j} + \kssrandomizer_i; \prand_i)
  \end{split}
\end{equation*}
(recall, we extracted $\sks{j}$ in the \registeruser phase) instead of $\ctxt = \xsenc \cdot \penc(\xsencdelta; \prand)$ and $\ctxt_i = \xsenc \cdot \penc(\kssrandomizer_i; \prand_i)$. Lemma~\ref{lem:ctxt-pk-correct} shows that with overwhelming probability $\pdec(\xsenc)$ equals the value $\sks{j}$ we extracted in the \registeruser protocol, so this change does not modify the adversary's view.

In Game 5 the user omits $\sks{j}$ in the computation of $\ctxt$ and $\ctxt_i$, and instead sets:
  \begin{equation*}
    \begin{split}
      \ctxt &= \penc(\xsencdelta; \prand) \\
      \ctxt_i &= \penc(\kssrandomizer_i, \prand_i)
    \end{split}
  \end{equation*}
for the challenge tokens. To compensate for the fact that $\sks{j}$ is no longer included, the users adds $\sks{j}$ to $\xp$. As a result, the threshold cryptographic protocol still completes as before.

The size of the domain from which the $\kssrandomizer_i$s and $\xsencdelta$ are drawn, ensures that the adversary cannot detect this change when the users uses the token. More formally, the user sends $\ctxt$, $\ctxt_i$s,
$\kssrandomizerDiff_i$s, and $\prandReveal_i$s. However, the last two sets are redundant, they can be computed directly based on $\ctxt$ and the $\ctxt_i$s. As a result, we can focus on $\xsencdelta = \pdec(\ctxt)$ and $\kssrandomizer_i = \pdec(\ctxt_i)$. By the size of the domain of $\xsencdelta$ and the $\kssrandomizer_i$s and the size of $\sks{j}$ the tuples $(\xsencdelta, \kssrandomizer_1, \ldots, \kssrandomizer_{\tokensecpar})$ and $(\sks{j} + \xsencdelta, \sks{j} + \kssrandomizer_1, \ldots, \sks{j} + \kssrandomizer_{\tokensecpar})$ are statistically indistinguishable. As a result no adversary can distinguish Games 4 and 5.

In Game 6, the user omits their private key $\abcsku$ from the commitment $\ksscommitment$ by setting
\begin{equation*}
  \ksscommitment = \commit((0, \epoch, \hash(\ctxt), \hash(\ctxt_{i_1}), \ldots, \hash(\ctxt_{i_{\tokensecpar}})), \ksscommitmentrand).
\end{equation*}
The hiding property of the commitment scheme, the fact that we simulate the proof of correctness of $\ksscommitment$, and the fact that both challenge users are unrevoked, ensures that the adversary cannot detect this change.

In Game 7, we simulate the proof of knowledge in step~\ref{step:genshares:user-prove} of the \genshares protocol for the challenge users. Because the simulation is perfect, the adversary does not notice this change. Note that the commitment $\kssfreshcommitment$ still commits to the correct values.

We now show that no adversary can win \finalgame. We again use a sequence of games, but now interpolate between Game A, where the challenger uses $\challengebit = 0$ in \finalgame, and Game C, where the challenger uses $\challengebit = 1$ in \finalgame. We construct the intermediate Game B, where user $\User_0$ uses the token $i_1$ of user $\User_1$ and vice versa. Since the challenge tokens in \finalgame (and thus in Games A, B, and C) do not depend on the user, the TCPs complete correctly as in \finalgame.

We first show that Games A and B are indistinguishable. Suppose to the contrary that $\Adv$ can distinguish Games A and B. We show that we can use $\Adv$ to build an adversary $\AdvB$ that breaks the blindness property of the signature scheme. In the blind signature game, $\AdvB$ gets oracle access to two users that request a blind signature on one message each. Adversary $\AdvB$ acts as the challenger towards $\Adv$ in \finalgame. At the start of the game $\AdvB$ generates two messages, corresponding to key-share tokens, for which users $\User_0$ and $\User_1$ need a blind signature. It creates:
  \begin{align*}
    m_0 &= (0, \epoch,  \hash(\ctxt_1), \ldots, \hash(\ctxt_{\tokensecpar}) ) \\
    m_1 &= (0, \epoch, \hash(\ctxt_1'), \ldots, \hash(\ctxt_{\tokensecpar}') ),
  \end{align*}
where the values in the tuples are as in \finalgame. Adversary $\AdvB$ sends $m_0, m_1$ to its blind signature challenger.

During the $\obtainkstoken$ protocols for the challenge tokens, $\AdvB$ simulates its users as follows. When user $\User_0$ is running the blind signature protocol to create the challenge token $\ksstoken_0$, $\AdvB$ uses its challenger of the blind signature game to act as the user. When $\User_1$ runs the blind signature protocol to create token $\ksstoken_1$, $\AdvB$ again uses its blind signature game challenger. Finally, the blind signature challenger outputs two signatures $\tokensignature_0$ and $\tokensignature_1$ on messages $m_0$ and $m_1$ respectively. Adversary $\AdvB$ uses $\tokensignature_0$ to construct the key-share token for user $\User_0$, and uses $\tokensignature_1$ to construct the key-share token for user $\User_1$. Note that the blind signature challenger does not output the randomizers for the commitments $\kssfreshcommitment_0$ and $\kssfreshcommitment_1$, but this does not matter as we simulate the proof of knowledge that requires them.

If $\challengebit = 0$ in the blind-signature game, $\AdvB$s challenge user first blindly signed $m_0$, so $\AdvB$ perfectly simulates Game A. If $\challengebit = 1$ in the blind-signature game, then $\AdvB$ perfectly simulates Game B. Hence, any distinguisher between Games A and B breaks the blindness property of the blind signature scheme.

We now show that if the TCP scheme is private (with a colluding respectively honest SP), no adversary can distinguish between Games B and C. Suppose to the contrary that adversary $\Adv$ can distinguish Game B from Game C. We show that we can use $\Adv$ to build an adversary \AdvB that breaks the privacy property of the TCP scheme. Adversary \AdvB simulates users $\User_{0}$ and $\User_{1}$ towards $\Adv$.
The \registeruser and \obtainkstoken protocols do not involve the users' secrets, so \AdvB computes them directly. We now show how to answer $\qRunTCP$ queries.

\comment[WL]{Check MARKER and update token content}
Whenever $\Adv$ makes a $\qRunTCP(\User_i, j, \auxu)$ query, \AdvB makes a $\qRunTCP(i, \auxu)$ query of its challenger. Distinguisher \AdvB's challenger replies with the \TS' key-share $\xsfresh$. Let $\ksstoken = \ksstokencontent[l]$ be the $j$th token of user $\User_i$. Normally, this token dictates a \TS key-share unequal to $\xsfresh$, but we can use the random oracle and change the token to ensure that the \TS will recover $\xsfresh$.
To do so, the adversary sets $\xsencdelta' = \xsencdelta + (\xsfresh - (\xsencdelta \mod \grouporder))$ so that $\xsencdelta' \mod{\grouporder} = \xsfresh$, and then computes $\ctxt = \penc(\xsencdelta'; \prand)$. (Note that the size of $\xsencdelta'$ is correct with overwhelming probability).
Adversary $\AdvB$ updates the random oracle to ensure that $\hash(\ctxt') = \hash(\ctxt)$, i.e., the new ciphertexts has the same hash value as the original pairs. Next, \AdvB uses token $\ksstoken' = (\tokensignature, \kssfreshcommitment, \kssfreshcommitmentrand, \ctxt', \xsencdelta', (\ctxt_l', \prand_l', \kssrandomizer_l')_{l=1,\ldots,\tokensecpar})$ to run $\genshares$ with the \TS.

The changes to the random oracle ensure that this token is valid. Moreover, the changes to the random oracle succeed with high probability since at no point in the games does the \TS learn the inputs to these hash-functions. The \TS will derive the correct secret share $\xsfresh$ from $\ksstoken'$. So it runs the correct TCP protocol with the requested user which is simulated by \AdvB's challenger.

To answer $\Adv$'s challenge queries, \AdvB again uses his challenger and proceeds as above to answer the queries. If $\challengebit = 0$ in the TCP privacy game, then \AdvB's first run of $\qRunTCP$ 
uses user $\User_0$'s key, so \AdvB simulates Game B. Otherwise, if $\challengebit = 1$, then \AdvB simulates Game C. So, any adversary $\Adv$ that can distinguish Games B and C breaks the privacy property of the TCP scheme. This completes the privacy proof.
\end{proof} 
\end{document}